\documentclass[journal, onecolumn]{IEEEtran}
\IEEEoverridecommandlockouts

\usepackage{bm}
\usepackage{amsmath,amssymb,amsfonts,graphicx,amsthm,mathtools,nicefrac,mathrsfs}
\usepackage{graphicx, subfigure}
\usepackage{float}
\usepackage{times}
\usepackage{color}
\usepackage[table]{xcolor}
\usepackage{booktabs}
\usepackage{multirow}
\usepackage{lscape}
\usepackage{ulem}
\normalem

\DeclareMathOperator{\E}{\mathbb{E}}
\DeclareMathOperator{\dist}{dist}

\DeclareMathOperator{\cone}{cone}

\renewcommand{\Pr}[2][]{\mathbb{P}_{#1} \left\{ #2 \rule{0mm}{3mm}\right\}}
\newcommand{\ip}[2]{\left\langle#1,#2\right\rangle}

\def \CC {\mathcal{C}}

\def \NN {\mathcal{N}}

\def \SS {\mathcal{S}}
\def \TT {\mathcal{T}}

\def \OO {\mathcal{O}}

\def \B {\mathbb{B}}
\def \R {\mathbb{R}}
\def \S {\mathbb{S}}

\def \va {\bm{a}}
\def \vb {\bm{b}}
\def \vd {\bm{d}}

\def \vg {\bm{g}}
\def \vh {\bm{h}}

\def \vr {\bm{r}}
\def \vs {\bm{s}}
\def \vu {\bm{u}}
\def \vv {\bm{v}}
\def \vw {\bm{w}}
\def \vx {\bm{x}}
\def \vy {\bm{y}}
\def \vz {\bm{z}}

\def \vzero {\bm{0}}

\def \mA {\bm{A}}

\def \mI {\bm{I}}

\def \mG {\bm{G}}

\def \mX {\bm{X}}
\def \mU {\bm{U}}
\def \mV {\bm{V}}
\def \mW {\bm{W}}
\def \mZ {\bm{Z}}
\def \mM {\bm{M}}
\def \mSigma {\bm{\Sigma}}
\def \mPhi {\bm{\Phi}}

\newtheorem{theorem}{Theorem}
\newtheorem{lemma}{Lemma}

\newtheorem{fact}{Fact}

\theoremstyle{definition}

\theoremstyle{remark}
\newtheorem{remark}{Remark}

\makeatletter
\newtheorem*{rep@theorem}{\rep@title}
\newcommand{\newreptheorem}[2]{%
	\newenvironment{rep#1}[1]{%
		\def\rep@title{#2 \ref{##1}}%
		\begin{rep@theorem}}%
		{\end{rep@theorem}}}
\makeatother

\newreptheorem{theorem}{Theorem}
\newreptheorem{lemma}{Lemma}
\newreptheorem{corollary}{Corollary}
\newreptheorem{proposition}{Proposition}
\begin{document}
	
\title{Phase Transitions in Recovery of Structured Signals from Corrupted Measurements}

\author{
	
	\IEEEauthorblockN{Zhongxing~Sun\IEEEauthorrefmark{1}, Wei~Cui\IEEEauthorrefmark{1}, and Yulong~Liu\IEEEauthorrefmark{2}}
	
	\IEEEauthorblockA{\IEEEauthorrefmark{1}School of Information and Electronics, Beijing Institute of Technology, Beijing 100081, China}
	
	\IEEEauthorblockA{\IEEEauthorrefmark{2}School of Physics, Beijing Institute of Technology, Beijing 100081, China}
	
     \thanks{Corresponding author: Yulong Liu. Email: yulongliu@bit.edu.cn.}
}
\maketitle

\begin{abstract}
   This paper is concerned with the problem of recovering a structured signal from a relatively small number of corrupted random measurements. Sharp phase transitions have been numerically observed in practice when different convex programming procedures are used to solve this problem. This paper is devoted to presenting theoretical explanations for these phenomenons by employing some basic tools from Gaussian process theory. Specifically, we identify the precise locations of the phase transitions for both constrained and penalized recovery procedures. Our theoretical results show that these phase transitions are determined by some geometric measures of structure, e.g., the spherical Gaussian width of a tangent cone and  the Gaussian (squared) distance to a scaled subdifferential. By utilizing the established phase transition theory, we further investigate the relationship between these two kinds of recovery procedures, which also reveals an optimal strategy (in the sense of Lagrange theory) for choosing the tradeoff parameter in the penalized recovery procedure. Numerical experiments are provided to verify our theoretical results.
\end{abstract}

\begin{IEEEkeywords}
	Phase transition, corrupted sensing, signal separation, signal demixing, compressed sensing, structured signals, corruption, Gaussian process.
\end{IEEEkeywords}

\section{Introduction}
This paper studies the problem of recovering a structured signal from a relatively small number of corrupted measurements
\begin{align}\label{model: observe}
\vy = \bm{\Phi}\vx^{\star} +\vv^{\star} + \vz,
\end{align}
where $\bm{\Phi}\in\R^{m\times n}$ is the sensing matrix, $\vx^{\star}\in\R^n$ denotes the structured signal to be estimated, $\vv^{\star}\in\R^m$ stands for the structured corruption, and $\vz \in \R^m$ represents the unstructured observation noise. The objective is to estimate $\vx^{\star}$ and $\vv^{\star}$ from given knowledge of $\vy$ and $\bm{\Phi}$. If $\vv^{\star}$ contains some useful information, then this model \eqref{model: observe} can be regarded as the signal separation (or demixing) problem. In particular, if there is no corruption $(\vv^{\star} = \vzero)$, then the model \eqref{model: observe} reduces to the standard compressed sensing problem.

This problem arises in many practical applications of interest, such as face recognition \cite{wright2009robust}, subspace clustering \cite{elhamifar2009sparse}, sensor network \cite{haupt2008compressed}, latent variable modeling \cite{chandrasekaran2011rank}, principle component analysis \cite{candes2011robust}, source separation \cite{elad2005simultaneous}, and so on. The theoretical aspects of this problem have also been studied under different scenarios in the literature, important examples include sparse signal recovery from sparse corruption \cite{laska2009exact,wright2010dense,li2013compressed,nguyen2013exact,nguyen2013robust,kuppinger2012uncertainty,studer2012recovery,pope2013probabilistic,studer2014stable,su2016data,adcock2018compressed,adcock2019correcting}, low-rank matrix recovery from sparse corruption \cite{chandrasekaran2011rank, candes2011robust, xu2012robust, xu2013outlier,wright2013compressive,chen2013low}, and structured signal recovery from structured corruption \cite{mccoy2014sharp,foygel2014corrupted,Zhang2017On,Chen2017Corrupted,Jinchi2018Stable,Sun2019Recovery,Sun2020Quantized}.

Since this problem is ill-posed in general, tractable recovery is possible when both signal and corruption are suitably structured. Typical examples of structured signal (or corruption) include sparse
vectors and low-rank matrices. Let $f(\cdot)$ and $g(\cdot)$ be suitable proper convex functions which promote structures for signal and corruption respectively. There are three popular convex optimization approaches to reconstruct signal and corruption when different kinds of prior information are available. Specifically, when we have access to the prior knowledge of either signal $f(\vx^{\star})$ or corruption $g(\vv^{\star})$ and the noise level $\delta$ (in terms of the $\ell_2$ norm), it is natural to consider the following constrained convex recovery procedures
\begin{align}\label{Constrained_Optimization_I}
\min_{\vx, \vv} ~f(\vx),\quad\text{s.t.~}&g(\vv) \leq g(\vv^{\star}), ~~\|\vy-\bm{\Phi}\vx-\vv\|_2\leq \delta
\end{align}
and
\begin{align}\label{Constrained_Optimization_II}
\min_{\vx, \vv} ~g(\vv),\quad\text{s.t.~}&f(\vx) \leq f(\vx^{\star}), ~~\|\vy-\bm{\Phi}\vx-\vv\|_2\leq \delta.
\end{align}
When only the noise level $\delta$ is known, it is convenient to employ the partially penalized convex recovery procedure
\begin{align}\label{Partially_Penalized_Optimization}
\min_{\vx,\vv}~f(\vx)+\lambda\cdot g(\vv),\quad\text{s.t.}\quad\|\vy-\bm{\Phi}\vx-\vv\|_2\leq \delta,
\end{align}
where $\lambda > 0$ is a tradeoff parameter. When there is no prior knowledge available, it is practical to use the fully penalized convex recovery procedure
\begin{align}\label{Fully Penalized Optimization}
\min_{\vx,\vv}\frac{1}{2}\|\vy-\bm{\Phi}\vx-\vv\|_2^2+\tau_1\cdot f(\vx)+\tau_2\cdot g(\vv),
\end{align}
where $\tau_1,\tau_2 > 0$ are some tradeoff parameters.

A large number of numerical results in the literature have suggested that phase transitions emerge in all above three recovery procedures (under random measurements), see e.g., \cite{nguyen2013exact,nguyen2013robust,studer2012recovery,studer2014stable,su2016data,adcock2018compressed,mccoy2014sharp,foygel2014corrupted,Zhang2017On,Chen2017Corrupted,Jinchi2018Stable}. Concretely, for a specific recovery procedure, when the number of the measurements exceeds a threshold, this procedure can faithfully reconstruct both signal and corruption with high probability, when the number of the measurements is below the threshold, this procedure fails with high probability. A fundamental question then is:
\begin{center}
\emph{Q1: How to determine the locations of these phase transitions accurately?}
\end{center}
In addition, in partially and fully penalized recovery procedures, the optimization problems also rely on some tradeoff parameters. Another important question is:
\begin{center}
  \emph{Q2: How to choose these tradeoff parameters to achieve the best possible performance?}
\end{center}

\subsection{Model Assumptions and Contributions}
This paper tries to provide answers for the above two questions in the absence of unstructured noise ($\vz=0$). In this scenario, the observation model becomes
\begin{align}\label{corrupted sensing}
\vy=\mPhi\vx^{\star}+\sqrt{m}\vv^{\star}.
\end{align}
Here we assume that $\mPhi$ is a Gaussian sensing matrix with i.i.d. entries ($\mPhi_{ij}\sim \NN(0,1)$), and the factor $\sqrt{m}$ in \eqref{corrupted sensing} makes the columns of $\mPhi$ and $\sqrt{m}\mI_m$ have the same scale, which helps our theoretical results to be more interpretable. Accordingly, the constrained convex recovery procedures become
\begin{align}\label{constrained_procedure1}
\min_{\vx, \vv} ~f(\vx),\quad\text{s.t.~}&\vy=\mPhi\vx+\sqrt{m}\vv,~~g(\vv)\leq g(\vv^{\star})
\end{align}
and
\begin{align}\label{constrained_procedure2}
\min_{\vx, \vv} ~g(\vv),\quad\text{s.t.~}&\vy=\mPhi\vx+\sqrt{m}\vv,~~f(\vx)\leq f(\vx^{\star}),
\end{align}
and partially and fully penalized recovery procedures reduce to
\begin{align}\label{penilized_procedure}
\min_{\vx, \vv} ~f(\vx)+\lambda\cdot g(\vv),\quad\text{s.t.~}& \vy=\mPhi\vx+\sqrt{m}\vv,
\end{align}
where $\lambda > 0$ is a tradeoff parameter. For each recovery procedure, we declare it \emph{succeeds} when its unique solution $(\hat{\vx},\hat{\vv})$ satisfies $\hat{\vx}=\vx^{\star}$ and $\hat{\vv}=\vv^{\star}$, otherwise, it \emph{fails}.

Under the above model settings, the contribution of this paper is twofold:
\begin{itemize}
\item First, we develop a new analytical framework which allows us to establish the phase transition theory of both constrained and penalized recovery procedures in a unified way. Specifically, for constrained recovery procedures \eqref{constrained_procedure1} and \eqref{constrained_procedure2}, our analysis shows that their phase transitions locate at
    $$\mathscr{C}_p = \omega^2(\TT_f(\vx^{\star})\cap\S^{n-1}) + \omega^2(\TT_g(\vv^{\star})\cap\S^{m-1}), $$
    where $\TT_f(\vx^{\star})$ (or $\TT_g(\vv^{\star})$) is the tangent cone induced by $f$ (or $g$) at the true signal $\vx^{\star}$ (or corruption $\vv^{\star}$), $\omega(\TT_f(\vx^{\star})\cap\S^{n-1})$ (or $\omega(\TT_g(\vv^{\star})\cap\S^{m-1})$) is the spherical Gaussian width of this cone, defined in Section \ref{Preliminaries}. For the penalized recovery procedure \eqref{penilized_procedure}, our results indicate that its critical point locates at
    $$ \mathscr{C}_p (\lambda) = \min_{\alpha\leq t\leq \beta}   2 \cdot \zeta\left(\frac{\sqrt{m}}{\lambda t}\partial f(\vx^\star)\right)+ \eta^2\left(\frac{1}{t}\partial g(\vv^\star)\cap \S^{m-1}\right) -1,$$
    where $\partial f(\vx^{\star})$ (or $\partial g(\vv^{\star})$) is the subdifferential of $f$ (or $g$) at the true signal $\vx^{\star}$ (or corruption $\vv^{\star}$),  $\zeta(\SS)$ and $\eta^2(\SS)$ denote the Gaussian distance and the Gaussian squared distance to a set $\SS$ respectively, also defined in Section \ref{Preliminaries},  and $\alpha=\min_{\vb\in\partial g(\vv^\star)}\|\vb\|_2$ and $\beta=\max_{\vb\in\partial g(\vv^\star)}\|\vb\|_2$.

\item Second, we investigate the relationship between these two kinds of recovery procedures by utilizing the established critical points $\mathscr{C}_p$ and $\mathscr{C}_p (\lambda)$, which also reveals an optimal parameter selection strategy for $\lambda$ (in the sense of Lagrange theory):
    $$ \lambda^\star=\arg\min_{\lambda>0}\mathscr{C}_p(\lambda).$$
    More precisely, under mild conditions, if the penalized procedure \eqref{penilized_procedure} is likely to succeed, then the constrained procedures \eqref{constrained_procedure1} and \eqref{constrained_procedure2} succeed with high probability, namely, if $m\geq \mathscr{C}_p (\lambda)$, then we have $m \geq \mathscr{C}_p-1$.  On the contrary, if the constrained procedures \eqref{constrained_procedure1} and \eqref{constrained_procedure2} are likely to succeed, then we can choose the tradeoff parameter $\lambda$ as $\lambda^{\star}$ such that the penalized procedure \eqref{penilized_procedure} succeeds with high probability, namely, if $m\geq \mathscr{C}_p$, then we have $m \geq \mathscr{C}_p(\lambda^{\star})-1$.
\end{itemize}

\subsection{Related Works}
	During the past few decades, there have been abundant works investigating the phase transition phenomenons in random convex optimization problems. Most of these works fit in with the framework of compressed sensing ($\vv^{\star} = \vzero$). In this paper, we focus on the scenario in which the random measurements are contaminated by some structured corruption. We will review the works related to these two aspects in details.

\subsubsection{Related Works in Compressed Sensing}
The works in the context of compressed sensing can be roughly divided into four groups according to their analytical tools.

The early works study phase transitions in the context of sparse signal recovery via \emph{polytope angle calculations} \cite{donoho2005neighborly}. Under Gaussian measurements, Donoho \cite{donoho2006high} analyzes the $\ell_1$-minimization method in the asymptotic regime and establishes an empirically tight lower bound on the number of measurements required for successful recovery. In contrast to \cite{donoho2006high}, Donoho and Tanner \cite{donoho2009counting} also prove the existence of sharp phase transitions in the asymptotic regime when using $\ell_1$-minimization to reconstruct sparse signals from random projections. These results are later extended to other related $\ell_1$-minimization problems. For instance, Donoho and Tanner \cite{donoho2005neighborliness,donoho2010counting2} identify a precise phase transition of the sparse signal recovery problem with an additional nonnegative constraint; Khajehnejad \emph{et al.} \cite{khajehnejad2011analyzing} introduce a nonuniform sparse model and analyze the performance of weighted $\ell_1$-minimization over that model; Xu and Hassibi \cite{xu2011precise} present sharp performance bounds on the number of measurements required for recovering approximately sparse signals from noisy measurements via $\ell_1$-minimization.


 In \cite[Fact 10.1]{amelunxen2014living}, Amelunxen \emph{et al.} explore the relationship between polytope angle theory and \emph{conical integral geometry} in details. Their results have shown that conical integral geometry could go beyond many inherent limitations of polytope angle theory, such as dealing with the nuclear norm regularizer in low-rank matrix recovery problems, non-asymptotic analysis, and establishing phase transition from absolute success to absolute failure. In summary, \cite{amelunxen2014living} provides the first comprehensive analysis that explains phase transition phenomenons in some random convex optimization problems. Other authors further use conical integral geometry to analyze convex optimization problems with random data. For examples, Amelunxen and B{\"u}rgisser \cite{amelunxen2015intrinsic,amelunxen2015probabilistic} apply conical integral geometry to study conic optimization problems; Goldstein \emph{et al.} \cite{goldstein2017gaussian} show that the sequence of conic intrinsic volumes can be approximated by a suitable Gaussian distribution in the high-dimensional limit, which provides more precise probabilities for successful and failed recovery.

	Whereas the above works involve combinatorial geometry, there are some others using  \emph{minimax decision theory} to analyze the phase transition problems. Several papers \cite{donoho2009message,bayati2011dynamics,bayati2011lasso} have observed a close agreement between the asymptotic mean square error (MSE) and the location of phase transition in the linear inverse problems. Donoho \emph{et al.} \cite{donoho2011noise,donoho2013phase} then have shown that the minimax MSE for denoising empirically predicts the locations of phase transitions in both sparse and low-rank recovery problems. Recently, Oymak and Hassibi \cite{oymak2016sharp} prove that the minimax MSE risk in structured signal denoising problems is almost the same as the statistical dimension. Combining with the results in \cite{amelunxen2014living}, their results provide a theoretical explanation for using minimax risk to describe the location of phase transition in regularized linear inverse problems.

	The last line of works study the compressed sensing problem by utilizing some tools from \emph{Gaussian process theory}. The key technique is a sharp comparison inequality for Gaussian processes due to Gordon \cite{gordon1985some}. Rudelson and Vershynin \cite{rudelson2008sparse} first use Gordon's inequality to study the $\ell_1$-minimization problem. Stojnin \cite{stojnic2009various} refines this method and obtains an empirically sharp success recovery condition under Gaussian measurements. Stojnin's calculation is then extended to more general settings. Oymak and Hassibi \cite{oymak2010new} use it to study the nuclear norm minimization problem. Chandrasekaran \emph{et al.} \cite{chandrasekaran2012convex} consider a more general case in which the regularizer can be any convex function. Stojnic \cite{stojnic2013framework,stojnic2013regularly} has also investigated the error behaviors of $\ell_1$-minimization and its variants in random optimization problems, these works have been extended by a series of researches \cite{oymak13The,thrampoulidis2014gaussian} by Oymak, Thrampoulidis and Hassibi. Although the mentioned works provide detailed discussions for using Gaussian process theory to analyze random convex optimization problems, few of them consider the failure case for recovery. In a recent work \cite{oymak2018universality}, Oymak and Tropp demonstrate a universality property for randomized dimension reduction, which also proves the phase transition in the recovery of structured signals from a large class of measurement models.

\subsubsection{Related Works in Corrupted Sensing}
	There are several works in the literature studying the phase transition theory of corrupted sensing problems. For instance, in \cite{amelunxen2014living}, Amelunxen \emph{et al.} study the demixing problem $\vz_0=\vx_0+\mU\vy_0$, where $\mU\in\R^{n\times n}$ is a random orthogonal matrix. They establish sharp phase transition results when the constrained convex program $\min_{\vx, \vy} ~f(\vx),~\text{s.t.~}\vz_0=\vx+\mU\vy,~g(\vy)\leq g(\vy_0)$ is used to solve this demixing problem.  In \cite{oymak2018universality}, Oymak and Tropp consider a more general demixing model $\vy=\mPhi_0\vx_0+\mPhi_1\vx_1$, where $\mPhi_0,\mPhi_1\in\R^{m\times n}$ are two random transformation matrices drawing from a wide class of distributions. They attempt to reconstruct the original signal pair by solving $\min_{\vz_0,~\vz_1}\max\{f_0(\vz_0), f_1(\vz_1)\},~\text{s.t.~}\vy=\mPhi_0\vz_0+\mPhi_1\vz_1$ and establish the related phase transition theory. More related to this work, Foygel and Mackey \cite{foygel2014corrupted} consider the corrupted sensing problem \eqref{model: observe} and analyze both the constrained recovery procedures \eqref{Constrained_Optimization_I} or \eqref{Constrained_Optimization_II}  and the partially penalized recovery procedure \eqref{Partially_Penalized_Optimization}. In each case, they provide sufficient conditions for stable signal recovery from structured corruption with added unstructured noise under Gaussian measurements. Very recently, Chen and Liu \cite{Jinchi2018Stable} develop an extended matrix deviation inequality and use it to analyze all three kinds of convex procedures (\eqref{Constrained_Optimization_I}, \eqref{Constrained_Optimization_II}, \eqref{Partially_Penalized_Optimization}, and \eqref{Fully Penalized Optimization}) in a unified way under sub-Gaussian measurements. In terms of failure case of corrupted sensing, Zhang, Liu, and Lei \cite{Zhang2017On} establish a sharp threshold below which the constrained convex procedures \eqref{constrained_procedure1} and  \eqref{constrained_procedure2} fail to recover both signal and corruption under Gaussian measurements. Together with the work in \cite{foygel2014corrupted}, their results provide a theoretical explanation for the phase transition when the constrained procedures are used to solve corrupted sensing problems.

\subsection{Organization}

The remainder of the paper is organized as follows. We start with reviewing some preliminaries that are necessary for our subsequent analysis in Section \ref{Preliminaries}. Section \ref{Main results} is devoted to presenting the main theoretical results of this paper. In Section \ref{simulations}, we present a series of numerical experiments to verify our theoretical results. We conclude the paper in Section \ref{Conclusion}. All proofs of our main results are included in Appendixes.

\section{Preliminaries}\label{Preliminaries}
In this section, we introduce some notations and facts that underlie our analysis. Throughout the paper, $\S^{n-1}$ and $\B_2^n$  represent the unit sphere and unit ball in $\R^n$ under the $\ell_2$ norm, respectively. 

\subsection{Convex Geometry}
\subsubsection{Subdifferential}
The \emph{subdifferential} of a convex function $f$ at $\vx$ is the set of vectors
\begin{equation*}\label{Definitionofsubdiff}
 \partial f(\vx) = \{\vu\in\R^n: f(\vx+\vd)-f(\vx)\ge\ip{\vu}{\vd}~\text{for all~} \vd\in\R^n\}.
\end{equation*}
If $f$ is convex and $\vx \in \textrm{intdom}(f)$, then $\partial f(\vx)$  is a nonempty, compact, convex set. For any number $t \geq 0$, we denote the scaled subdifferential as $t \cdot \partial f(\vx) = \{t \cdot \vu: \vu \in \partial f(\vx) \}$.

\subsubsection{Cone and Polar Cone}
A subset $\mathcal{C} \subset\R^n $ is called a \emph{cone} if for every $\vx \in \mathcal{C}$ and $t \geq 0$, we have $t\cdot \vx \in \mathcal{C}$. For a cone $\mathcal{C}\subset\R^n$, the \emph{polar cone} of $\mathcal{C}$ is defined as
\begin{equation*}\label{Definition_polar}
\mathcal{C}^\circ  =  \{\vu\in\R^n:\ip{\vu}{\vx}\leq 0~\textrm{for all}~\vx\in\mathcal{C}\}.
\end{equation*}
The polar cone $\mathcal{C}^\circ$ is always closed and convex. A subset $\SS\subset\S^{n-1}$ is called \emph{spherically convex} if $\SS$ is the intersection of a convex cone with the unit sphere.

\subsubsection{Tangent Cone and Normal Cone}
The \emph{tangent cone} of a convex function $f$ at $\vx$ is defined as the set of descent directions of $f$ at $\vx$
\begin{equation*}\label{DefinitionofTangentCone}
\TT_f(\vx) = \{\vu\in\R^n: ~f(\vx+t\cdot\vu)\leq f(\vx)~\textrm{for some}~t > 0\}.
\end{equation*}
The tangent cone of a proper convex function is always convex, but they may not be closed.

The \emph{normal cone} of a convex function $f$ at $\vx$ is the polar of the tangent cone
	\begin{equation*}\label{DefinitionofNormalCone}
	\NN_f(\vx) =\TT_f^\circ(\vx)= \{\vu\in\R^n: \ip{\vu}{\vd}\leq 0~\text{for all~} \vd\in\TT_f(\vx)\}.
	\end{equation*}
	Suppose that $0\notin\partial f(\vx)$, the normal cone can also be written as the cone hull of the subdifferential \cite[Theorem 23.7]{rockafellar1970convex}
	\begin{equation*}
	\NN_f(\vx) = \cone\{\partial f(\vx)\}=\{\vu\in\R^n: ~\vu\in t\cdot\partial f(\vx)~\textrm{for some}~t \geq 0\}.
	\end{equation*}

\subsection{Geometric Measures}

\subsubsection{Gaussian Width}
For any $\SS \subset \R^{n}$, a popular way to quantify the ``size'' of $\SS$ is through its \emph{Gaussian width}
\begin{equation*}\label{Definition_Gaussian_width}
\omega(\SS) := \E \sup_{\vx \in \SS} \langle \vg, \vx \rangle, ~~ \textrm{where} ~~\vg\sim\NN(0,\mI_n).
\end{equation*}

\subsubsection{Gaussian Distance and Gaussian Squared Distance}
Recall that the Euclidean distance to a set $\SS\subset\R^n$ is defined as
\begin{equation*}\label{Definition_dist}
\dist(\vx,\SS) := \inf_{\vy \in \SS} \| \vx-\vy\|_2.
\end{equation*}
We define the \emph{Gaussian distance} to a set $\SS \subset \R^{n}$ as
\begin{equation*}\label{Definition_Gaussian_distance}
\zeta(\SS) := \E \dist(\vg,\SS) = \E \inf_{\vy \in \SS} \| \vg-\vy\|_2, ~~ \textrm{where} ~~\vg\sim\NN(0,\mI_n).
\end{equation*}
Similarly, the \emph{Gaussian squared distance} to a set $\SS \subset \R^{n}$ is defined as
\begin{equation*}\label{Definition_Gaussian_distance_Squared}
\eta^2(\SS) := \E \dist^2(\vg,\SS) = \E \inf_{\vy \in \SS} \| \vg-\vy\|_2^2, ~~ \textrm{where} ~~\vg\sim\NN(0,\mI_n).
\end{equation*}
These two quantities are closely related \footnote{The lower and upper bounds follow from Fact \ref{Variance of Lipschitz} (in Appendix \ref{auxiliaryresults}) and Jensen's inequality, respectively.}
\begin{equation}\label{Relationship}
  \sqrt{\eta^2(\SS)-1} \leq \zeta(\SS) \leq \sqrt{\eta^2(\SS)}.
\end{equation}

\subsection{Tools from Gaussian Analysis}
Our analysis makes heavy use of two well-known results in Gaussian analysis. The first one is a comparison principle for Gaussian processes due to Gordon \cite[Theorem 1.1]{gordon1985some}. This result provides a convenient way to bound the probability of an event from below by that of another one. It is worth noting that the original lemma can be naturally extended from discrete index sets to compact index sets, see e.g., \cite[Lemma C.1]{oymak13The}.
\begin{fact}[Gordon's Lemma] \cite[Theorem 1.1]{gordon1985some}
	\label{Grodon lem}
	Let $\{X_{ij}\}$, $\{Y_{ij}\}$, $1\leq i\leq n,~1\leq j\leq m$, be two centered Gaussian processes. If $X_{ij}$ and $Y_{ij}$ satisfy the following inequalities:
	\begin{align*}
	&\textrm{1.}~\E[X_{ij}^2]=\E[Y_{ij}^2];\\
	&\textrm{2.}~\E[X_{ij}X_{ik}]\leq\E[Y_{ij}Y_{ik}];\\
	&\textrm{3.}~\E[X_{ij}X_{lk}]\geq\E[Y_{ij}Y_{lk}]~\textrm{for all}~i\neq l;
	\end{align*}
	Then, we have
	\begin{align*}
	\Pr{\cap_i\cup_j [X_{ij}\geq\tau_{ij}]}\geq\Pr{\cap_i\cup_j [Y_{ij}\geq\tau_{ij}]}
	\end{align*}
	for all choices of $\tau_{ij}\in\R$.
\end{fact}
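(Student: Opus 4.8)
The plan is to prove Gordon's Lemma by the classical Gaussian interpolation (smart-path) argument, converting the probability comparison into a smooth functional inequality and then differentiating along an interpolation between the two processes. \textbf{Step 1 (reduction to a smooth functional).} The indicator of $\cap_i\cup_j[X_{ij}\ge\tau_{ij}]$ equals $\prod_i\bigl(1-\prod_j\mathbf{1}[x_{ij}<\tau_{ij}]\bigr)$ evaluated at $x_{ij}=X_{ij}$. For $\varepsilon>0$ I replace each $\mathbf{1}[\,\cdot<\tau_{ij}]$ by a smooth non-increasing $h_{ij}\colon\R\to[0,1]$ that equals $1$ on $(-\infty,\tau_{ij}-\varepsilon]$ and $0$ on $[\tau_{ij},\infty)$, and set $g_i(\vx)=1-\prod_j h_{ij}(x_{ij})$, $G(\vx)=\prod_i g_i(\vx)$. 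Then $G$ is bounded and smooth with bounded derivatives, $0\le g_i\le1$, $h_{ij}'\le0$, and $h_{ij}\to\mathbf{1}[\,\cdot<\tau_{ij}]$ pointwise as $\varepsilon\downarrow0$, so $\E[G(\mX)]\to\Pr{\cap_i\cup_j[X_{ij}\ge\tau_{ij}]}$ by bounded convergence (zero-variance coordinates make the relevant events deterministic and, by Hypothesis 1, identical for $\mX$ and $\mY$, so they are harmless). It therefore suffices to prove $\E[G(\mX)]\ge\E[G(\mY)]$ for each fixed $\varepsilon$.

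\textbf{Step 2 (interpolation and Gaussian integration by parts).} Take $\mX$ and $\mY$ independent, set $\mZ(t)=\sqrt{t}\,\mX+\sqrt{1-t}\,\mY$ and $\phi(t)=\E[G(\mZ(t))]$ for $t\in[0,1]$; the goal becomes $\phi(1)\ge\phi(0)$, for which $\phi'\ge0$ suffices. Differentiating under the expectation (legitimate since the partials of $G$ are bounded) and applying Gaussian integration by parts, $\E[X_{ij}\,\partial_{ij}G(\mZ(t))]=\sum_{k,l}\E[X_{ij}\mZ_{kl}(t)]\,\E[\partial_{ij}\partial_{kl}G(\mZ(t))]$ (and similarly for $\mY$), together with $\E[X_{ij}\mZ_{kl}(t)]=\sqrt t\,\E[X_{ij}X_{kl}]$ and $\E[Y_{ij}\mZ_{kl}(t)]=\sqrt{1-t}\,\E[Y_{ij}Y_{kl}]$, one arrives at the standard formula
\[
\phi'(t)=\frac12\sum_{i,j}\sum_{k,l}\bigl(\E[X_{ij}X_{kl}]-\E[Y_{ij}Y_{kl}]\bigr)\;\E\!\left[\frac{\partial^2 G}{\partial x_{ij}\,\partial x_{kl}}\bigl(\mZ(t)\bigr)\right].
\]
It then remains to show each summand is nonnegative.

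\textbf{Step 3 (sign bookkeeping).} Exploiting the block-product form $G=\prod_i g_i$ with $g_i$ depending only on the $i$-th block, I split into three cases matching the hypotheses. (i) $(i,j)=(k,l)$: $\E[X_{ij}^2]-\E[Y_{ij}^2]=0$ by Hypothesis 1, so the term vanishes. (ii) $i=k$, $j\ne l$: a direct computation gives $\partial^2 G/\partial x_{ij}\partial x_{il}=-\,h_{ij}'h_{il}'\bigl(\prod_{j'\ne j,l}h_{ij'}\bigr)\bigl(\prod_{i'\ne i}g_{i'}\bigr)\le0$, since $h_{ij}'h_{il}'\ge0$ and all remaining factors are nonnegative; as $\E[X_{ij}X_{il}]-\E[Y_{ij}Y_{il}]\le0$ by Hypothesis 2, the product is $\ge0$. (iii) $i\ne k$: $\partial^2 G/\partial x_{ij}\partial x_{kl}=(\partial_{x_{ij}}g_i)(\partial_{x_{kl}}g_k)\prod_{i'\ne i,k}g_{i'}\ge0$, since $\partial_{x_{ij}}g_i=-h_{ij}'\prod_{j'\ne j}h_{ij'}\ge0$ and likewise for the $k$-block; as $\E[X_{ij}X_{kl}]-\E[Y_{ij}Y_{kl}]\ge0$ by Hypothesis 3, the product is $\ge0$. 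Summing, $\phi'(t)\ge0$, hence $\E[G(\mX)]=\phi(1)\ge\phi(0)=\E[G(\mY)]$; letting $\varepsilon\downarrow0$ yields the stated inequality. The extension to compact index sets noted after the statement then follows by approximating the inner and outer extrema over finite nets and passing to the limit via almost-sure sample-path continuity.

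\textbf{Where the difficulty lies.} The sign analysis in Step 3 is essentially forced once the product structure of $G$ is written out, so the conceptual content is light; the real care is analytic. I expect the main obstacle to be (a) choosing the smoothing family so that $G$ has bounded partials of all needed orders, which is what legitimizes differentiating under the expectation in Step 2 and the Gaussian integration-by-parts identity, and (b) justifying the $\varepsilon\downarrow0$ limit cleanly, including the degenerate zero-variance coordinates and any convention permitting $\tau_{ij}=\pm\infty$. These are routine but fiddly; the comparison inequality itself is not where the work lies.
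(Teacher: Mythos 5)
Your interpolation argument is correct: the smoothing $G=\prod_i\bigl(1-\prod_j h_{ij}\bigr)$, the smart-path derivative formula, and the three-case sign bookkeeping (diagonal terms vanish by Hypothesis 1, same-block second partials are $\le 0$ pairing with Hypothesis 2, cross-block second partials are $\ge 0$ pairing with Hypothesis 3) reproduce the classical proof of Gordon's comparison theorem. The paper itself does not prove this statement --- it imports it as a Fact citing Gordon's Theorem 1.1 --- so there is nothing in the paper to compare against; your proof is essentially the original argument, with only the standard technicalities (differentiating under the expectation on the open interval $(0,1)$ and the $\varepsilon\downarrow 0$ limit) left as routine, as you note.
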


The second one is the Gaussian concentration inequality which allows us to establish tail bounds for different kinds of Gaussian Lipschitz functions. Recall that a function $f:\R^n \rightarrow \R$ is $L$-Lipschitz with
respect to the Euclidean norm $\|\cdot\|_2$ if
$$|f(\vx) - f(\vy)| \leq L \|\vx - \vy\|_2   ~~~~\textrm{for all}~~ \vx, \vy \in \R^n.  $$
Then the Gaussian concentration inequality reads as
\begin{fact}[Gaussian concentration inequality]\cite[Theorem 1.7.6]{bogachev1998gaussian}
	\label{gaussian concentration}
	Let $\vx\sim \NN(0,\mI_n)$, and let $f:~\R^n\to\R$  be $L$-Lipschitz with respect to the Euclidean metric. Then for any $\epsilon \geq 0$, we have
	\begin{equation*}
	\Pr{ f(\vx)  \geq \E f(\vx)+\epsilon} \leq \exp\left(\frac{-\epsilon^2}{2L^2}\right),
	\end{equation*}
	and
	\begin{equation*}
	\Pr{ f(\vx) \leq \E f(\vx)-\epsilon} \leq \exp\left(\frac{-\epsilon^2}{2L^2}\right).
	\end{equation*}
\end{fact}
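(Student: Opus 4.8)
The plan is to deduce both tail bounds from a single sub-Gaussian estimate on the moment generating function, via a Chernoff argument. Precisely, it suffices to prove that
\begin{equation*}
\E e^{\lambda\left(f(\vx)-\E f(\vx)\right)}\le e^{\lambda^2 L^2/2}\qquad\text{for every }\lambda\in\R .
\end{equation*}
Granting this, Markov's inequality gives, for any $\lambda>0$, the chain $\Pr{f(\vx)\ge\E f(\vx)+\epsilon}\le e^{-\lambda\epsilon}\,\E e^{\lambda(f(\vx)-\E f(\vx))}\le e^{\lambda^2L^2/2-\lambda\epsilon}$, and choosing $\lambda=\epsilon/L^2$ produces the first inequality; running the same computation with $f$ replaced by $-f$ (still $L$-Lipschitz) yields the second.

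To obtain the MGF bound I would first reduce to the case where $f$ is smooth with $\|\nabla f\|_2\le L$ pointwise: a general $L$-Lipschitz $f$ can be mollified by a narrow Gaussian kernel, producing smooth $L$-Lipschitz functions converging to $f$ locally uniformly, so the estimate passes to the limit in the relevant expectations. For smooth $f$ I would invoke the logarithmic Sobolev inequality for the standard Gaussian measure, $\mathrm{Ent}_{\gamma_n}(h^2)\le 2\,\E\|\nabla h\|_2^2$, applied with $h=e^{\lambda f/2}$. Writing $H(\lambda)=\E e^{\lambda f(\vx)}$ and using $\|\nabla h\|_2^2\le\frac{\lambda^2L^2}{4}e^{\lambda f}$, the inequality becomes $\lambda H'(\lambda)-H(\lambda)\log H(\lambda)\le\frac{\lambda^2L^2}{2}H(\lambda)$. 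Dividing by $\lambda^2 H(\lambda)$ shows that $K(\lambda):=\lambda^{-1}\log H(\lambda)$ satisfies $K'(\lambda)\le L^2/2$; since $K(\lambda)\to\E f(\vx)$ as $\lambda\to0$ (because $H(0)=1$ and $H'(0)=\E f(\vx)$), integrating gives $\log H(\lambda)\le\lambda\,\E f(\vx)+\lambda^2L^2/2$, which is exactly the desired MGF estimate. This last step is the Herbst argument.

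The crux is therefore the quality of the input: the whole argument is only as elementary as the Gaussian log-Sobolev inequality, whose own proof is nontrivial (e.g.\ via hypercontractivity of the Ornstein--Uhlenbeck semigroup, or a Bakry--\'Emery $\Gamma_2$ computation), so in a self-contained writeup I would cite it rather than reprove it. Two alternative routes are worth noting. One uses the Gaussian isoperimetric inequality applied to the sublevel set $\{f\le m\}$ with $m$ a median of $f(\vx)$: it gives $\Pr{f(\vx)\ge m+\epsilon}\le 1-\Phi(\epsilon/L)\le e^{-\epsilon^2/(2L^2)}$ immediately, after which one pays a bounded correction to pass from the median to the mean --- clean, but slightly weaker near the mean than the stated form. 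The other is Pisier's Gaussian interpolation identity, which is fully self-contained but only delivers the non-sharp exponent $\exp(-2\epsilon^2/(\pi^2L^2))$. Since the statement to be proved carries the sharp constant $1/2$ in the exponent centered at $\E f(\vx)$, I would present the log-Sobolev/Herbst argument as the main line and mention the isoperimetric estimate only as a remark.
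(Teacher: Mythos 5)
Your argument is correct: the Chernoff reduction, the Herbst differential inequality $K'(\lambda)\le L^2/2$ obtained from the Gaussian log-Sobolev inequality applied to $h=e^{\lambda f/2}$, the limit $K(\lambda)\to\E f(\vx)$ as $\lambda\to 0$, and the mollification step to reduce to smooth $f$ with $\|\nabla f\|_2\le L$ are all sound, and the route does deliver the sharp constant $1/2$ in the exponent, centered at the mean, exactly as in the statement. Note, however, that there is no in-paper proof to compare against: the paper invokes this result purely as background (Fact \ref{gaussian concentration}), citing Bogachev's Theorem 1.7.6 without reproducing any argument, and then uses it as a black box in the proofs of Theorems \ref{them: Phase transition_con} and \ref{them: Phase transition}. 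So your write-up supplies a proof where the paper supplies only a citation; it is a standard and appropriate one, with the only external input being the Gaussian log-Sobolev inequality, which you correctly flag as the nontrivial ingredient to cite. Your remarks on the alternatives are also accurate: the isoperimetric route concentrates around the median and costs a constant to recenter at the mean, and Pisier's interpolation loses the sharp exponent, so neither reproduces the stated form verbatim. One small bookkeeping point: the finiteness and differentiability of $H(\lambda)=\E e^{\lambda f(\vx)}$ that your computation uses follow from the Lipschitz bound $|f(\vx)|\le |f(\vzero)|+L\|\vx\|_2$ together with Gaussian integrability of $e^{c\|\vx\|_2}$, and it is worth saying so explicitly in a self-contained version.
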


\section{Main Results}
\label{Main results}
In this section, we will present our main results. Section \ref{ph for constrained} is devoted to analyzing the phase transition of constrained recovery procedures \eqref{constrained_procedure1} and \eqref{constrained_procedure2}. The phase transition of penalized recovery procedure \eqref{penilized_procedure} will be established in Section \ref{ph for penilized}. Section \ref{relation between con and pen} explores the relationship between these two kinds of recovery procedures and illustrates how to choose the optimal tradeoff parameter $\lambda$. The proofs are included in Appendixes.

\subsection{Phase Transition of the Constrained Recovery Procedures}\label{ph for constrained}
We start with analyzing the phase transition of constrained recovery procedures \eqref{constrained_procedure1} and \eqref{constrained_procedure2}. Recall that a recovery procedure \emph{succeeds} if it has a unique optimal solution which coincides with the true value; otherwise it \emph{fails}. First of all, it is necessary to specify some analytic conditions under which the constrained procedures \eqref{constrained_procedure1} and \eqref{constrained_procedure2} succeed or fail to recover the original signal and corruption. To this end, we have following lemma.

\begin{lemma}[Sufficient conditions for successful and failed recovery]
	\label{lem: success and failure_con}
	Suppose $\TT_f(\vx^{\star})$ and $\TT_g(\vv^{\star})$ are nonempty and closed. If
	\begin{align}\label{s_condition_con}
	\min_{(\va,\vb)\in \left(\TT_f(\vx^{\star})\times\TT_g(\vv^{\star})\right)\cap\S^{n+m-1}} \|\mPhi\va+\sqrt{m}\vb\|_2>0,
	\end{align}
    then the constrained procedures \eqref{constrained_procedure1} and \eqref{constrained_procedure2} succeed. If	
    \begin{align}\label{f_condition_con}
	\min_{(\va,\vb)\in \left(\TT_f(\vx^{\star})\times\TT_g(\vv^{\star})\right)\cap\S^{n+m-1}} \|\mPhi\va+\sqrt{m}\vb\|_2=0.
	\end{align}
    then the constrained procedures \eqref{constrained_procedure1} and \eqref{constrained_procedure2} fail. Furthermore, a sufficient condition for \eqref{f_condition_con} to hold is
	\begin{align}\label{polar_f_con}
	\min_{\vr\in\S^{m-1}}\min_{\vs\in(\TT_f(\vx^{\star})\times\TT_g(\vv^{\star}))^\circ} \|\vs-\mA^T\vr\|_2>0,
	\end{align}
	where $(\TT_f(\vx^{\star})\times\TT_g(\vv^{\star}))^\circ$ denotes the polar cone of $\TT_f(\vx^{\star})\times\TT_g(\vv^{\star})$, $\mA=[\mPhi, \sqrt{m}\mI_m]$, and $\mI_m$ is the $m$-dimensional identity matrix.
\end{lemma}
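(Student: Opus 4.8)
The plan is to establish the three implications in order, relying on the linearized optimality analysis that is standard for convex recovery with equality constraints. Write $\mA = [\mPhi,\ \sqrt{m}\mI_m]$ and $\vw^\star = (\vx^\star, \vv^\star)$, so the measurement constraint reads $\mA\vw = \mA\vw^\star$. For the first implication, suppose \eqref{s_condition_con} holds and let $(\hat\vx,\hat\vv)$ be any feasible point for \eqref{constrained_procedure1} other than $\vw^\star$. Feasibility forces $\mPhi\hat\vx + \sqrt m\hat\vv = \vy = \mPhi\vx^\star + \sqrt m \vv^\star$, i.e.\ $\mA(\hat\vw - \vw^\star) = \vzero$, while the constraint $g(\hat\vv)\le g(\vv^\star)$ together with optimality $f(\hat\vx)\le f(\vx^\star)$ would place the difference $\vd = \hat\vw - \vw^\star$ in $\TT_f(\vx^\star)\times\TT_g(\vv^\star)$. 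Normalizing $\vd$ and plugging into \eqref{s_condition_con} yields $\|\mA\vd\|_2 > 0$, a contradiction; the same argument applies to \eqref{constrained_procedure2}. Hence $\vw^\star$ is the unique optimal solution of both and the procedures succeed.

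For the second implication, suppose \eqref{f_condition_con} holds, so there is a unit vector $(\va,\vb)\in\TT_f(\vx^\star)\times\TT_g(\vv^\star)$ with $\mPhi\va + \sqrt m\vb = \vzero$. Since $\TT_f$ and $\TT_g$ are cones of descent directions, for all sufficiently small $t>0$ the point $(\vx^\star + t\va,\ \vv^\star + t\vb)$ satisfies $f(\vx^\star + t\va)\le f(\vx^\star)$ and $g(\vv^\star + t\vb)\le g(\vv^\star)$, and it also satisfies the linear constraint because $\mA\cdot t(\va,\vb) = \vzero$. As $(\va,\vb)\neq\vzero$, this is a feasible point distinct from $\vw^\star$ with objective value no larger than that of $\vw^\star$; thus $\vw^\star$ is not the unique optimizer and the procedure fails. (Some care is needed since $\TT_f, \TT_g$ may not be closed and \eqref{f_condition_con} is phrased over their closures — but a point in the closure attaining the minimum can be approximated by genuine descent directions, or one notes the infimum of $\|\mA\cdot\|_2$ over the relatively open cone equals that over its closure, so an exact or approximate feasible competitor still exists.)

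For the third implication, I would argue by contraposition: if \eqref{f_condition_con} fails, i.e.\ the minimum in \eqref{s_condition_con} is strictly positive, then I claim \eqref{polar_f_con} fails as well. The cleanest route is duality. Minimizing $\|\mA\vd\|_2$ over $\vd\in(\TT_f\times\TT_g)\cap\S^{n+m-1}$ being positive is equivalent, after a homogeneity/separation argument, to the statement that $\vzero$ lies in the interior (relative to an appropriate subspace) of $\mA(\TT_f\times\TT_g) + \{\vr : \|\vr\|_2 \le \text{something}\}$ — more precisely, one uses the fact that $\min_{\vd\in K\cap\S}\|\mA\vd\|_2 > 0$ is dual to $\mA^T(\S^{m-1}) \cap K^\circ = \emptyset$ not quite holding; rather it is dual to the existence of a strict separation, equivalently $\min_{\vr\in\S^{m-1}}\dist(\mA^T\vr,\ K^\circ) > 0$ being impossible. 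Concretely: if \eqref{polar_f_con} held, then $\mA^T\vr \notin (\TT_f\times\TT_g)^\circ$ for every $\vr\in\S^{m-1}$ with a uniform gap; a standard convex-cone separation (Fenchel / the Moreau decomposition $\R^{n+m} = \overline{K} \oplus K^\circ$ for $K = \TT_f\times\TT_g$, using $K^\circ = \NN_f(\vx^\star)\times\NN_g(\vv^\star)$) then produces, for each such $\vr$, a vector in $\overline K\cap\S$ on which $\langle \mA^T\vr,\cdot\rangle$ is bounded below by the gap, i.e.\ $\langle\vr,\mA\vd\rangle$ is uniformly positive, which forces $\|\mA\vd\|_2$ bounded away from $0$ on $\overline K\cap\S$. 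Running this equivalence the other way gives exactly: \eqref{polar_f_con} $\Rightarrow$ \eqref{f_condition_con}.

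The main obstacle is the third part, where I must convert the primal "escape-through-the-kernel" condition \eqref{f_condition_con} into the dual "no direction is normal-cone-like after the $\mA^T$ pullback" condition \eqref{polar_f_con}. The bookkeeping — identifying $K^\circ$ with $\NN_f(\vx^\star)\times\NN_g(\vv^\star)$, handling the non-closedness of the tangent cones, and getting the correct quantifier order over $\vr\in\S^{m-1}$ — is where a clean separation/minimax argument (or an appeal to Gordon's Lemma, Fact~\ref{Grodon lem}, in the spirit of how the paper presumably proves the matching probabilistic bounds) must be invoked carefully. The first two implications are essentially immediate from the definitions of tangent cone and feasibility.
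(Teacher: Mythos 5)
Your first two implications are fine: they are exactly the content of the optimality condition the paper cites from \cite{rudelson2008sparse,chandrasekaran2012convex} (the procedures succeed iff $\mathrm{null}([\mPhi,\sqrt{m}\mI_m])$ meets $\TT_f(\vx^\star)\times\TT_g(\vv^\star)$ only at the origin), and your closedness worry in the failure step is moot because the lemma assumes $\TT_f(\vx^\star)$ and $\TT_g(\vv^\star)$ are closed, so the minimum in \eqref{f_condition_con} is attained by a genuine unit descent direction.

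The third implication is where the proposal genuinely breaks, and it breaks in the direction of ``proving'' the opposite of the claim. Write $K=\TT_f(\vx^\star)\times\TT_g(\vv^\star)$ and $\mA=[\mPhi,\sqrt{m}\mI_m]$. The lemma asserts \eqref{polar_f_con} $\Rightarrow$ \eqref{f_condition_con}, i.e.\ the uniform gap between $\{\mA^T\vr:\vr\in\S^{m-1}\}$ and $K^\circ$ certifies that the null space of $\mA$ \emph{does} pierce $K$. Your ``concrete'' argument assumes \eqref{polar_f_con}, uses $\dist(\mA^T\vr,K^\circ)=\sup_{\vd\in K\cap\B_2^{n+m}}\ip{\mA^T\vr}{\vd}$ to produce, for each unit $\vr$, \emph{some} $\vd_{\vr}\in K\cap\B_2^{n+m}$ with $\ip{\vr}{\mA\vd_{\vr}}$ at least the gap, and then concludes that $\|\mA\vd\|_2$ is bounded away from zero for \emph{all} $\vd\in K\cap\S^{n+m-1}$. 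That last step is a quantifier swap: the good direction $\vd_{\vr}$ depends on $\vr$, and nothing prevents other unit vectors of $K$ from lying in $\mathrm{null}(\mA)$. Moreover, even if that step were valid it would establish \eqref{polar_f_con} $\Rightarrow$ \eqref{s_condition_con}, i.e.\ success, which is the negation of what must be shown (the two conditions \eqref{s_condition_con} and \eqref{f_condition_con} are mutually exclusive), and the closing remark about ``running the equivalence the other way'' cannot repair this because no equivalence has been established --- only a flawed one-way argument pointed the wrong way. The missing ingredient is precisely the polarity principle of Oymak and Tropp (Fact~\ref{polarity argument}), which is how the paper argues: since $K\cap\S^{n+m-1}$ is nonempty, closed, and spherically convex, the hypothesis $\min_{\vr\in\S^{m-1}}\min_{\vs\in K^\circ}\|\vs-\mA^T\vr\|_2>0$ forces $\vzero\in\mA(K\cap\S^{n+m-1})$, which is exactly \eqref{f_condition_con}. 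In other words, \eqref{polar_f_con} is a \emph{dual certificate of failure}, not of success; your duality is oriented backwards, and a separation bound of the type you sketched cannot substitute for this polarity fact.
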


Armed with this lemma, our first theorem shows that the phase transition of constrained recovery procedures \eqref{constrained_procedure1} and \eqref{constrained_procedure2} occurs around the sum of squares of spherical Gaussian widths of $\TT_f(\vx^{\star})$ and $\TT_g(\vv^{\star})$. This result ensures that the recovery is likely to succeed when the number of measurements exceeds the critical point. On the contrary, the recovery is likely to fail when the number of measurements is smaller than the critical point.

\begin{theorem}[Phase transition of constrained recovery procedures]
	\label{them: Phase transition_con}
	Consider the corrupted sensing model \eqref{corrupted sensing} with Gaussian measurements. Assume $\TT_f(\vx^{\star})$ and $\TT_g(\vv^{\star})$ are non-empty and closed. Define $\mathscr{C}_p:= {\omega^2\left(\TT_f(\vx^{\star})\cap\S^{n-1}\right)+\omega^2\left(\TT_g(\vv^{\star})\cap\S^{m-1}\right)}$. If the number of measurements satisfies
	\begin{align}\label{NumberofMeasurements_c1}
	\sqrt{m} \geq \sqrt{\mathscr{C}_p} + \sqrt{2} +\epsilon,
	\end{align}
	then the constrained procedures \eqref{constrained_procedure1} and \eqref{constrained_procedure2} succeed with probability at least $1-2\exp\left(\frac{-\epsilon^2}{4}\right)$. If the number of measurements satisfies
	\begin{align}\label{NumberofMeasurements_c2}
	\sqrt{m} \leq \sqrt{\mathscr{C}_p} -\epsilon,
	\end{align}
	then the constrained procedures \eqref{constrained_procedure1} and \eqref{constrained_procedure2} fail with probability at least $1-2\exp\left(\frac{-\epsilon^2}{4}\right)$.
\end{theorem}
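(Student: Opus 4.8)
The plan is to deduce both halves from Lemma~\ref{lem: success and failure_con} by analysing the relevant Gaussian min--max functionals with Gordon's inequality (Fact~\ref{Grodon lem}) and the Gaussian concentration inequality (Fact~\ref{gaussian concentration}). Throughout write $K:=\left(\TT_f(\vx^{\star})\times\TT_g(\vv^{\star})\right)\cap\S^{n+m-1}$ and $\mA=[\mPhi,\sqrt m\mI_m]$, and let $\vh\sim\NN(0,\mI_n)$, $\vg\sim\NN(0,\mI_m)$ be independent; quantities of the form $\sup_{\vu\in\TT_f(\vx^{\star})\cap\S^{n-1}}\ip{\pm\vh}{\vu}$ all share a common law, which I denote $a$ (a $1$-Lipschitz functional with mean $\omega(\TT_f(\vx^{\star})\cap\S^{n-1})$ and variance $\le1$), and similarly $w$ for the $\TT_g(\vv^{\star})$-side; thus $\E a^{2}+\E w^{2}\le\mathscr C_p+2$.

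For the success claim, by Lemma~\ref{lem: success and failure_con} it suffices to show $Q:=\min_{(\va,\vb)\in K}\|\mPhi\va+\sqrt m\vb\|_2>0$ with the stated probability. I would first rewrite $Q=\min_{(\va,\vb)\in K}\max_{\vr\in\S^{m-1}}\bigl[\ip{\vr}{\mPhi\va}+\sqrt m\ip{\vr}{\vb}\bigr]$ via $\|\vz\|_2=\max_{\vr\in\S^{m-1}}\ip{\vr}{\vz}$, and apply Fact~\ref{Grodon lem} to the random kernel $\ip{\vr}{\mPhi\va}$ --- with the deterministic term $\sqrt m\ip{\vr}{\vb}$ carried in the thresholds, after the routine variance-matching adjustment --- against the canonical auxiliary process $\|\va\|_2\ip{\vr}{\vg}+\ip{\vh}{\va}$, obtaining $\Pr{Q\ge\tau}\ge\Pr{R\ge\tau}$ for all $\tau$, where
\begin{equation*}
R:=\min_{(\va,\vb)\in K}\left[\ip{\vh}{\va}+\big\|\,\|\va\|_2\,\vg+\sqrt m\,\vb\,\big\|_2\right].
\end{equation*}
Parametrising $(\va,\vb)=(s\vu_f,\sqrt{1-s^{2}}\vu_g)$, $s\in[0,1]$, and minimising over $\vu_f\in\TT_f(\vx^{\star})\cap\S^{n-1}$, $\vu_g\in\TT_g(\vv^{\star})\cap\S^{m-1}$ reduces this to the scalar problem
\begin{equation*}
R=\min_{s\in[0,1]}\left[-s\,a+\sqrt{\,s^{2}\|\vg\|_2^{2}-2s\sqrt{1-s^{2}}\,\sqrt m\,w+m(1-s^{2})\,}\right],
\end{equation*}
whose radicand is the quadratic form of the unit vector $(s,\sqrt{1-s^{2}})$ with matrix $\mM=\left(\begin{smallmatrix}\|\vg\|_2^{2}&-\sqrt m\,w\\ -\sqrt m\,w&m\end{smallmatrix}\right)$. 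Carrying out the $s$-minimisation (its minimiser follows the eigenvector of $\mM$, which lies in the positive quadrant) should show that $R>0$ holds exactly when $\|\vg\|_2^{2}>a^{2}+w^{2}$. Since $\|\vg\|_2^{2}-w^{2}=\dist^{2}(-\vg,\TT_g(\vv^{\star}))$ by Moreau's decomposition, this is a comparison between the $O(1)$-Lipschitz functionals $\dist(-\vg,\TT_g(\vv^{\star}))$ and $a$, whose means --- estimated via \eqref{Relationship} --- satisfy $\E\dist(-\vg,\TT_g(\vv^{\star}))\ge\sqrt{m-\omega^{2}(\TT_g(\vv^{\star})\cap\S^{m-1})-2}$ and $\E a=\omega(\TT_f(\vx^{\star})\cap\S^{n-1})$; under \eqref{NumberofMeasurements_c1} their gap exceeds $\epsilon$, so Fact~\ref{gaussian concentration} yields $Q>0$ with probability at least $1-2\exp(-\epsilon^{2}/4)$.

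For the failure claim, by Lemma~\ref{lem: success and failure_con} it is enough to verify \eqref{polar_f_con}. Since $\left(\TT_f(\vx^{\star})\times\TT_g(\vv^{\star})\right)^{\circ}=\NN_f(\vx^{\star})\times\NN_g(\vv^{\star})$ is a closed convex cone, \eqref{polar_f_con} is the statement that $\mA^{T}\vr\notin\NN_f(\vx^{\star})\times\NN_g(\vv^{\star})$ for every $\vr\in\S^{m-1}$, which by strict separation and the bipolar theorem (using that $\TT_f,\TT_g$ are closed) is equivalent to $\min_{\vr\in\S^{m-1}}\max_{(\va,\vb)\in K}\bigl[\ip{\vr}{\mPhi\va}+\sqrt m\ip{\vr}{\vb}\bigr]>0$. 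I would apply Fact~\ref{Grodon lem} again, now interchanging the roles of $\S^{m-1}$ and $K$, to bound this below by $\widetilde R:=\min_{\vr\in\S^{m-1}}\max_{(\va,\vb)\in K}\bigl[\ip{\vh}{\va}+\|\va\|_2\ip{\vr}{\vg}+\sqrt m\ip{\vr}{\vb}\bigr]$, and run the same scalar optimisation on the inner maximum; this should show $\widetilde R>0$ exactly when $\dist(-\vg,\TT_g(\vv^{\star}))<a$ --- the event complementary to the one above. The same concentration argument then gives, when \eqref{NumberofMeasurements_c2} holds, that \eqref{polar_f_con} is satisfied, hence that both procedures fail, with probability at least $1-2\exp(-\epsilon^{2}/4)$.

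The main obstacle will be the exact scalar optimisation over $s=\|\va\|_2$ (a minimisation for success, a maximisation for failure) of this mixed deterministic/Gaussian expression, and in particular extracting from it the \emph{sharp} threshold $\sqrt{\mathscr C_p}=\sqrt{\omega^{2}(\TT_f(\vx^{\star})\cap\S^{n-1})+\omega^{2}(\TT_g(\vv^{\star})\cap\S^{m-1})}$ rather than the weaker $\omega(\TT_f(\vx^{\star})\cap\S^{n-1})+\omega(\TT_g(\vv^{\star})\cap\S^{m-1})$ that a crude triangle inequality would give; the gain hinges on $(s,\sqrt{1-s^{2}})$ being a unit vector --- a Cauchy--Schwarz step converting a sum of spherical widths into the Euclidean norm of the pair --- with the residual gap of at most $2$ between $\mathscr C_p$ and $\E a^{2}+\E w^{2}$ absorbed by the $\sqrt2$ in \eqref{NumberofMeasurements_c1} and \eqref{NumberofMeasurements_c2}. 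A secondary technical point is the variance-matching needed to fit a kernel carrying a deterministic additive term into the hypotheses of Fact~\ref{Grodon lem}, together with the bookkeeping of Lipschitz constants and auxiliary $\epsilon$'s that produces exactly the factor $\exp(-\epsilon^{2}/4)$.
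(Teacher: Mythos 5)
Your proposal is correct and follows essentially the same route as the paper's proof: reduce via Lemma~\ref{lem: success and failure_con}, compare with the variance-matched auxiliary Gaussian process through Fact~\ref{Grodon lem} (your ``$\Pr{Q\ge\tau}\ge\Pr{R\ge\tau}$'' is really the conditioning/$2q-1$ step, which is where the factor $2$ in $1-2\exp(-\epsilon^2/4)$ comes from), collapse the resulting min--max to the comparison of $\dist(\vg,\TT_g(\vv^\star))$ against $\sup_{\vu\in\TT_f(\vx^\star)\cap\S^{n-1}}\ip{\vh}{\vu}$, and finish with Moreau's decomposition, Facts~\ref{omega & dist}--\ref{relation_w_d}, the square-root (Cauchy--Schwarz-type) inequality giving the sharp $\sqrt{\mathscr{C}_p}$ threshold, and Gaussian concentration for a $\sqrt2$-Lipschitz functional. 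The only difference is cosmetic: you eliminate the scalar $s=\|\va\|_2$ by an explicit $2\times2$ quadratic-form/eigenvector analysis (which needs the positive-part convention $w_+$ to make the ``exactly when'' and the identity $\|\vg\|_2^2-w^2=\dist^2(-\vg,\TT_g(\vv^\star))$ literally correct, though the implication you need holds in all sign cases), whereas the paper rescales by $t$ and absorbs the scalar into the cone, turning the minimization directly into $\dist(\vh,\TT_g(\vv^\star))$.
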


\begin{remark}[Relation to existing results]
In \cite[Theorem 1]{foygel2014corrupted}, Foygel and Mackey have shown that when
$$\sqrt{m} \geq \sqrt{\gamma^2\left(\TT_f(\vx^{\star})\cap\B_2^n\right)+\gamma^2\left(\TT_g(\vv^{\star})\cap\B_2^m\right)} + \frac{1}{\sqrt{2}} + \frac{1}{\sqrt{2\pi}} +  \epsilon,$$
the constrained procedures \eqref{constrained_procedure1} and \eqref{constrained_procedure2} succeed with probability at least $1-\exp\left({-\epsilon^2}/{2}\right)$. Here, the \emph{Gaussian squared complexity} of a set $\SS \subset \R^n$ is defined as $\gamma^2(\SS):= \E \left[  \left(\sup_{\vx \in \SS} \langle \vg, \vx \rangle \right)_+^2  \right]$ with $\vg\sim\NN(0,\mI_n)$ and $(a)_+ = \max\{a, 0\}$. On the other hand, the third author and his coauthors \cite[Theorem 1]{Zhang2017On} have demonstrated that when
$$\sqrt{m} \leq \sqrt{\omega^2\left(\TT_f(\vx^{\star})\cap\S^{n-1}\right)+\omega^2\left(\TT_g(\vv^{\star})\cap\S^{m-1}\right)} -\epsilon, $$
the constrained procedures \eqref{constrained_procedure1} and \eqref{constrained_procedure2} fail with probability at least $1-\exp\left({-\epsilon^2}/{2}\right)$. Since $\gamma^2\left(\TT_f(\vx^{\star})\cap\B_2^n\right)$ (or $\gamma^2\left(\TT_g(\vv^{\star})\cap\B_2^m\right)$) is very close to $\omega^2\left(\TT_f(\vx^{\star})\cap\S^{n-1}\right)$ (or $\omega^2\left(\TT_g(\vv^{\star})\cap\S^{m-1}\right)$), the above two results have essentially established the phase transition theory of the constrained procedures \eqref{constrained_procedure1} and \eqref{constrained_procedure2}.

However, in this paper, we have developed a new analytical framework which allows us to unify the results in both success and failure cases in terms of Gaussian width, which makes the phase transition theory of the constrained recovery procedures more natural. More importantly, this framework can be easily applied to establish the phase transition theory of the penalized recovery procedure.
\end{remark}

\begin{remark}[Related works]
  In \cite{amelunxen2014living}, Amelunxen \emph{et al.} consider the following demixing problem
  $$ \vy = \mU \vx^{\star} + \vv^{\star},    $$
  where $\vx^{\star}, \vv^{\star} \in \R^n$ are unknown structured signals and $\mU \in \R^{n\times n}$ is a known orthogonal matrix. They have shown that the phase transition occurs around
  $\delta(\TT_f(\vx^{\star})) + \delta(\TT_g(\vv^{\star}))$ when the constrained recovery procedures are employed to solve this problem. Here, the \emph{statistical dimension} of a convex cone $\CC \in \R^n$ is defined as $ \delta(\CC) := \E \left[  \left(\sup_{\vx \in \CC\cap\B_2^n} \langle \vg, \vx \rangle \right)^2  \right]$ with $\vg\sim\NN(0,\mI_n)$. Although the model assumptions of this demixing problem are different from ours, the results in the two cases are essentially consistent, since we have $\delta(\TT_f(\vx^{\star})) + \delta(\TT_g(\vv^{\star})) \approx \mathscr{C}_p$ (by Fact \ref{relation_w_d} in Appendix \ref{auxiliaryresults}).

  Recently, Oymak and Tropp \cite{oymak2018universality} consider a more general demixing model
  $$ \vy = \mPhi_0 \vx^{\star} + \mPhi_1 \vv^{\star},$$
  where $\vx^{\star}, \vv^{\star} \in \R^n$ are unknown structured signals and $\mPhi_0, \mPhi_1 \in \R^{m\times n}$ are random matrices. They have demonstrated that the critical point of the constrained recovery procedures is nearly located at $\delta(\TT_f(\vx^{\star})) + \delta(\TT_g(\vv^{\star}))$ for a large class of random matrices drawing from some models. Their model assumptions are also different from ours, because $\mPhi_1$ is a deterministic matrix in our case, which makes our analysis different from that of \cite{oymak2018universality}.

\end{remark}

\subsubsection{How to evaluate the critical point $\mathscr{C}_p$?}
Theorem \ref{them: Phase transition_con} has demonstrated that the phase transition of the constrained recovery procedures occurs around
$$
\mathscr{C}_p = \omega^2\left(\TT_f(\vx^{\star})\cap\S^{n-1}\right)+\omega^2\left(\TT_g(\vv^{\star})\cap\S^{m-1}\right).
$$
A natural question then is how to determine the value of this critical point. To this end, it suffices to estimate $\omega^2\left(\TT_f(\vx^{\star})\cap\S^{n-1}\right)$ and $\omega^2\left(\TT_g(\vv^{\star})\cap\S^{m-1}\right)$. It is now well-known that there are some standard recipes to estimate these two quantities, see e.g., \cite{chandrasekaran2012convex, amelunxen2014living,foygel2014corrupted}. Actually, Facts \ref{relation_w_d} and \ref{bound of delta} indicate that $\mathscr{C}_p$ can be accurately approximated by
\begin{equation}\label{cal_ph of con}
\min_{t\geq 0}\eta^2(t\cdot\partial f(\vx^{\star}))+\min_{t\geq 0}\eta^2(t\cdot\partial g(\vv^{\star})).
\end{equation}

To illustrate this result \eqref{cal_ph of con}, we consider two typical examples: sparse signal recovery from sparse corruption and low-rank matrix recovery from sparse corruption. In the first example, we assume $\vx^{\star} \in \R^{n}$ and $\vv^{\star}\in \R^{m}$ are $s$-sparse and $k$-sparse vectors, respectively. Direct calculations (see Appendix \ref{cal}) lead to
\begin{align*}
&\min_{t\geq 0}\eta^2(t\cdot\partial \|\vx^{\star}\|_1)=\min_{t\geq 0}\left\{s(1+t^2)+\frac{2(n-s)}{\sqrt{2\pi}}\left((1+t^2)\int_{t}^{\infty}e^{-x^2/2}dx-te^{-t^2/2}\right)\right\},
\end{align*}
and
\begin{align*}
&\min_{t\geq 0}\eta^2(t\cdot\partial \|\vv^{\star}\|_1)=\min_{t\geq 0}\left\{k(1+t^2)+\frac{2(m-k)}{\sqrt{2\pi}}\left((1+t^2)\int_{t}^{\infty}e^{-x^2/2}dx-te^{-t^2/2}\right)\right\}.
\end{align*}
In the case of low-rank matrix recovery from sparse corruption, suppose $\mX^{\star}\in \R^{n \times n}$ is an $r$-rank matrix, the Gaussian squared distance of the signal in \eqref{cal_ph of con} is given by
\begin{align*}
\min_{t\geq 0}\eta^2(t\cdot\partial\|\mX\|_*)&=\min_{t\geq 0}\left\{r(2n-r+t^2)+\E\sum_{i=1}^{n-r}\textrm{shrink}\left(\sigma_i(\mG_2),t\right)^2\right\},
\end{align*}
where $\mG_2$ is an $(n-r)\times(n-r)$ standard Gaussian matrix, and $\sigma_i(\mG_2)$ is the $i$-th largest singular value of $\mG_2$. The Gaussian squared distance of the corruption can be similarly evaluated as in the first example.
In addition, we should mention that it is also possible to estimate the Gaussian width in $\mathscr{C}_p$ numerically by approximating the expectation in its definition with an empirical average.

\subsection{Phase Transition of the Penalized Recovery Procedure}\label{ph for penilized}
We then study the phase transition theory of the penalized recovery procedure \eqref{penilized_procedure}. Firstly, we also need to establish sufficient conditions under which the penalized recovery procedure \eqref{penilized_procedure} succeeds or fails to recover the original signal and corruption.
\begin{lemma}[Sufficient conditions for successful and failed recovery]
	\label{lem: success and failure}
    The penalized problem \eqref{penilized_procedure} succeeds if
	\begin{align}\label{s_condition}
	\vzero\in \mPhi^T\cdot\partial g(\vv^\star)-\frac{\sqrt{m}}{\lambda}\partial f(\vx^\star).
	\end{align}
	The penalized problem \eqref{penilized_procedure} fails if
	\begin{align}
	\min_{\va\in\partial f(\vx^\star) , \vb\in\partial g(\vv^\star)} \|\mPhi^T\vb-\frac{\sqrt{m}}{\lambda}\va\|_2>0.
	\end{align}
	Define the joint cone $\TT_J=\{t\cdot(\va^T,\vb^T)^T\in\R^n\times\R^m:t\geq 0,~\va\in\partial f(\vx^\star),~\vb\in\partial g(\vv^\star)\}$. Then, a sufficient condition for \eqref{s_condition} to hold is
	\begin{align}\label{polar_s_con}
	\min_{\vr\in\S^{n-1}}\min_{\vs\in\TT_J^\circ} \|\vs-\mM^T\vr\|_2>0,
	\end{align}
	where $\TT_J^\circ$ denotes the polar cone of $\TT_J$, $\mM=[-\frac{\sqrt{m}}{\lambda}\mI_n, \mPhi^T]$, and $\mI_n$ is the $n$-dimensional identity matrix.
\end{lemma}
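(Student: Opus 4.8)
The plan is to derive all three assertions from the first-order optimality conditions of the convex program \eqref{penilized_procedure}, whose feasible set is the affine subspace through $(\vx^\star,\vv^\star)$; I parametrize competitors as $(\vx^\star+\va,\vv^\star+\vb)$ with the perturbation confined to the linear subspace $\mPhi\va+\sqrt{m}\vb=\vzero$. For the success condition I argue directly with subgradient inequalities. If $\vzero\in\mPhi^T\partial g(\vv^\star)-\frac{\sqrt{m}}{\lambda}\partial f(\vx^\star)$, pick $\vb_0\in\partial g(\vv^\star)$ and $\va_0\in\partial f(\vx^\star)$ with $\mPhi^T\vb_0=\frac{\sqrt{m}}{\lambda}\va_0$; equivalently, $\vq:=\frac{\lambda}{\sqrt{m}}\vb_0$ is a common ``dual certificate'' satisfying $\mPhi^T\vq\in\partial f(\vx^\star)$ and $\frac{\sqrt{m}}{\lambda}\vq\in\partial g(\vv^\star)$. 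Adding the convexity estimates $f(\vx^\star+\va)\ge f(\vx^\star)+\ip{\mPhi^T\vq}{\va}$ and $\lambda g(\vv^\star+\vb)\ge\lambda g(\vv^\star)+\sqrt{m}\ip{\vq}{\vb}$ and using $\mPhi\va+\sqrt{m}\vb=\vzero$ to kill the first-order term $\ip{\vq}{\mPhi\va+\sqrt{m}\vb}$ yields $f(\vx^\star+\va)+\lambda g(\vv^\star+\vb)\ge f(\vx^\star)+\lambda g(\vv^\star)$, so $(\vx^\star,\vv^\star)$ attains the minimum. I expect the uniqueness half of ``succeeds'' to be the one delicate point: tracing equality back through these inequalities only forces $\mPhi^T\vq$ to remain a subgradient of $f$ along the whole segment, which does not by itself imply $(\va,\vb)=\vzero$; I would close it either with a relative-interior strengthening of \eqref{s_condition} or, as is standard in this literature, by deferring uniqueness to the Gaussian setting, where the feasible subspace lies almost surely in general position and the minimizer is therefore almost surely unique.

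For the failure condition I would instead use that $(\vx^\star,\vv^\star)$ solves \eqref{penilized_procedure} iff the directional derivative $f'(\vx^\star;\va)+\lambda g'(\vv^\star;\vb)$ is nonnegative along every feasible direction, together with $f'(\vx^\star;\va)=\max_{\vu\in\partial f(\vx^\star)}\ip{\vu}{\va}$ (and the analogous identity for $g$), valid since $\partial f(\vx^\star)$ and $\partial g(\vv^\star)$ are nonempty and compact. Eliminating $\vb=-\tfrac{1}{\sqrt{m}}\mPhi\va$ turns optimality into $\max_{\va_0\in\partial f(\vx^\star),\,\vb_0\in\partial g(\vv^\star)}\ip{\va_0-\tfrac{\lambda}{\sqrt{m}}\mPhi^T\vb_0}{\va}\ge0$ for all $\va$, which by compactness and a separating-hyperplane argument is equivalent to $\vzero$ lying in the compact convex set $\partial f(\vx^\star)-\tfrac{\lambda}{\sqrt{m}}\mPhi^T\partial g(\vv^\star)$, i.e.\ to \eqref{s_condition}. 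Hence the displayed failure hypothesis --- which is exactly $\vzero\notin\mPhi^T\partial g(\vv^\star)-\tfrac{\sqrt{m}}{\lambda}\partial f(\vx^\star)$, since the minimum of $\|\cdot\|_2$ over a compact set is positive precisely when $\vzero$ misses the set --- forces $(\vx^\star,\vv^\star)$ to be non-optimal, so it cannot be the optimal solution and \eqref{penilized_procedure} fails; for a constructive version I would separate $\vzero$ from that set by a unit vector $\vh$ and exhibit the strict descent direction $(\va,\vb)=(\epsilon\vh,-\tfrac{\epsilon}{\sqrt{m}}\mPhi\vh)$.

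For the last claim, $\eqref{polar_s_con}\Rightarrow\eqref{s_condition}$, I argue by contraposition. If \eqref{s_condition} fails, then $\vzero\notin D:=\mPhi^T\partial g(\vv^\star)-\tfrac{\sqrt{m}}{\lambda}\partial f(\vx^\star)$, a nonempty compact convex set, so the normalized metric projection of $\vzero$ onto $D$ is a unit vector $\vr_0$ with $\ip{\vr_0}{\vd}\ge c>0$ for all $\vd\in D$. Writing $D=\{\mM(\va^T,\vb^T)^T:\va\in\partial f(\vx^\star),\vb\in\partial g(\vv^\star)\}$ with $\mM=[-\tfrac{\sqrt{m}}{\lambda}\mI_n,\mPhi^T]$, this reads $\ip{\mM^T(-\vr_0)}{(\va^T,\vb^T)^T}\le-c<0$ on $\partial f(\vx^\star)\times\partial g(\vv^\star)$, hence $\le0$ on all of $\TT_J$ by nonnegative scaling; thus $\mM^T(-\vr_0)\in\TT_J^\circ$ with $-\vr_0\in\S^{n-1}$, forcing $\min_{\vr\in\S^{n-1}}\min_{\vs\in\TT_J^\circ}\|\vs-\mM^T\vr\|_2=0$ and contradicting \eqref{polar_s_con}. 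The only real care in this step is keeping the reciprocal scalars $\tfrac{\sqrt{m}}{\lambda}$ and $\tfrac{\lambda}{\sqrt{m}}$ and the block structure of $\mM$ consistent; the sole substantive difficulty across the whole lemma is the uniqueness issue flagged in the first paragraph.
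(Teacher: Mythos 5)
Your proposal is correct; it coincides in substance with the paper's argument on the first two claims and takes a genuinely more self-contained route on the third. For success and failure, the paper eliminates $\vv$ via the constraint, sets $F(\vx)=f(\vx)+\lambda\, g\bigl(\tfrac{1}{\sqrt m}(\vy-\mPhi\vx)\bigr)$, computes $\partial F(\vx^\star)=\partial f(\vx^\star)-\tfrac{\lambda}{\sqrt m}\mPhi^T\partial g(\vv^\star)$ by the convex chain rule, and reads both claims off the optimality criterion $\vzero\in\partial F(\vx^\star)$; your dual-certificate subgradient inequality (success) and your directional-derivative/support-function reduction to $\vzero$ lying in the compact convex set $\partial f(\vx^\star)-\tfrac{\lambda}{\sqrt m}\mPhi^T\partial g(\vv^\star)$ (failure) are hands-on equivalents of exactly that criterion, so these parts differ only in presentation. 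For the implication \eqref{polar_s_con}$\Rightarrow$\eqref{s_condition}, the paper invokes the polarity principle (Fact \ref{polarity argument}) applied to the spherically convex set $\TT_J\cap\S^{n+m-1}$ and the map $\mM$, whereas you prove the contrapositive directly: projecting the origin onto the nonempty compact convex set $\mPhi^T\partial g(\vv^\star)-\tfrac{\sqrt m}{\lambda}\partial f(\vx^\star)$ yields a unit vector $\vr_0$ with $\ip{\mM^T(-\vr_0)}{\vw}\le 0$ on $\TT_J$, hence $\mM^T(-\vr_0)\in\TT_J^\circ$ and the minimum in \eqref{polar_s_con} vanishes. This elementary separation argument buys independence from the cited external fact (including its side hypothesis that the cone not be a subspace), at no real cost. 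Finally, the uniqueness caveat you flag is genuine but is not a gap relative to the source: the paper's own proof only verifies that $(\vx^\star,\vv^\star)$ is an optimal point (zero subgradient), even though ``succeeds'' is defined through the unique solution, so your treatment is at least as careful as the paper's.
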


Our second theorem shows that the critical point of the penalized recovery procedure is nearly located at $\mathscr{C}_p(\lambda)$, which is determined by two Gaussian distances to scaled subdifferentials. This result asserts that the recovery succeeds with high probability when the number of measurements is larger than the critical point. On the other hand, the recovery fails with high probability when the number of measurements is below the critical point. In addition, the critical point $\mathscr{C}_p(\lambda)$ is influenced by the tradeoff parameter $\lambda$.
\begin{theorem}[Phase transition of penalized recovery procedure]
	\label{them: Phase transition}
	Consider the corrupted sensing model \eqref{corrupted sensing} with Gaussian measurements. Suppose that the subdifferential $\partial g(\vv^\star)$ does not contain the origin. Let $\alpha=\min_{\vb\in\partial g(\vv^\star)}\|\vb\|_2$ and $\beta=\max_{\vb\in\partial g(\vv^\star)}\|\vb\|_2$. Define
	\begin{align*}
	\mathscr{C}_p(\lambda):=\min_{\alpha\leq t\leq \beta}   2 \cdot \zeta\left(\frac{\sqrt{m}}{\lambda t}\partial f(\vx^\star)\right)+ \eta^2\left(\frac{1}{t}\partial g(\vv^\star)\cap \S^{m-1}\right) -1.
	\end{align*}
	If the number of measurements satisfies
	\begin{align}\label{NumberofMeasurements1}
	m \geq \mathscr{C}_p(\lambda)+\epsilon,
	\end{align}
	then the penalized problem \eqref{penilized_procedure} succeeds with probability at least $1-2\exp\left(\frac{-\epsilon^2}{16}\right)$. If the number of measurements satisfies
	\begin{align}\label{NumberofMeasurements2}
	m \leq \mathscr{C}_p(\lambda)-\epsilon,
	\end{align}
	then the penalized problem \eqref{penilized_procedure} fails with probability at least $1-2\exp\left(\frac{-\epsilon^2}{16}\right)$.
\end{theorem}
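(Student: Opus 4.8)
\emph{Proof proposal.} The plan is to deduce both halves of the theorem from Lemma~\ref{lem: success and failure} by analyzing a single Gaussian bilinear min--max functional. Write $c:=\sqrt m/\lambda$ and $\Psi(\vr,\va,\vb):=\ip{\mPhi\vr}{\vb}-c\ip{\vr}{\va}$. For the success half it suffices to show that the polar-cone condition \eqref{polar_s_con} holds with high probability; using $\dist(\vw,\TT_J^{\circ})=\sup_{\vu\in\TT_J\cap\B_2^{n+m}}\ip{\vw}{\vu}$, the form of $\mM$, and $\|\vw\|_2=\max_{\vr\in\S^{n-1}}\ip{\vr}{\vw}$, a short computation shows that \eqref{polar_s_con} is equivalent to $\min_{\vr\in\S^{n-1}}\max_{(\va,\vb)\in\partial f(\vx^{\star})\times\partial g(\vv^{\star})}\Psi(\vr,\va,\vb)>0$. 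For the failure half, the failure condition in Lemma~\ref{lem: success and failure} is, after writing the norm as a supremum over $\S^{n-1}$, exactly $\min_{(\va,\vb)\in\partial f(\vx^{\star})\times\partial g(\vv^{\star})}\max_{\vr\in\S^{n-1}}\Psi(\vr,\va,\vb)>0$. So both halves reduce to lower bounding a min--max of $\Psi$, the two differing only in which variable is the outer minimum; moreover this failure condition and the success condition \eqref{s_condition} are complementary events, which is already a hint that the two thresholds coincide.

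Next I would decouple $\Psi$ using Gordon's Lemma (Fact~\ref{Grodon lem}). In the success case I assign the outer variable $\vr$ the index $i$ and the pair $(\va,\vb)$ the index $j$; in the failure case I interchange these. In either case I compare the centered Gaussian process $\ip{\mPhi\vr}{\vb}+g_0\|\vb\|_2$, where $g_0\sim\NN(0,1)$ is an independent scalar added so that second moments agree, with the decoupled process $\|\vb\|_2\ip{\vg}{\vr}+\ip{\vh}{\vb}$ for independent $\vg\sim\NN(0,\mI_n)$, $\vh\sim\NN(0,\mI_m)$, moving the deterministic term $-c\ip{\vr}{\va}$ into the thresholds $\tau_{ij}$. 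Because $\|\vr\|_2=1$ on $\S^{n-1}$, the three hypotheses of Gordon's Lemma reduce to $\ip{\vr}{\vr'}\le1$ and $\|\vb\|_2\|\vb'\|_2\ge\ip{\vb}{\vb'}$, both Cauchy--Schwarz. This replaces $\Psi$ by the decoupled functional at the cost of the exceptional event $\{|g_0|>\rho\}$; pulling the $\|\vb\|_2$-scaled auxiliary out of the maximum over $\vb$ and bounding it by $\rho\beta$ is exactly what produces the factor $2$ in the probability, while $\rho$ is then tuned against a Gaussian tail bound.

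It remains to evaluate the decoupled problem. Take the success case first. Optimizing $\ip{\vr}{\|\vb\|_2\vg-c\va}$ over $\vr\in\S^{n-1}$ and then over $(\va,\vb)$ --- which separates, since $\partial f(\vx^{\star})$ and $\partial g(\vv^{\star})$ enter independently --- and fixing $\|\vb\|_2$ at the minimizer $t^{\star}\in[\alpha,\beta]$ of $\mathscr{C}_p(\lambda)$ lower bounds the decoupled functional by $t^{\star}\big(\max_{\vu}\ip{\vh}{\vu}-\dist(\vg,\tfrac{\sqrt m}{\lambda t^{\star}}\partial f(\vx^{\star}))\big)$, where $\vu$ ranges over $\tfrac1{t^{\star}}\partial g(\vv^{\star})\cap\S^{m-1}$ (using homogeneity of $\dist$ and the elementary bound $\min_{\vr\in\S^{n-1}}\max_{\va\in K}\ip{\vr}{\vw-\va}\ge-\dist(\vw,K)$). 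Since $\E\dist(\vg,\tfrac{\sqrt m}{\lambda t}\partial f(\vx^{\star}))=\zeta(\tfrac{\sqrt m}{\lambda t}\partial f(\vx^{\star}))$ and, via $\ip{\vh}{\vu}=\tfrac12(\|\vh\|_2^2+1-\|\vh-\vu\|_2^2)$ for unit $\vu$, $\E\max_{\vu}\ip{\vh}{\vu}=\tfrac12(m+1-\eta^2(\tfrac1t\partial g(\vv^{\star})\cap\S^{m-1}))$, the bracket has mean $\tfrac12(m-\mathscr{C}_p(\lambda))\ge\epsilon/2$ and is $\sqrt2$-Lipschitz in $(\vg,\vh)$; one application of Fact~\ref{gaussian concentration} then yields positivity, hence success by Lemma~\ref{lem: success and failure}, with probability at least $1-2\exp(-\epsilon^2/16)$. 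The failure case is the mirror image: the maximum over $(\va,\vb)$ becomes a minimum, $\max_{\vu}\ip{\vh}{\vu}$ is replaced by $\min_{\vu}\ip{\vh}{\vu}$ (flipping the sign of the $\eta^2$-term), and one cannot fix $t$, so the decoupled value is $\min_{t\in[\alpha,\beta]}t\cdot\mathrm{bracket}(t)$ with $t\ge\alpha>0$, where now $\E\,\mathrm{bracket}(t)=\tfrac12\big([2\zeta(\tfrac{\sqrt m}{\lambda t}\partial f(\vx^{\star}))+\eta^2(\tfrac1t\partial g(\vv^{\star})\cap\S^{m-1})-1]-m\big)\ge\tfrac12(\mathscr{C}_p(\lambda)-m)\ge\epsilon/2$ for every $t$; uniform positivity in $t$ follows from Fact~\ref{gaussian concentration} combined with a union bound over a fine net of $[\alpha,\beta]$ and Lipschitz continuity in $t$ (the sets $\tfrac{\sqrt m}{\lambda t}\partial f(\vx^{\star})$ and $\{\vb\in\partial g(\vv^{\star}):\|\vb\|_2=t\}$ vary Lipschitz-continuously for $t$ bounded away from $0$), again giving $1-2\exp(-\epsilon^2/16)$ after collecting constants.

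The step I expect to be the main obstacle is the second-moment matching forced by Gordon's Lemma: the auxiliary scalar $g_0\|\vb\|_2$ has to be introduced and then threaded through the maximum over $\partial g(\vv^{\star})$ --- whose elements have variable, non-unit norm --- without degrading the probability estimate, and the careful handling of this term is what produces the factor $2$ in $1-2\exp(-\epsilon^2/16)$. A secondary technical point is the uniformity over $t\in[\alpha,\beta]$ needed in the failure direction, which is manageable because $[\alpha,\beta]$ is compact and bounded away from $0$.
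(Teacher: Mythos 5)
Your overall architecture is the same as the paper's: reduce success to the polar condition \eqref{polar_s_con} and failure to the subdifferential condition of Lemma \ref{lem: success and failure}, rewrite both as min--max bilinear forms, decouple them with Gordon's Lemma (Fact \ref{Grodon lem}) by adding an auxiliary scalar Gaussian, evaluate the mean of the decoupled functional via $\ip{\vh}{\vu}=\tfrac12(\|\vh\|_2^2+1-\|\vh-\vu\|_2^2)$ for unit $\vu$, and finish with Gaussian concentration for $\sqrt2$-Lipschitz functions; the mean computation and the deterministic choice of $t^\star$ in the success branch also coincide with the paper's.

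However, two of the mechanisms you describe would not deliver the stated probability $1-2\exp(-\epsilon^2/16)$. First, the auxiliary Gaussian: you propose truncating to $\{|g_0|\le\rho\}$, absorbing a loss $\rho\beta$, and tuning $\rho$ against a Gaussian tail. Since $\vb$ ranges over $\partial g(\vv^\star)$, the loss is indeed $\rho\beta$, and $\beta$ can be of order $\sqrt m$ (for $\ell_1$ corruption $\beta=\sqrt m$); to keep $\rho\beta$ below the available slack $\epsilon/2$ you must take $\rho\lesssim\epsilon/\beta$, so the tail term $\exp(-\rho^2/2)$ acquires a dimension-dependent exponent, and even after renormalizing the index set to the unit sphere the constants fall short of the claimed $16$. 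The factor $2$ in the theorem does not come from truncation at all: after applying Gordon's Lemma one conditions on the sign of the auxiliary variable, using
$\Pr{\cdots+\|\vu_2\|_2\, g>0}\le\tfrac12+\tfrac12\Pr{\cdots+\|\vu_2\|_2\, g>0\mid g\le0}\le\tfrac12+\tfrac12\Pr{\cdots>0}$,
which removes the auxiliary term at zero cost and yields exactly $\Pr{\text{coupled event}}\ge 2\Pr{\text{decoupled}>0}-1$; this sign-conditioning step is the missing idea. Second, in the failure branch you invoke a net over $t\in[\alpha,\beta]$ with a union bound. Besides requiring control of the Lipschitz constant in $t$ (which involves $\|\vg\|_2$ and $\|\vh\|_2$), a union bound multiplies the exceptional probability by the net cardinality and cannot produce the clean $2\exp(-\epsilon^2/16)$. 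No net is needed: once you parametrize $t=\|\vb\|_2$, the decoupled quantity
$\min_{\va\in\partial f(\vx^\star),\,\vb\in\partial g(\vv^\star)}\big[\|\vg-\tfrac{\sqrt m}{\lambda\|\vb\|_2}\va\|_2+\ip{\vh}{\vb/\|\vb\|_2}\big]$
is a single $\sqrt2$-Lipschitz function of $(\vg,\vh)$ (Lemma \ref{Lipschitz function}), so one application of Fact \ref{gaussian concentration} suffices, which is how the paper proceeds (it works with a minimizing scale $t_4$ rather than a discretization). With these two steps repaired, your argument reduces to the paper's proof.
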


\begin{remark}[Related works]
  In \cite{foygel2014corrupted}, Foygel and Mackey have shown that, under Gaussian measurements, when
  $$\sqrt{m} \geq 2\cdot \sqrt{\eta^2(\lambda_1 \cdot \partial f(\vx^{\star}))} + \sqrt{\eta^2(\lambda_2 \cdot \partial g(\vv^{\star}))} + 3\sqrt{2\pi} + \frac{1}{\sqrt{2}} + \frac{1}{\sqrt{2\pi}}+\epsilon,$$
  the penalized problem \eqref{penilized_procedure} succeeds with probability at least $1-\exp(-\epsilon^2/2)$. Here $\lambda = \lambda_2/\lambda_1$. Very recently, Chen and Liu \cite{Jinchi2018Stable} have illustrated that, under sub-Gaussian measurements, when
  $$ m \geq CK^4 \left[ \eta^2(\lambda_1 \cdot \partial f(\vx^{\star}))+ \eta^2(\lambda_2 \cdot \partial g(\vv^{\star})) \right],$$
  the penalized problem \eqref{penilized_procedure} succeeds with high probability. Here $C$ is an absolute constant and $K$ is the upper bound for the sub-Gaussian norm of rows of the sensing matrix. These two sufficient conditions for successful recovery are demonstrated to be unsharp by their numerical experiments.  To the best of our knowledge, the present results (Theorem \ref{them: Phase transition}) first establish the complete phase transition theory of the penalized recovery procedure \eqref{penilized_procedure}, which closes an important open problem in the literature, see e.g., \cite{mccoy2014sharp}, \cite{foygel2014corrupted}, and \cite{Jinchi2018Stable}.

\end{remark}

\subsubsection{How to evaluate the critical point $\mathscr{C}_p(\lambda)$?}
Theorem \ref{them: Phase transition} has suggested that the phase transition of the penalized recovery procedure occurs around
$$\mathscr{C}_p(\lambda)=\min_{\alpha\leq t\leq \beta}   2 \cdot \zeta\left(\frac{\sqrt{m}}{\lambda t}\partial f(\vx^\star)\right)+ \eta^2\left(\frac{1}{t}\partial g(\vv^\star)\cap \S^{m-1}\right) -1.$$
The next important question is how to calculate $\mathscr{C}_p(\lambda)$ accurately. To this end, we have the following lemma.
\begin{lemma}\label{calculation_Delta_lambda}
	The quantity $\mathscr{C}_p(\lambda)$ can be bounded as
	\begin{align*}
	&\min_{\alpha\leq t\leq \beta} \left[2\cdot\sqrt{\eta^2\left(\frac{\sqrt{m}}{\lambda t}\partial f(\vx^\star)\right)-1}-2\cdot\omega\left(\frac{1}{t}\partial g(\vv^\star)\cap \S^{m-1}\right)+m\right]\leq\mathscr{C}_p(\lambda)\\
	&\hspace{150pt}\leq\min_{\alpha\leq t\leq \beta} \left[2\cdot\sqrt{\eta^2\left(\frac{\sqrt{m}}{\lambda t}\partial f(\vx^\star)\right)}-2\cdot\omega\left(\frac{1}{t}\partial g(\vv^\star)\cap \S^{m-1}\right)+m\right].
	\end{align*}
\end{lemma}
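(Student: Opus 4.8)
The plan is to reduce the whole statement to an elementary, pointwise-in-$t$ computation, combined with the two-sided bound \eqref{Relationship}. First I would rewrite the term $\eta^2\big(\tfrac1t\partial g(\vv^\star)\cap\S^{m-1}\big)$ in closed form. Fix $t\in[\alpha,\beta]$ and abbreviate $\SS_t:=\tfrac1t\partial g(\vv^\star)\cap\S^{m-1}$. Since $\partial g(\vv^\star)$ is a nonempty, compact, convex set, the continuous map $\vb\mapsto\|\vb\|_2$ carries it onto the closed interval $[\alpha,\beta]$, so $\SS_t$ is nonempty (and compact) for every such $t$. Every $\vy\in\SS_t$ satisfies $\|\vy\|_2=1$, hence for any $\vg\in\R^m$,
$$\dist^2(\vg,\SS_t)=\inf_{\vy\in\SS_t}\big(\|\vg\|_2^2-2\ip{\vg}{\vy}+1\big)=\|\vg\|_2^2+1-2\sup_{\vy\in\SS_t}\ip{\vg}{\vy}.$$
Taking $\vg\sim\NN(0,\mI_m)$ and using $\E\|\vg\|_2^2=m$ gives $\eta^2(\SS_t)=m+1-2\,\omega(\SS_t)$, that is, $\eta^2(\SS_t)-1=m-2\,\omega(\SS_t)$. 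Substituting this identity into the definition of $\mathscr{C}_p(\lambda)$ yields
$$\mathscr{C}_p(\lambda)=\min_{\alpha\leq t\leq\beta}\Big[\,2\,\zeta\big(\tfrac{\sqrt m}{\lambda t}\partial f(\vx^\star)\big)+m-2\,\omega(\SS_t)\,\Big].$$

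Next I would control the remaining $\zeta$-term using \eqref{Relationship} applied to the set $\tfrac{\sqrt m}{\lambda t}\partial f(\vx^\star)$, which gives, for each fixed $t\in[\alpha,\beta]$,
$$\sqrt{\eta^2\big(\tfrac{\sqrt m}{\lambda t}\partial f(\vx^\star)\big)-1}\ \leq\ \zeta\big(\tfrac{\sqrt m}{\lambda t}\partial f(\vx^\star)\big)\ \leq\ \sqrt{\eta^2\big(\tfrac{\sqrt m}{\lambda t}\partial f(\vx^\star)\big)}.$$
Multiplying by $2$ and adding the common term $m-2\,\omega(\SS_t)$ sandwiches the bracketed quantity in the display above between exactly the two brackets that appear in the statement of the lemma. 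Since taking a minimum over $t\in[\alpha,\beta]$ preserves pointwise inequalities (if $h_1\le h_2\le h_3$ on $[\alpha,\beta]$ then $\min h_1\le\min h_2\le\min h_3$), the middle minimum is precisely $\mathscr{C}_p(\lambda)$, and I obtain the claimed chain of inequalities.

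I do not expect any genuine obstacle: the heart of the argument is just the observation that on the unit sphere $\dist^2(\vg,\cdot)$ is an affine function of a supremum, together with the already-established inequality \eqref{Relationship}. The only points deserving a word of care are (i) the nonemptiness and compactness of $\SS_t$ for $t\in[\alpha,\beta]$, which I handle via the intermediate-value property of $\vb\mapsto\|\vb\|_2$ on the connected set $\partial g(\vv^\star)$ and which also guarantees that the supremum in the displays is attained and measurable; and (ii) that $\eta^2(\cdot)\ge 1$, so the left-hand square root is real --- this is immediate from Fact \ref{Variance of Lipschitz}, which underlies \eqref{Relationship}.
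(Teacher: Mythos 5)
Your proposal is correct and follows essentially the same route as the paper's proof: it expands $\dist^2(\vh,\tfrac1t\partial g(\vv^\star)\cap\S^{m-1})$ using the unit-norm constraint to obtain the term $m-2\,\omega(\tfrac1t\partial g(\vv^\star)\cap\S^{m-1})$, and then sandwiches $\zeta(\tfrac{\sqrt m}{\lambda t}\partial f(\vx^\star))$ via \eqref{Relationship} before taking the minimum over $t$. The extra remarks on nonemptiness of the spherical slice and realness of the square root are side observations the paper does not dwell on, and they do not change the argument.
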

\begin{proof}
	Note that
	\begin{align}\label{bound of delta_lbd}
	\mathscr{C}_p(\lambda)&=\min_{\alpha\leq t\leq \beta} \E\left[2\cdot\dist\left(\vg,\frac{\sqrt{m}}{\lambda t}\partial f(\vx^\star)\right)+\dist^2\left(\vh,\frac{1}{t}\partial g(\vv^\star)\cap \S^{m-1}\right)-1\right] \notag \\
	&=\min_{\alpha\leq t\leq \beta} \E\left[2\cdot\dist\left(\vg,\frac{\sqrt{m}}{\lambda t}\partial f(\vx^\star)\right)+\min_{\vb\in\frac{1}{t}\partial g(\vv^\star)\cap \S^{m-1}}\|\vh-\vb\|_2^2-1\right] \notag\\
	&=\min_{\alpha\leq t\leq \beta} \E\left[2\cdot\dist\left(\vg,\frac{\sqrt{m}}{\lambda t}\partial f(\vx^\star)\right)-2\max_{\vb\in\frac{1}{t}\partial g(\vv^\star)\cap \S^{m-1}}\ip{\vh}{\vb}+m\right].
	\end{align}
	It follows from \eqref{Relationship} that
	\begin{align*}
	\sqrt{\eta^2\left(\frac{\sqrt{m}}{\lambda t}\partial f(\vx^\star)\right)-1}&\leq \zeta\left(\frac{\sqrt{m}}{\lambda t}\partial f(\vx^\star)\right) \leq \sqrt{\eta^2\left(\frac{\sqrt{m}}{\lambda t}\partial f(\vx^\star)\right)}.
	\end{align*}
	Substituting the above bound into \eqref{bound of delta_lbd} completes the proof.
\end{proof}

Lemma \ref{calculation_Delta_lambda} demonstrates that $\mathscr{C}_p(\lambda)$ can be accurately estimated by
\begin{equation}\label{cal_ph of pen}
\min_{\alpha\leq t\leq \beta} \left[2\cdot\sqrt{\eta^2\left(\frac{\sqrt{m}}{\lambda t}\partial f(\vx^\star)\right)}-2\cdot\omega\left(\frac{1}{t}\partial g(\vv^\star)\cap \S^{m-1}\right)+m\right].
\end{equation}
Thus it is sufficient to estimate $\eta^2\left(\frac{\sqrt{m}}{\lambda t}\partial f(\vx^\star)\right)$ and $\omega\left(\frac{1}{t}\partial g(\vv^\star)\cap \S^{m-1}\right)$. There are also some standard methods to estimate these two quantities, see e.g., \cite{chandrasekaran2012convex, amelunxen2014living, foygel2014corrupted}. To illustrate this result \eqref{cal_ph of pen}, we also consider two typical examples: sparse signal recovery from sparse corruption and low-rank matrix recovery from sparse corruption. In the first example, we assume the signal $\vx^{\star} \in \R^{n}$ and the corruption $\vv^{\star} \in \R^{m}$ are $s$-sparse and $k$-sparse vectors, then we can obtain (see Appendix \ref{cal} for details)
\begin{align*}
\eta^2\left(\frac{\sqrt{m}}{\lambda t}\partial \|\vx^{\star}\|_1\right)=s\left(1+\frac{m}{\lambda^2t^2}\right)+\frac{2(n-s)}{\sqrt{2\pi}}\left(\left(1+\frac{m}{\lambda^2t^2}\right)\int_{\frac{\sqrt{m}}{\lambda t}}^{\infty}e^{-x^2/2}dx-\frac{\sqrt{m}}{\lambda t}e^{-\frac{m}{2\lambda^2t^2}}\right)
\end{align*}
and
\begin{align*}
\omega\left(\frac{1}{t}\partial \|\vv^\star\|_1\cap \S^{m-1}\right)=\sqrt{\frac{2}{\pi}(m-k)\left(1-\frac{k}{t^2}\right)}.
\end{align*}
The parameter $t$ in \eqref{cal_ph of pen} takes value from $\alpha=\sqrt{k}$ to $\beta=\sqrt{m}$. Consider the example of low-rank matrix recovery from sparse corruption, the signal  $\mX^{\star}\in \R^{n\times n}$ is an $r$-rank matrix, the first Gaussian squared distance in \eqref{cal_ph of pen} can be calculated as
\begin{align*}
\eta^2\left(\frac{\sqrt{m}}{\lambda t}\partial\|\mX\|_*\right)=r\left(2n-r+\frac{m}{\lambda^2t^2}\right)+\E\sum_{i=1}^{n-r}\textrm{shrink}\left(\sigma_i(\mG_2),\frac{\sqrt{m}}{\lambda t}\right)^2.
\end{align*}
where $\mG_2$ is an $(n-r)\times(n-r)$ standard Gaussian matrix, and $\sigma_i(\mG_2)$ is the $i$-th largest singular value of $\mG_2$.  The calculations of the Gaussian width of the corruption and the range of parameter $t$ are similar to the first example. In addition, it is possible to estimate the Gaussian distance and Gaussian width numerically by approximating the expectations in their definitions with empirical averages.

\begin{figure*}
	\centering
	\includegraphics[width= .6\textwidth]{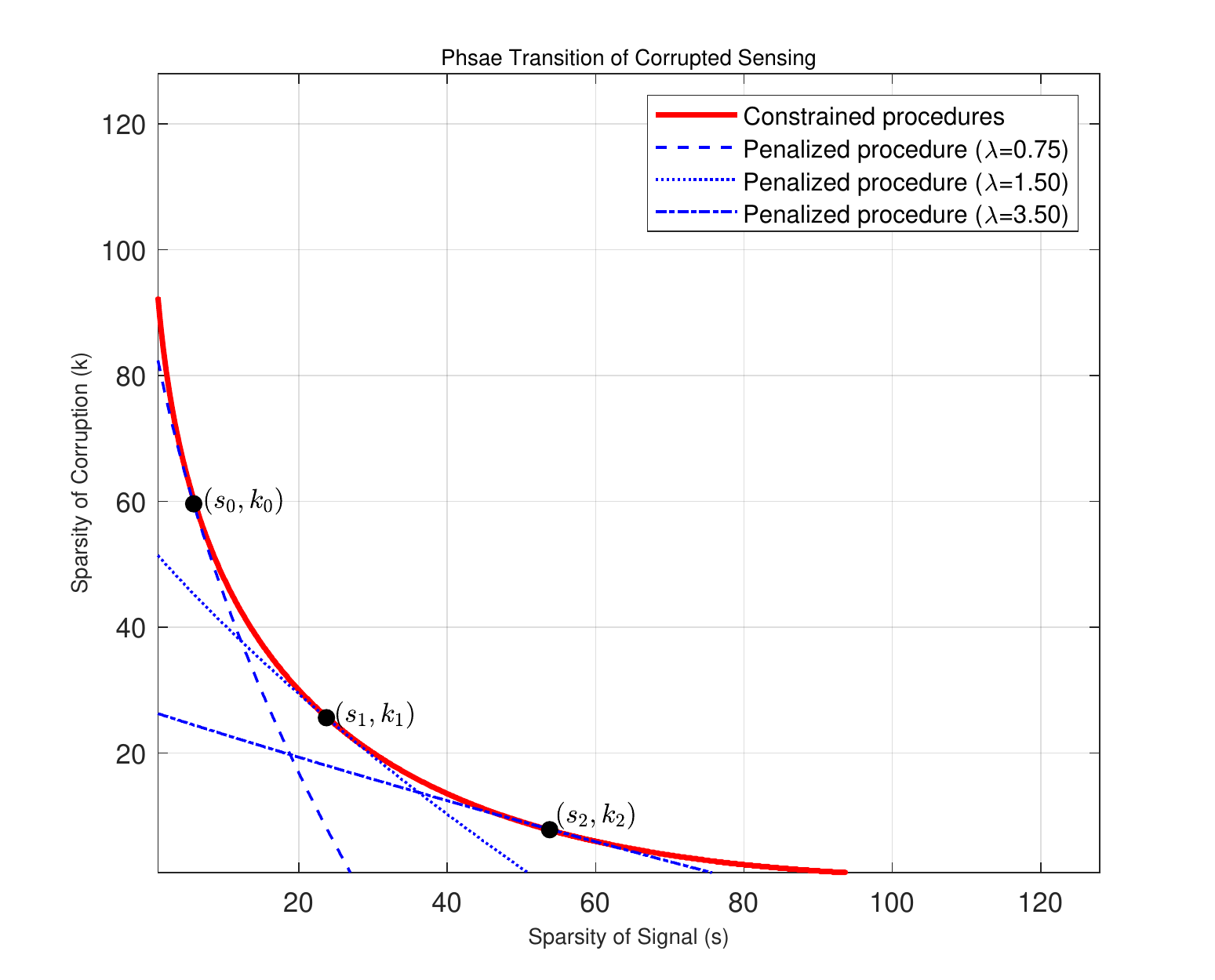}
	\caption{We assume $\vx^\star\in\R^{128}$ is an $s$-sparse vector and $\vv^\star\in\R^{128}$ is a $k$-sparse vector. $f(\cdot)$ and $g(\cdot)$ are set to be the $\ell_1$-norm. Fix the sample size $m=128$. The solid red line corresponds to the phase transition threshold of the constrained procedures: $m=\mathscr{C}_p$, the dashed blue lines correspond to the phase transition thresholds of the penalized recovery procedure with different $\lambda$s: $m=\mathscr{C}_p(\lambda)-\epsilon$. The recovery of both signal and corruption is likely to succeed when the sparsity pair $(s, k)$ lies below the phase thresholds, the recovery is likely to fail when $(s, k)$ lies above the phase thresholds. It is not hard to find that: (I). The successful areas of the penalized recovery procedure with different $\lambda$s are always smaller than that of the constrained methods; (II). Even for the critical points (e.g., $(s, k)=(s_0,k_0), (s_1,k_1), (s_2,k_2)$) that lie on the phase transition threshold of the constrained recovery procedures (which might represent the reconstruction limit of the constrained procedures), we can still choose some corresponding tradeoff parameters (e.g., $\lambda=0.75, 1.50, 3.50$) such that the penalized problem succeeds too (with similar probability). Thus, the successful area of the constrained procedures can be regarded as the union of that of the penalized one with different $\lambda$s.}
	\label{fig:theory}
\end{figure*}

\subsection{Relationship between Constrained and Penalized Recovery Procedures and Optimal Choice of $\lambda$}
	\label{relation between con and pen}
	The theory of Lagrange multipliers \cite[Section 28]{rockafellar1970convex} asserts that solving the constrained recovery procedures is essentially equivalent to solving the penalized problem with a best choice of the tradeoff parameter $\lambda$. More precisely, this equivalence consists of the following two aspects \cite[Appendix A]{mccoy2014sharp}. On the one hand,
\begin{itemize}
  \item [(I).] \emph{Suppose the penalized procedure \eqref{penilized_procedure} succeeds for some value $\lambda>0$. Then the constrained ones \eqref{constrained_procedure1} and \eqref{constrained_procedure2} succeed.}
\end{itemize}
On the other hand, as a partial converse to (I), one has
\begin{itemize}
  \item [(II).] \emph{Suppose that the subdifferentials $\partial f(\vx^\star)$ and $\partial g(\vv^\star)$ do not contain the origin. If the constrained procedures \eqref{constrained_procedure1} and \eqref{constrained_procedure2} succeed, then there exists a parameter $\lambda>0$ such that $(\vx^\star,\vv^\star)$ is an optimal point for the penalized one \eqref{penilized_procedure}.}
\end{itemize}

The above relations indicate that the performance of the constrained procedures can be interpreted as the best possible one for the penalized problem. However, the main difficulty in these results lies in how to select a suitable tradeoff parameter $\lambda$ that leads to this equivalence. Since we have identified the precise phase transitions of both constrained and penalized recovery procedures, it is possible to allow us to explore the relationship between these two kinds of approaches in a quantitative way, which in turn implies an explicit strategy to choose the optimal $\lambda$.

   \begin{theorem}[Relationship between constrained and penalized recovery procedures and optimal choice of $\lambda$]\label{relation_CC_PP}
		Assume that $\TT_f(\vx^{\star})$ and $\TT_g(\vv^{\star})$ are non-empty and closed, and that the subdifferentials $\partial f(\vx^\star)$ and $\partial g(\vv^\star)$ do not contain the origin.
		If $m\geq \mathscr{C}_p(\lambda)$, then we have
		\begin{align*}
		m \geq \mathscr{C}_p-1.
		\end{align*}
		On the other hand, if $m\geq \mathscr{C}_p$, then we can choose the tradeoff parameter as
    \begin{equation}\label{optimal_lambda}
       \lambda^\star=\arg\min_{\lambda>0}\mathscr{C}_p(\lambda),
    \end{equation}
such that \footnote{As shown in the proof of Theorem \ref{relation_CC_PP}, the gap $5$ can be easily reduced by introducing an extra condition $\omega(\TT_f(\vx^\star)\cap\S^{n-1})\geq 4$, namely, if we further let $\omega(\TT_f(\vx^\star)\cap\S^{n-1})\geq 4$, then we have $m\geq \mathscr{C}_p(\lambda^\star)-1$. It is worth noting that this condition is easy to satisfy in practical applications.}
		\begin{align*}
		m\geq \mathscr{C}_p(\lambda^\star)-5.
		\end{align*}
	\end{theorem}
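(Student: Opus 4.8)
The plan is to prove both halves by directly comparing the thresholds $\mathscr{C}_p$ and $\mathscr{C}_p(\lambda)$ through the standard dictionary between Gaussian widths of tangent/normal cones and Gaussian (squared) distances to scaled subdifferentials, i.e.\ Facts~\ref{relation_w_d} and \ref{bound of delta}, the sandwich \eqref{Relationship}, and the complementarity $\delta(\NN_g(\vv^\star))+\delta(\TT_g(\vv^\star))=m$ for statistical dimension. Throughout I use the elementary identity $\eta^2(\SS)=m+1-2\,\omega(\SS)$, valid for any $\SS\subseteq\S^{m-1}$, which lets me rewrite the $g$--term of $\mathscr{C}_p(\lambda)$. \emph{First inequality ($m\ge\mathscr{C}_p(\lambda)\Rightarrow m\ge\mathscr{C}_p-1$):} fix $\lambda>0$ and let $t_0\in[\alpha,\beta]$ attain the minimum defining $\mathscr{C}_p(\lambda)$. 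Since $\tfrac1{t_0}\partial g(\vv^\star)\cap\S^{m-1}$ lies on the sphere, the identity above turns the hypothesis $m\ge\mathscr{C}_p(\lambda)$ into $\zeta\!\left(\tfrac{\sqrt m}{\lambda t_0}\partial f(\vx^\star)\right)\le\omega\!\left(\tfrac1{t_0}\partial g(\vv^\star)\cap\S^{m-1}\right)$. I then bound each side against the corresponding cone quantity: by \eqref{Relationship} and Facts~\ref{relation_w_d}--\ref{bound of delta},
\[
\zeta\!\left(\tfrac{\sqrt m}{\lambda t_0}\partial f(\vx^\star)\right)^2\ge\eta^2\!\left(\tfrac{\sqrt m}{\lambda t_0}\partial f(\vx^\star)\right)-1\ge\min_{s\ge0}\eta^2\!\left(s\,\partial f(\vx^\star)\right)-1\ge\omega^2\!\left(\TT_f(\vx^\star)\cap\S^{n-1}\right)-1,
\]
while the inclusion $\tfrac1{t_0}\partial g(\vv^\star)\cap\S^{m-1}\subseteq\NN_g(\vv^\star)\cap\S^{m-1}$ together with Fact~\ref{relation_w_d} and complementarity gives $\omega^2\!\left(\tfrac1{t_0}\partial g(\vv^\star)\cap\S^{m-1}\right)\le\delta(\NN_g(\vv^\star))=m-\delta(\TT_g(\vv^\star))\le m-\omega^2(\TT_g(\vv^\star)\cap\S^{m-1})$. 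Chaining these two displays yields $\omega^2(\TT_f(\vx^\star)\cap\S^{n-1})-1\le m-\omega^2(\TT_g(\vv^\star)\cap\S^{m-1})$, i.e.\ $m\ge\mathscr{C}_p-1$.

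\emph{Second inequality ($m\ge\mathscr{C}_p\Rightarrow m\ge\mathscr{C}_p(\lambda^\star)-5$):} since $\mathscr{C}_p(\lambda^\star)=\min_{\lambda>0}\mathscr{C}_p(\lambda)$, it suffices to produce a single $\lambda_0>0$ with $\mathscr{C}_p(\lambda_0)\le m+5$. I would pick $t_0\in[\alpha,\beta]$ to (nearly) maximize $\omega\!\left(\tfrac1t\partial g(\vv^\star)\cap\S^{m-1}\right)$ and then $\lambda_0$ so that $\tfrac{\sqrt m}{\lambda_0 t_0}$ minimizes $\zeta(\,\cdot\,\partial f(\vx^\star))$ over $s>0$; plugging $t=t_0$ into the definition and using the sphere identity,
\[
\mathscr{C}_p(\lambda_0)\le 2\,\zeta\!\left(\tfrac{\sqrt m}{\lambda_0 t_0}\partial f(\vx^\star)\right)+m-2\,\omega\!\left(\tfrac1{t_0}\partial g(\vv^\star)\cap\S^{m-1}\right).
\]
For the $f$--term, $\zeta\!\left(\tfrac{\sqrt m}{\lambda_0 t_0}\partial f(\vx^\star)\right)\le\sqrt{\min_{s\ge0}\eta^2(s\,\partial f(\vx^\star))}\le\sqrt{\omega^2(\TT_f(\vx^\star)\cap\S^{n-1})+c_1}\le\omega(\TT_f(\vx^\star)\cap\S^{n-1})+\sqrt{c_1}$ by \eqref{Relationship} and Facts~\ref{relation_w_d}--\ref{bound of delta}, where $c_1<1$ is the additive gap there. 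For the $g$--term I combine (i) a concentration lemma $\omega\!\left(\tfrac1{t_0}\partial g(\vv^\star)\cap\S^{m-1}\right)\ge\omega(\NN_g(\vv^\star)\cap\S^{m-1})-c_2$ with (ii) the chain $\omega^2(\NN_g(\vv^\star)\cap\S^{m-1})\ge\delta(\NN_g(\vv^\star))-1=m-\delta(\TT_g(\vv^\star))-1\ge m-\omega^2(\TT_g(\vv^\star)\cap\S^{m-1})-2\ge\omega^2(\TT_f(\vx^\star)\cap\S^{n-1})-2$, whose last step uses $m\ge\mathscr{C}_p$; hence $\omega\!\left(\tfrac1{t_0}\partial g(\vv^\star)\cap\S^{m-1}\right)\ge\omega(\TT_f(\vx^\star)\cap\S^{n-1})-\sqrt2-c_2$. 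Substituting, the $\omega(\TT_f(\vx^\star)\cap\S^{n-1})$ contributions cancel and $\mathscr{C}_p(\lambda_0)\le m+2\sqrt{c_1}+2\sqrt2+2c_2\le m+5$; the sharper elementary inequalities $\sqrt{a^2+c}\le a+\tfrac{c}{2a}$ and $\sqrt{a^2-2}\ge a-\tfrac1a$ (applicable when $a=\omega(\TT_f(\vx^\star)\cap\S^{n-1})\ge4$) produce the improved bound $m+1$ announced in the footnote.

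\emph{Main obstacle.} Every step above is routine bookkeeping with the geometric-quantity dictionary except the concentration lemma (i) in the second part, $\max_{\alpha\le t\le\beta}\omega\!\left(\tfrac1t\partial g(\vv^\star)\cap\S^{m-1}\right)\ge\omega(\NN_g(\vv^\star)\cap\S^{m-1})-O(1)$. Writing $\NN_g(\vv^\star)\cap\S^{m-1}=\bigcup_{t\in[\alpha,\beta]}\big(\tfrac1t\partial g(\vv^\star)\cap\S^{m-1}\big)$ and $\omega(\NN_g(\vv^\star)\cap\S^{m-1})=\E\sup_{t}f_t(\vg)$ with $f_t(\vg)=\sup_{\vu\in\frac1t\partial g(\vv^\star)\cap\S^{m-1}}\ip{\vg}{\vu}$, the statement asks to exchange $\E\sup_t$ with $\sup_t\E$ at $O(1)$ cost. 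The leverage is that the index set is the one-dimensional interval $[\alpha,\beta]$ and that for a given $\vg$ the optimal slice parameter (dictated by the direction of $\Pi_{\NN_g(\vv^\star)}(\vg)$) is a Lipschitz, hence concentrated, function of $\vg$, so freezing $t$ at its typical value forfeits only a constant; turning this into a proof requires controlling the modulus of continuity of $t\mapsto f_t(\vg)$ and applying Gaussian concentration (Fact~\ref{gaussian concentration}), and this is where I expect the real work to lie.
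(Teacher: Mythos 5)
Your first implication is correct and is, in substance, the paper's own argument: after the identity $\eta^2(\SS)=m+1-2\,\omega(\SS)$ for $\SS\subset\S^{m-1}$, you compare the $f$-term with $\omega^2(\TT_f(\vx^\star)\cap\S^{n-1})$ and the $g$-term with $m-\omega^2(\TT_g(\vv^\star)\cap\S^{m-1})$, which is exactly what the paper does (it phrases the $f$-side comparison via $\dist(\vg,s\,\partial f(\vx^\star))\ge\dist(\vg,\NN_f(\vx^\star))$ and Facts \ref{omega & dist}, \ref{relation_w_d}, \ref{Gs width of normal cone} rather than via Fact \ref{bound of delta}; since Fact \ref{bound of delta} is stated only for norms, you should route this step through the cone bound as well, but that is a cosmetic repair).

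The second implication is where the proposal has genuine gaps. First, the claim $\min_{s\ge0}\eta^2(s\,\partial f(\vx^\star))\le\omega^2(\TT_f(\vx^\star)\cap\S^{n-1})+c_1$ with a universal $c_1<1$ is false: the additive error in Fact \ref{bound of delta} is $2\max\{\|\va\|_2:\va\in\partial f(\vx^\star)\}\,/\,f(\vx^\star/\|\vx^\star\|_2)$, which for $f=\|\cdot\|_1$ and an $s$-sparse $\vx^\star$ is at least $2\sqrt{n/s}$ — not an absolute constant (and Fact \ref{bound of delta} again presupposes $f$ is a norm, which the theorem does not assume). So your $f$-term estimate cannot deliver $m+5$. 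Second, your ``concentration lemma (i)'', $\max_{\alpha\le t\le\beta}\omega\bigl(\tfrac1t\partial g(\vv^\star)\cap\S^{m-1}\bigr)\ge\omega(\NN_g(\vv^\star)\cap\S^{m-1})-c_2$, is exactly the unproved crux, as you acknowledge; note also that your accounting $2\sqrt{c_1}+2\sqrt2+2c_2\le5$ leaves almost no room for $c_2$ once $c_1$ is anywhere near $1$, so even a correct exchange lemma with a modest constant would not close the argument as budgeted. The paper proceeds differently and never invokes Fact \ref{bound of delta}: using $\vzero\notin\partial f(\vx^\star)$ and $\vzero\notin\partial g(\vv^\star)$ it lower-bounds both terms of $\mathscr{C}_p(\lambda)$ by the cone quantities $\dist(\vg,\NN_f(\vx^\star))$ and $\dist^2(\vh,\NN_g(\vv^\star)\cap\S^{m-1})$, asserts that with $t^\star=\arg\min_t\eta^2(\tfrac1t\partial g(\vv^\star)\cap\S^{m-1})$ and $\lambda^\star=\arg\min_{\lambda>0}\zeta(\tfrac{\sqrt m}{\lambda t^\star}\partial f(\vx^\star))$ these lower bounds are attained, so that $\mathscr{C}_p(\lambda^\star)=\E[2\dist(\vg,\NN_f(\vx^\star))+\dist^2(\vh,\NN_g(\vv^\star)\cap\S^{m-1})-1]$, and then bounds this expression by Jensen's inequality, Moreau's decomposition (Fact \ref{Moreau}), and Facts \ref{omega & dist}--\ref{relation_w_d}, arriving at $(\mathscr{C}_p(\lambda^\star)-m)/2\le1+\sqrt2$. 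Your instinct that the minimum-versus-expectation exchange is the heart of the matter is well placed — the paper's ``attain their lower bounds simultaneously'' step is precisely that exchange, asserted rather than derived — but your proposal neither proves it nor circumvents it, and the Fact-\ref{bound of delta} shortcut you substitute on the $f$-side does not work.
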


Combining the phase transition results in Theorems \ref{them: Phase transition_con} and \ref{them: Phase transition}, the first part of Theorem \ref{relation_CC_PP} implies that if the penalized procedure \eqref{penilized_procedure} is likely to succeed, then the constrained procedures \eqref{constrained_procedure1} and \eqref{constrained_procedure2} succeed with high probability. Similarly, the second part of Theorem \ref{relation_CC_PP} conveys that if the constrained procedures \eqref{constrained_procedure1} and \eqref{constrained_procedure2} are likely to succeed, then we can choose the tradeoff parameter $\lambda$ as in \eqref{optimal_lambda} such that the penalized procedure \eqref{penilized_procedure} succeeds with high probability. Thus our results provide a quantitative characterization for the relations (I) and (II).

The results in Theorem \ref{relation_CC_PP} also enjoy a geometrical explanation in the phase transition program: The first part implies that the successful area of penalized recovery procedure should be smaller than that of the constrained procedures. The second part indicates that for any point in the successful area of the constrained recovery procedures, we can find at least a $\lambda$ such that this point also belongs to the successful area of the corresponding penalized recovery procedure. In other words, the successful area of the constrained procedures can be regarded as the union of that of the penalized one (with different $\lambda$s). Fig.\ref{fig:theory} illustrates this relationship in the case of sparse signal recovery from sparse corruption.

Moreover, Theorem \ref{relation_CC_PP} has suggested an explicit way to choose the best parameter $\lambda$ predicted by the Lagrange theory, i.e.,
$$
\lambda^\star=\arg\min_{\lambda>0}\mathscr{C}_p(\lambda),
$$
which is equivalent to
	\begin{align}\label{optimal_lbd}
	\lambda^\star=\arg\min_{\lambda>0}\zeta\left(\frac{\sqrt m}{\lambda t^\star} \partial f(\vx^{\star})\right) ~~ \textrm{with} ~~ t^\star=\arg\min_{\alpha\leq t\leq \beta}\eta^2\left(\frac{1}{t} \partial g(\vv^{\star})\cap\S^{m-1}\right).
	\end{align}
We provide some insights for this parameter selection strategy. Recall that Theorem \ref{them: Phase transition} has demonstrated that the penalized procedure succeeds with high probability if the number of measurements exceeds the critical point $\mathscr{C}_p(\lambda)$. Then the strategy \eqref{optimal_lambda} implies that we should pick the $\lambda$ which makes the number of observations required for successful recovery of the penalized procedure as small as possible. Another explanation comes from the relationship between these two kinds of recovery procedures. For a given corrupted sensing problem (with fixed $\vx^{\star}$ and $\vv^{\star}$), the first part of Theorem \ref{relation_CC_PP} indicates that the phase transition threshold of the penalized procedure is always bounded from below by that of constrained ones, it is natural to choose the $\lambda$ such that we can achieve the possibly smallest gap between these two thresholds i.e., $\lambda^\star=\arg\min_{\lambda>0}(\mathscr{C}_p(\lambda)-\mathscr{C}_p)$, which also leads to the strategy \eqref{optimal_lambda}.

	\begin{remark}[Related works]
		In \cite{foygel2014corrupted} and \cite{Jinchi2018Stable}, the authors also provide an explicit way to select the tradeoff parameter $\lambda$. Specifically, their results have shown that $\OO\left( \eta^2(\lambda_1 \cdot \partial f(\vx^{\star}))+ \eta^2(\lambda_2 \cdot \partial g(\vv^{\star})) \right)$ measurements are sufficient to guarantee the success of the penalized procedure. In order to achieve the smallest number of measurements, it is natural to choose the $\lambda$ as follows:
	\begin{equation}\label{Strategy}
		\lambda_1^{\ast}=\arg\min_{\lambda_1>0}\eta^2(\lambda_1 \cdot \partial f(\vx^{\star})),~~\lambda_2^{\ast}=\arg\min_{\lambda_2>0}\eta^2(\lambda_2 \cdot \partial g(\vv^{\star})),~~\textrm{and}~\lambda^{\ast}=\lambda_2^{\ast}/\lambda_1^{\ast}.
	\end{equation}
However, a visible mismatch between the penalized program with the strategy \eqref{Strategy} and the constrained ones has been observed in their numerical experiments. This suggests that the choice \eqref{Strategy} might not be optimal in the sense of the Lagrange theory. As shown in our simulations (Section \ref{simulations}), the empirical performance of the penalized procedure \eqref{penilized_procedure} with our optimal choice of the tradeoff parameter is nearly the same as that of the constrained convex procedures.  Thus, our strategy \eqref{optimal_lambda} solves another significant open problem in \cite{mccoy2014sharp}.
	\end{remark}

\section{Numerical Simulations}\label{simulations}
In this section, we perform a series of numerical experiments to verify our theoretical results. We consider two typical structured signal recovery problems: sparse signal recovery from sparse corruption and low-rank matrix recovery from sparse corruption. In each case, we employ both constrained and penalized recovery procedures to reconstruct the original signal and corruption. Throughout these experiments, the related convex optimization problems are solved by CVX Matlab package \cite{grant2008cvx,grant2008graph}. In addition to the Gaussian measurements, we also consider sub-Gaussian measurements \footnote{In fact, we have tested other distributions of $\mPhi$ such as sparse Rademacher distribution and Student's $t$ distribution, the obtained results are quite similar, so we omit them here.}.


\subsection{Phase Transition of the Constrained Recovery Procedures}
We first consider the empirical behavior of the constrained recovery procedures in the following two structured signal recovery problems.

\begin{figure*}
	\centering
	\subfigure{
		\includegraphics[width=0.48\textwidth]{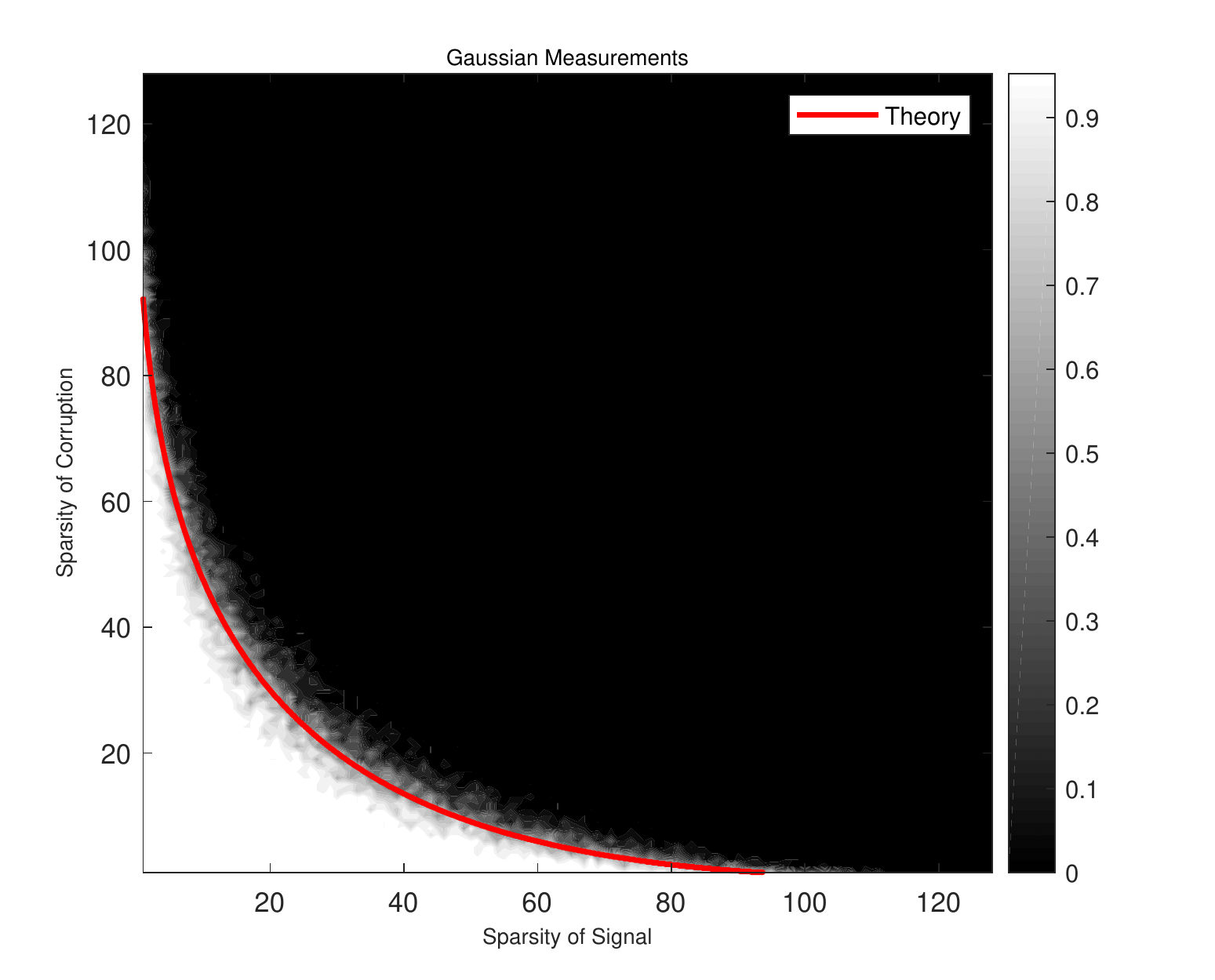}
		\includegraphics[width=0.48\textwidth]{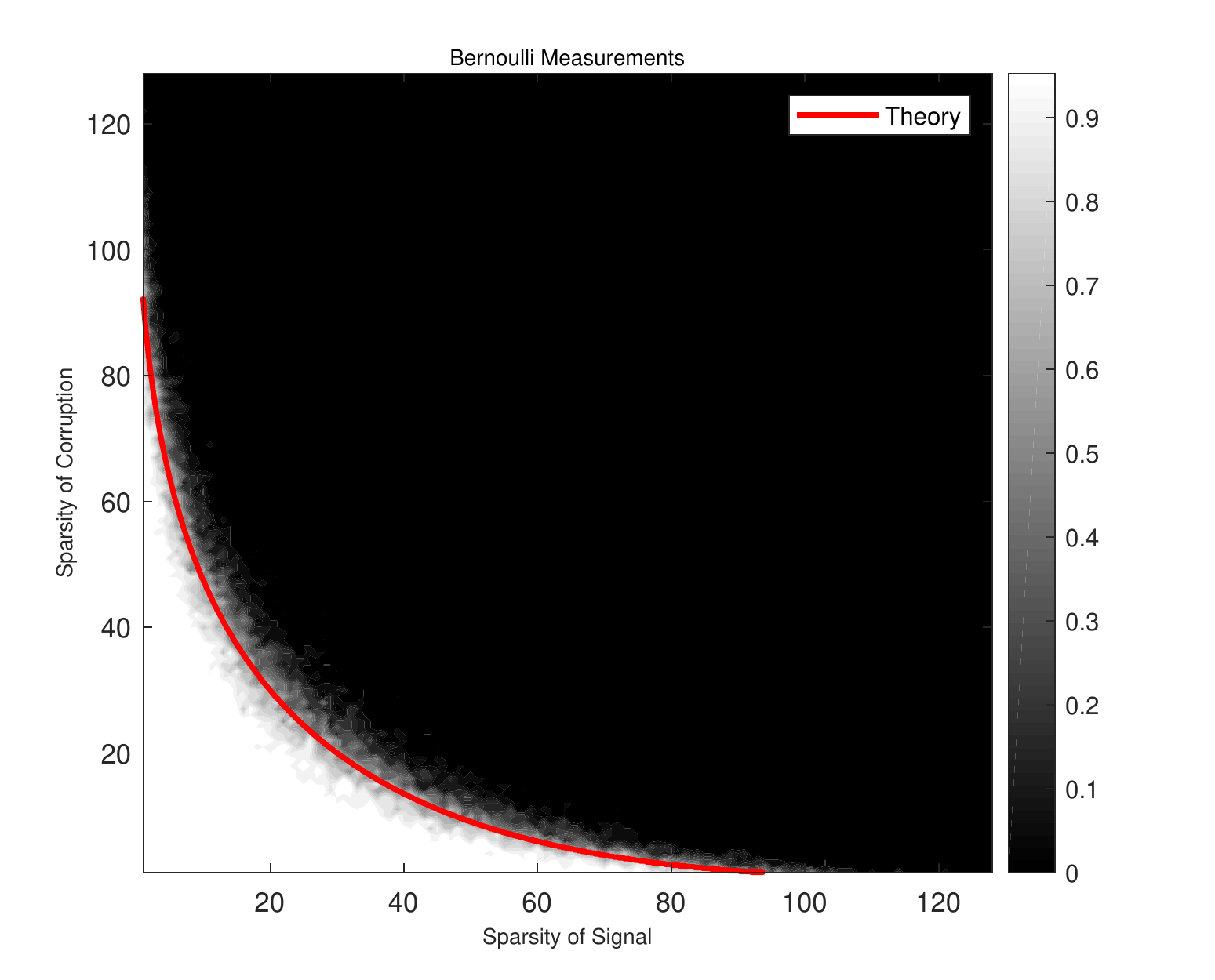}
	}
	\subfigure{
		\includegraphics[width=0.48\textwidth]{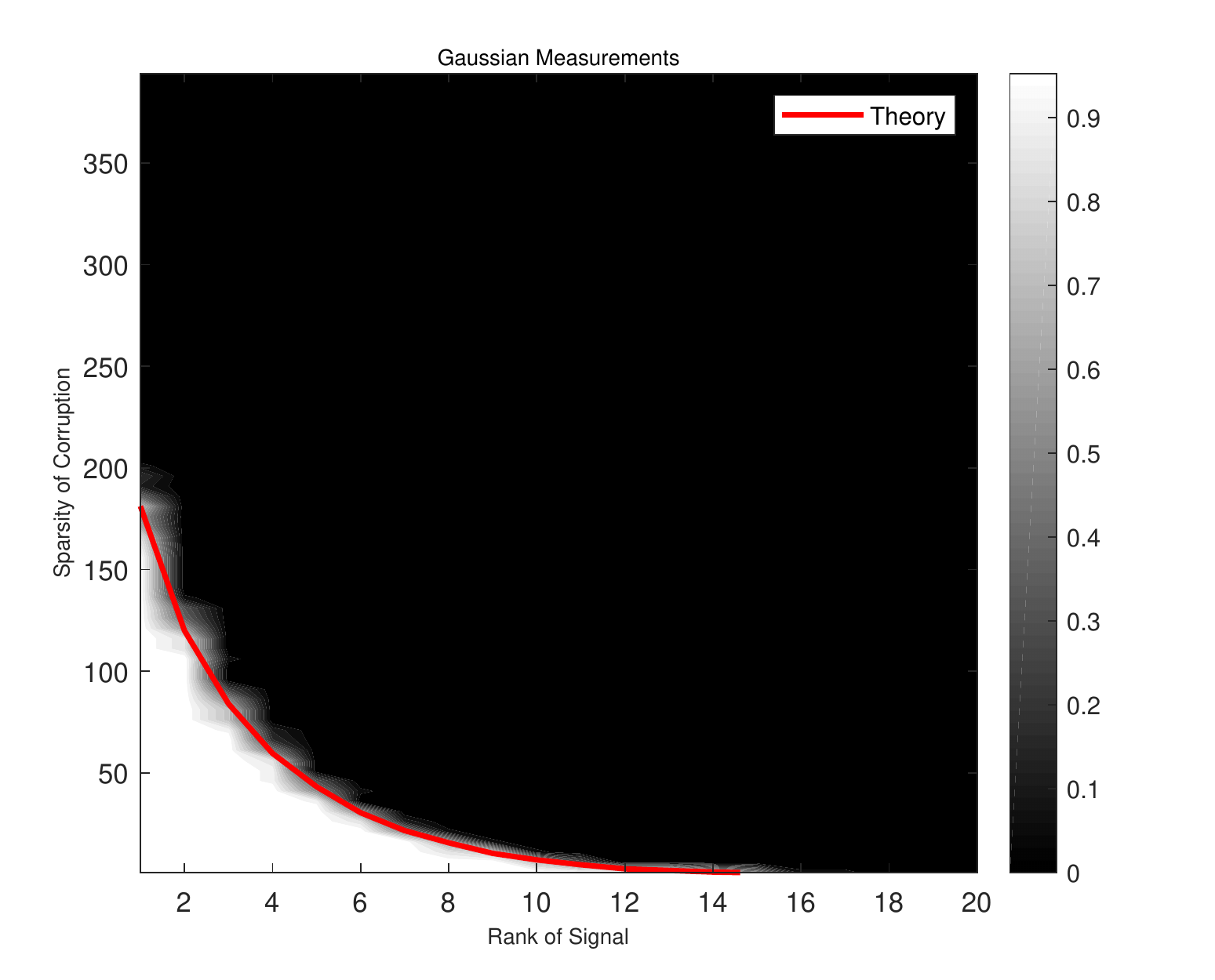}
		\includegraphics[width=0.48\textwidth]{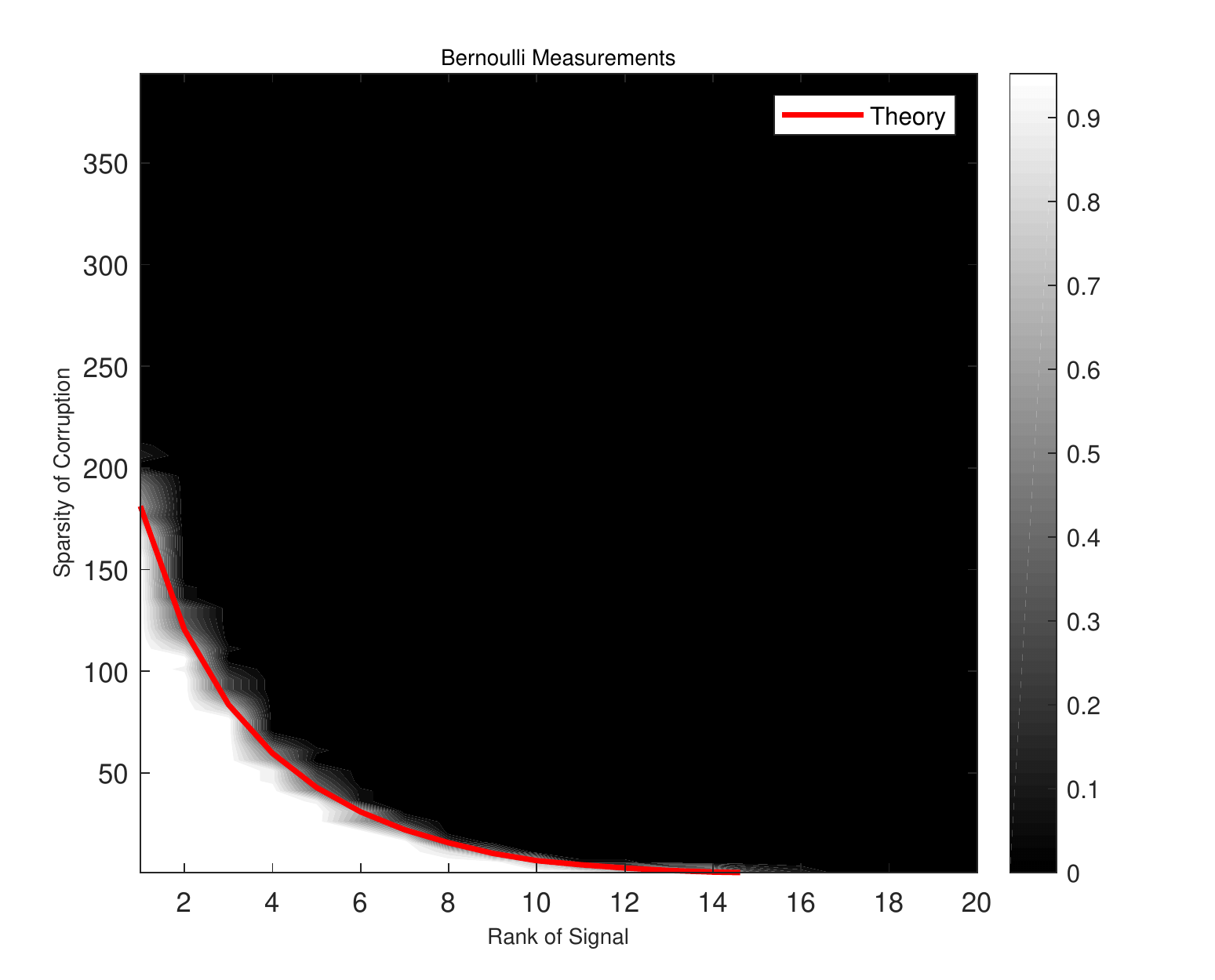}
	}
	\caption{Phase transitions of the constrained recovery procedures under Gaussian and Bernoulli measurements in both sparse signal recovery from sparse corruption and low-rank matrix recovery from sparse corruption. The red curves plot the phase transition thresholds predicted by \eqref{threshold}.}
	\label{fig:Constrained_Procedure}
\end{figure*}

\subsubsection{Sparse Signal Recovery from Sparse Corruption}
In this example, both signal and corruption are sparse, and we use the $\ell_1$-norm to promote their structures, i.e., $f(\vx^\star)=\|\vx^\star\|_1$ and $g(\vv^\star)=\|\vv^\star\|_1$. Suppose the $\ell_1$-norm of the true signal $\|\vx^\star\|_1$ are known beforehand. We fix the sample size and ambient signal dimension $m=n=128$. For each signal sparsity $s=1,2,...,128$ and each corruption sparsity $k=1,2,...,128$. We repeat the following experiments 20 times:
\begin{itemize}
	\item [(1)]
	Generate a signal vector $\vx^\star\in\R^n$ with $s$ non-zero entries and set the other $n-s$ entries to 0. The locations of the non-zero entries are uniformly selected among all possible supports, and nonzero entries are independently sampled from the normal distribution.
	\item [(2)]
	Similarly, generate a corruption vector $\vv^\star\in\R^m$ with $k$ non-zero entries and set the other $m-k$ entries to 0.
	\item [(3)]
	For Gaussian measurements, we draw the sensing matrix $\mPhi\in\R^{m\times n}$ with i.i.d. standard normal entries. For sub-Gaussian measurements, we draw the sensing matrix $\mPhi\in\R^{m\times n}$ with i.i.d. symmetric Bernoulli entries.
	\item [(4)]
	Solve the following constrained optimization problem \eqref{constrained_procedure2}:
	\begin{align*}
	(\hat{\vx}, \hat{\vv})=\arg\min_{\vx, \vv} ~\|\vv\|_1,\quad\text{s.t.~}&\vy=\mPhi\vx+\sqrt{m}\vv,~~\|\vx\|_1\leq \|\vx^{\star}\|_1.
	\end{align*}
	\item [(5)]
	Set $tol=10^{-3}$. Declare success if $\|\hat{\vx}-\vx^{\star}\|_2/\|\vx^{\star}\|_2\leq tol$.
	
\end{itemize}

\begin{figure*}
	\centering
	\subfigure{
		\includegraphics[width=0.48\textwidth]{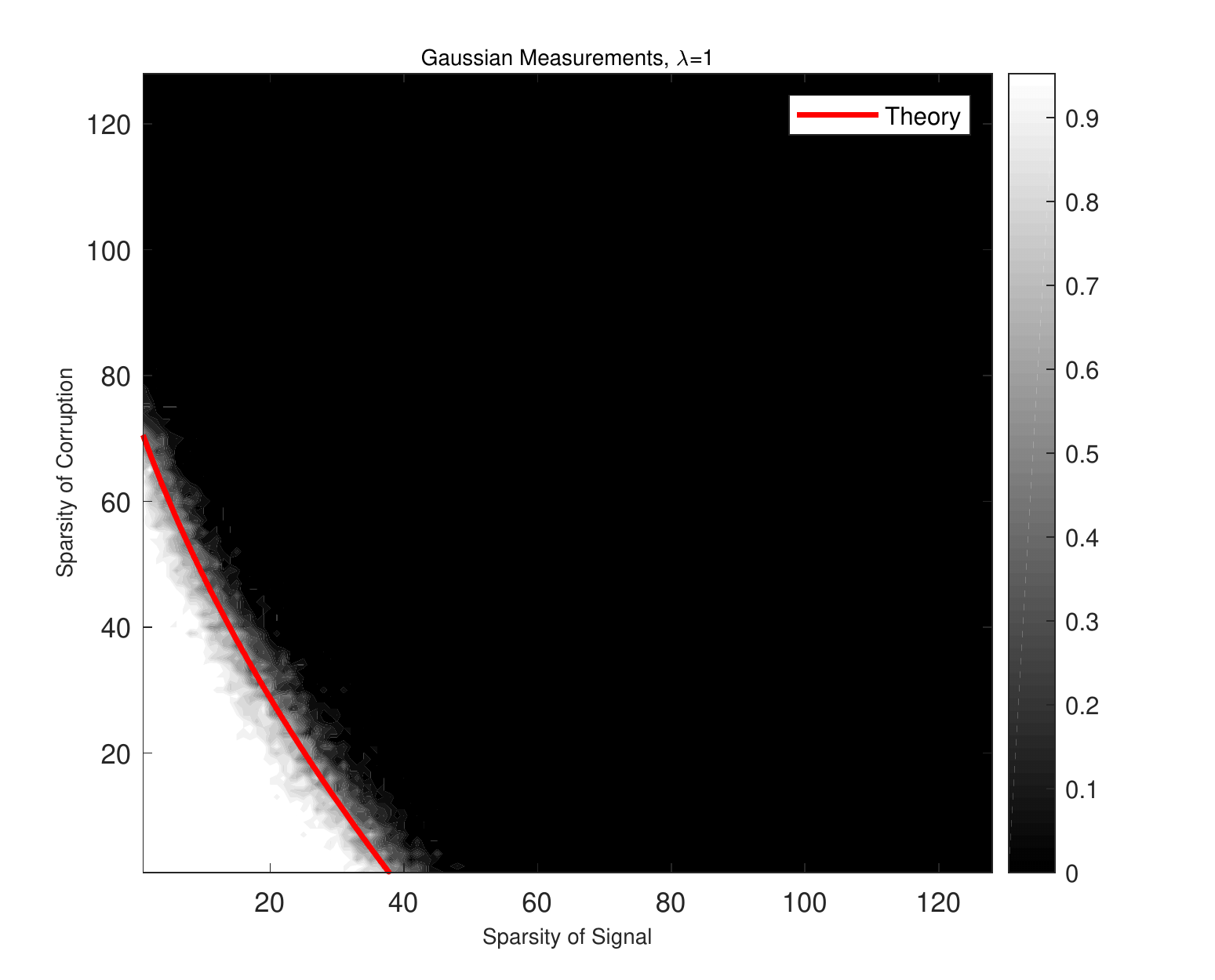}
		\includegraphics[width=0.48\textwidth]{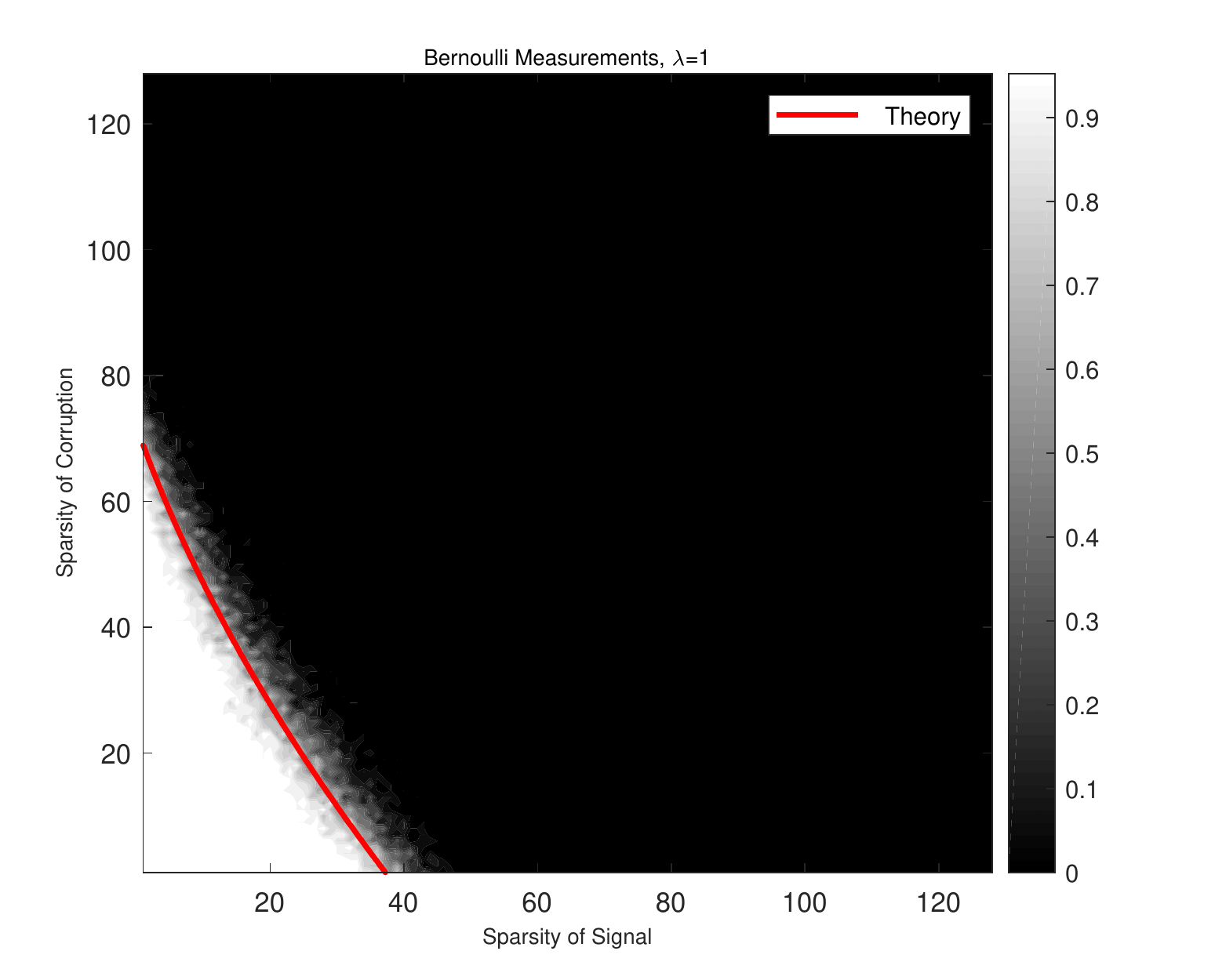}
	}
	\subfigure{
		\includegraphics[width=0.48\textwidth]{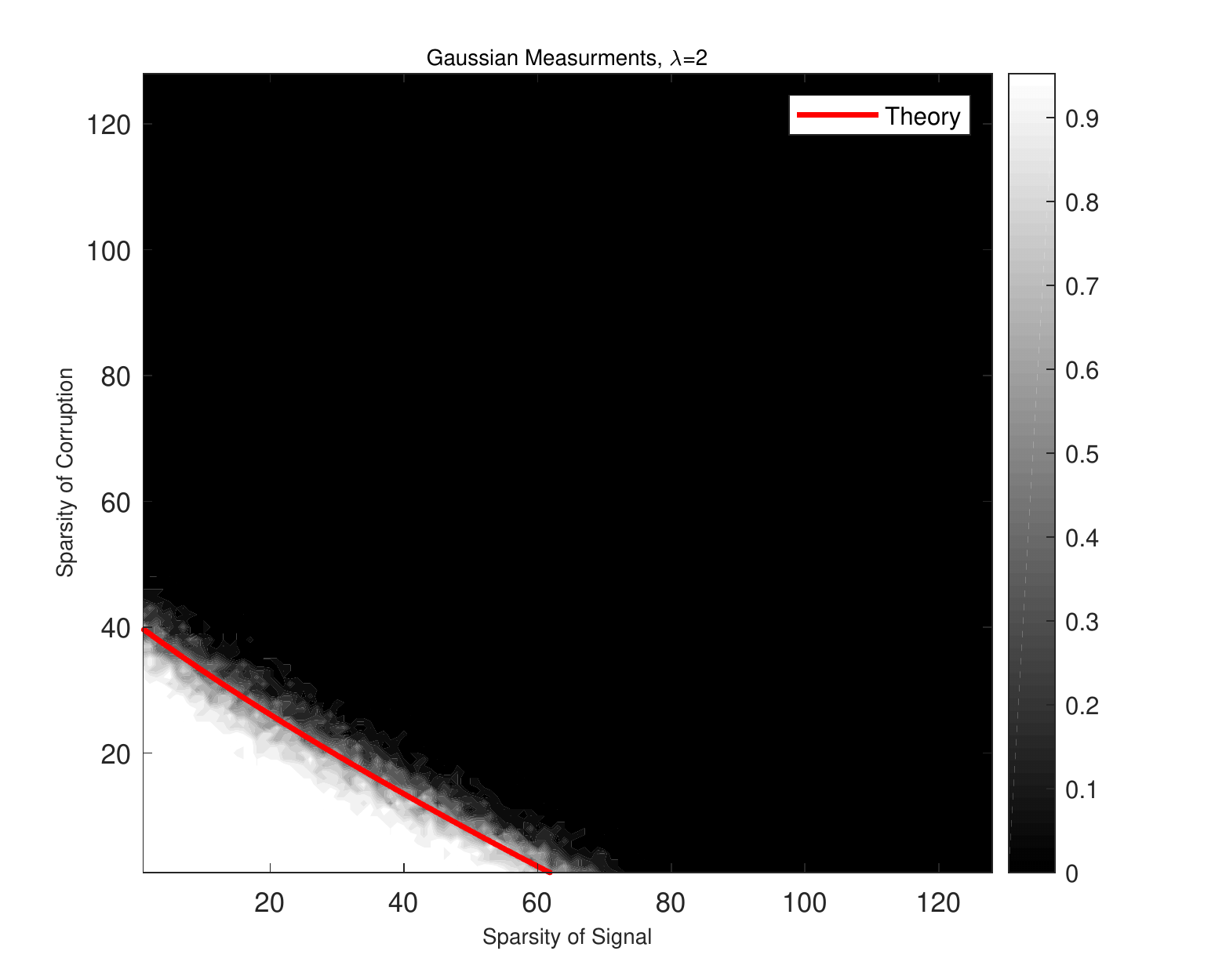}
		\includegraphics[width=0.48\textwidth]{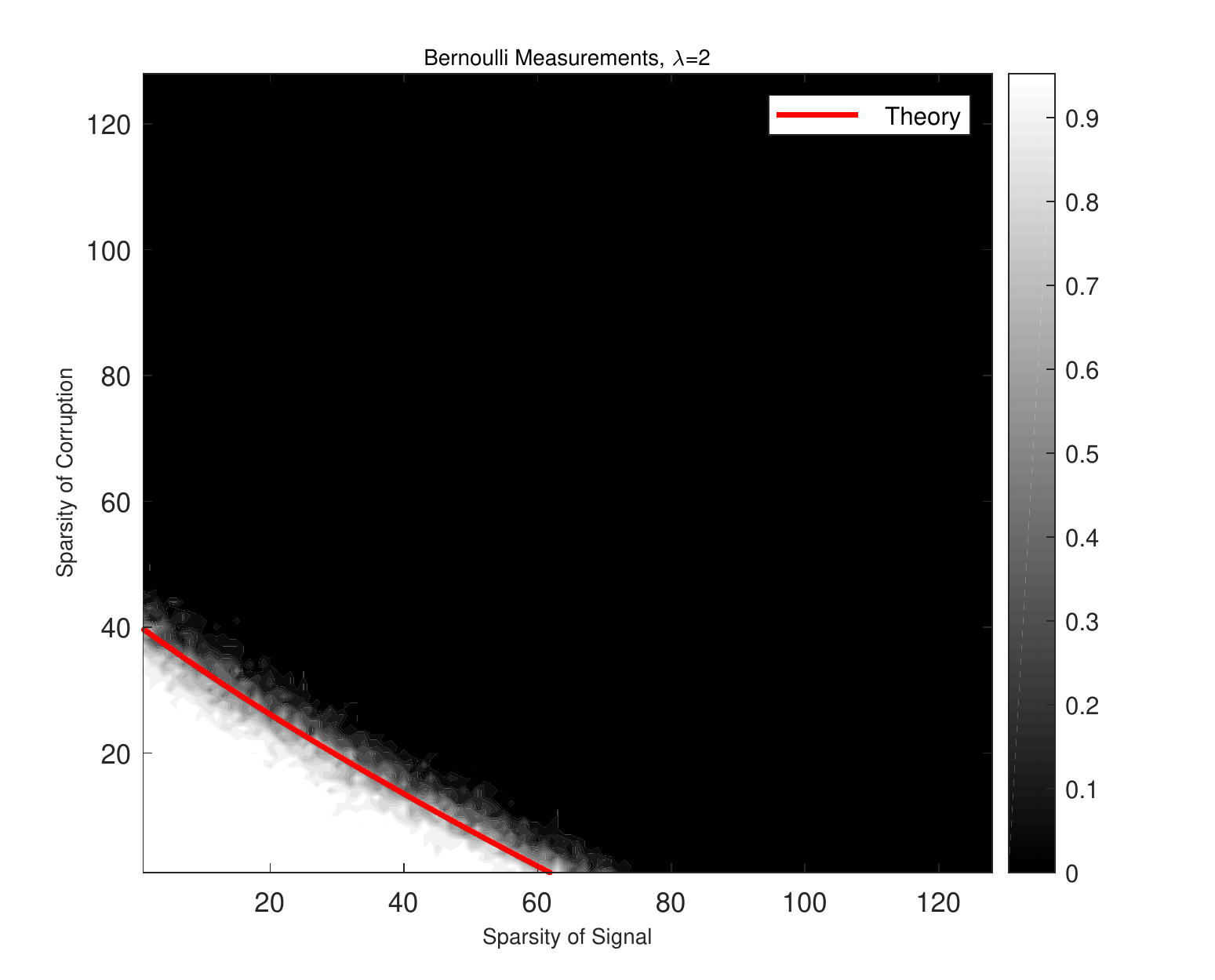}
	}
	\caption{Phase transitions of the penalized recovery procedure with different $\lambda$s under Gaussian and Bernoulli measurements in sparse signal recovery from sparse corruption. The red curves plot the phase transition thresholds predicted by \eqref{threshold2}.}
	\label{fig1:Penalized_Procedure}
\end{figure*}

\subsubsection{Low-rank Matrix Recovery from Sparse Corruption}
In this case, the desired signal is an $r$-rank matrix and the corruption is a $\rho$-sparse vector. We use the nuclear norm $f(\vx^\star)=\|\vx^\star\|_*$ to promote the structure of signal. Suppose the nuclear norm of true signal $\|\vx^\star\|_*$ are known exactly. Let $n=20$ and consider $n\times n$ signal matrices. Set the sample size $m=n^2$. For each rank $r=1,2,...,20$ and each corruption sparsity $\rho=1,6,11,16,...,396$. We repeat the following experiment 20 times:
\begin{itemize}
	\item [(1)]
	Generate an $r$-rank matrix $\mX^\star=\mU_1\mU_2^T$, where $\mU_1$ and $\mU_2$ are independent $n\times r$ matrices with orthonormal columns.
	\item [(2)]
	Generate a corruption vector $\vv^\star\in\R^m$ with $\rho$ non-zero entries and set the other $m-\rho$ entries to 0.
	\item [(3)]
	For Gaussian measurements, we draw the sensing matrix $\mPhi\in\R^{m\times n^2}$ with i.i.d standard normal entries. For sub-Gaussian measurements, we draw the sensing matrix $\mPhi\in\R^{m\times n^2}$ with i.i.d. symmetric Bernoulli entries.
	\item [(4)]
	Solve the following constrained problem \eqref{constrained_procedure2}:
	\begin{align*}
	(\hat{\mX}, \hat{\vv})=\arg\min_{\mX, \vv} ~\|\vv\|_1,\quad\text{s.t.~}&\vy=\mPhi\cdot\textrm{vec}(\mX)+\sqrt{m}\vv,~~\|\mX\|_*\leq \|\mX^{\star}\|_*.
	\end{align*}
	\item [(5)]
	Set $tol=10^{-3}$. Declare success if $\|\hat{\mX}-\mX^{\star}\|_F/\|\mX^{\star}\|_F\leq tol$.
	
\end{itemize}

\begin{figure*}
	\centering
	\subfigure{
		\includegraphics[width=0.48\textwidth]{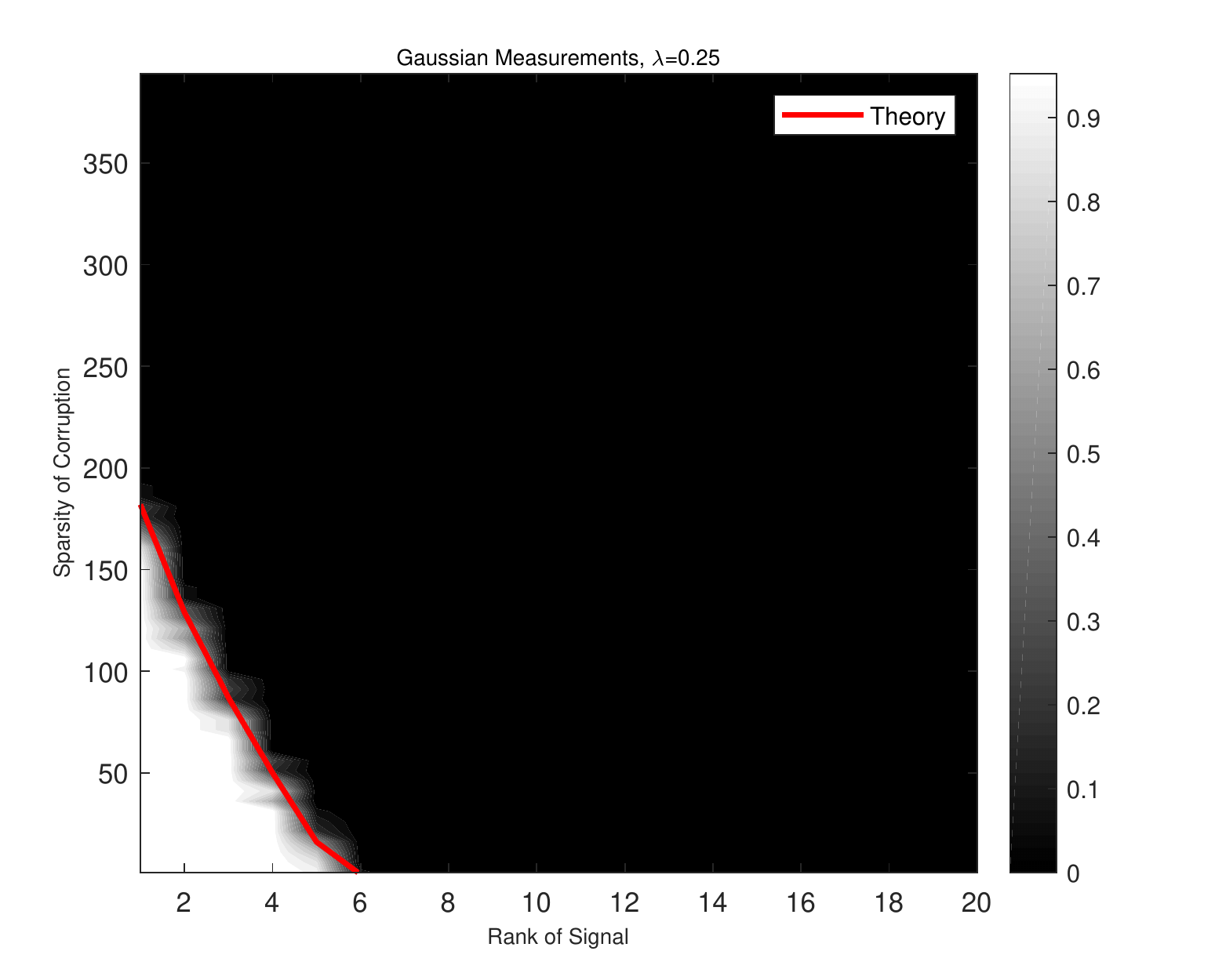}
		\includegraphics[width=0.48\textwidth]{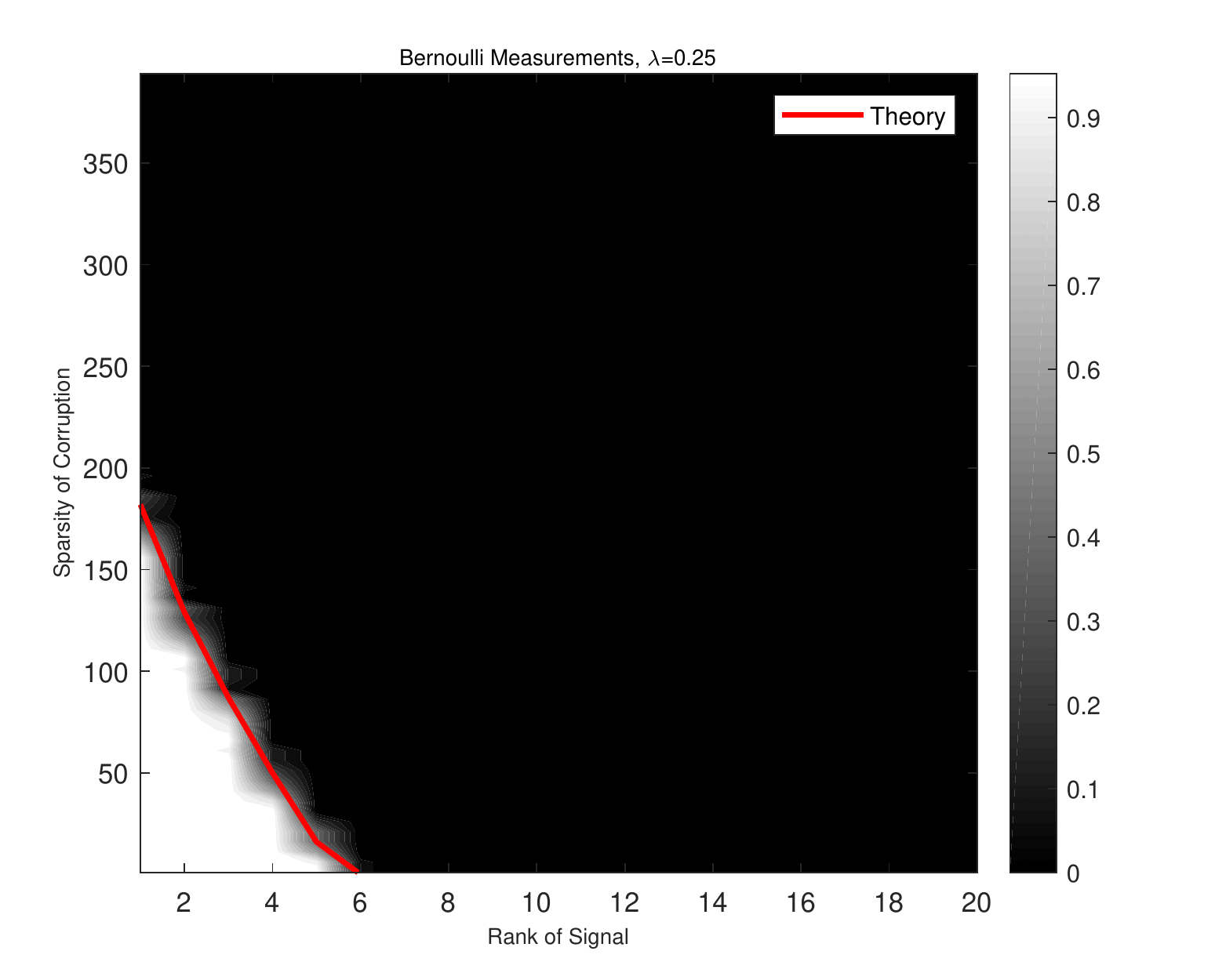}
	}
	\subfigure{
		\includegraphics[width=0.48\textwidth]{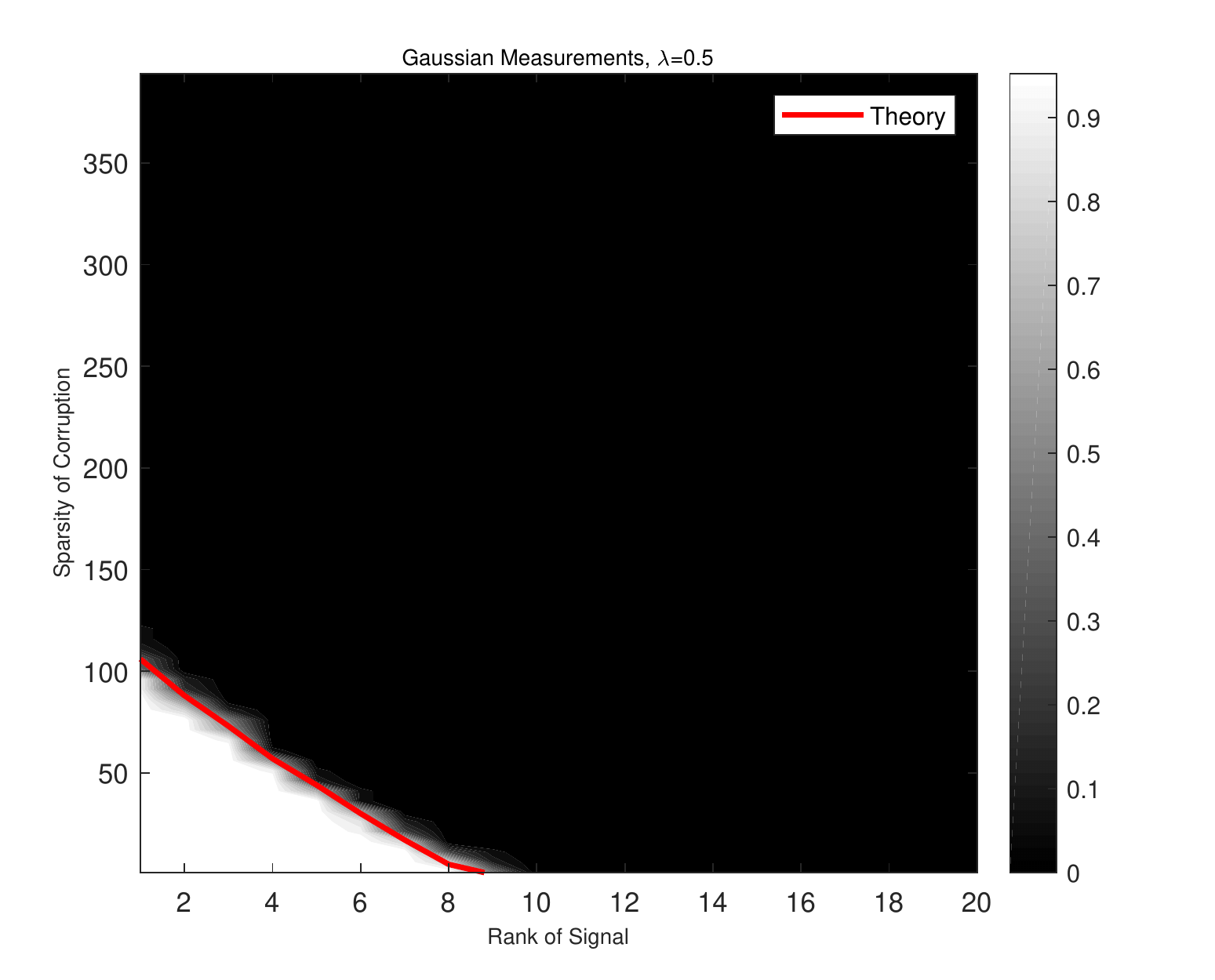}
		\includegraphics[width=0.48\textwidth]{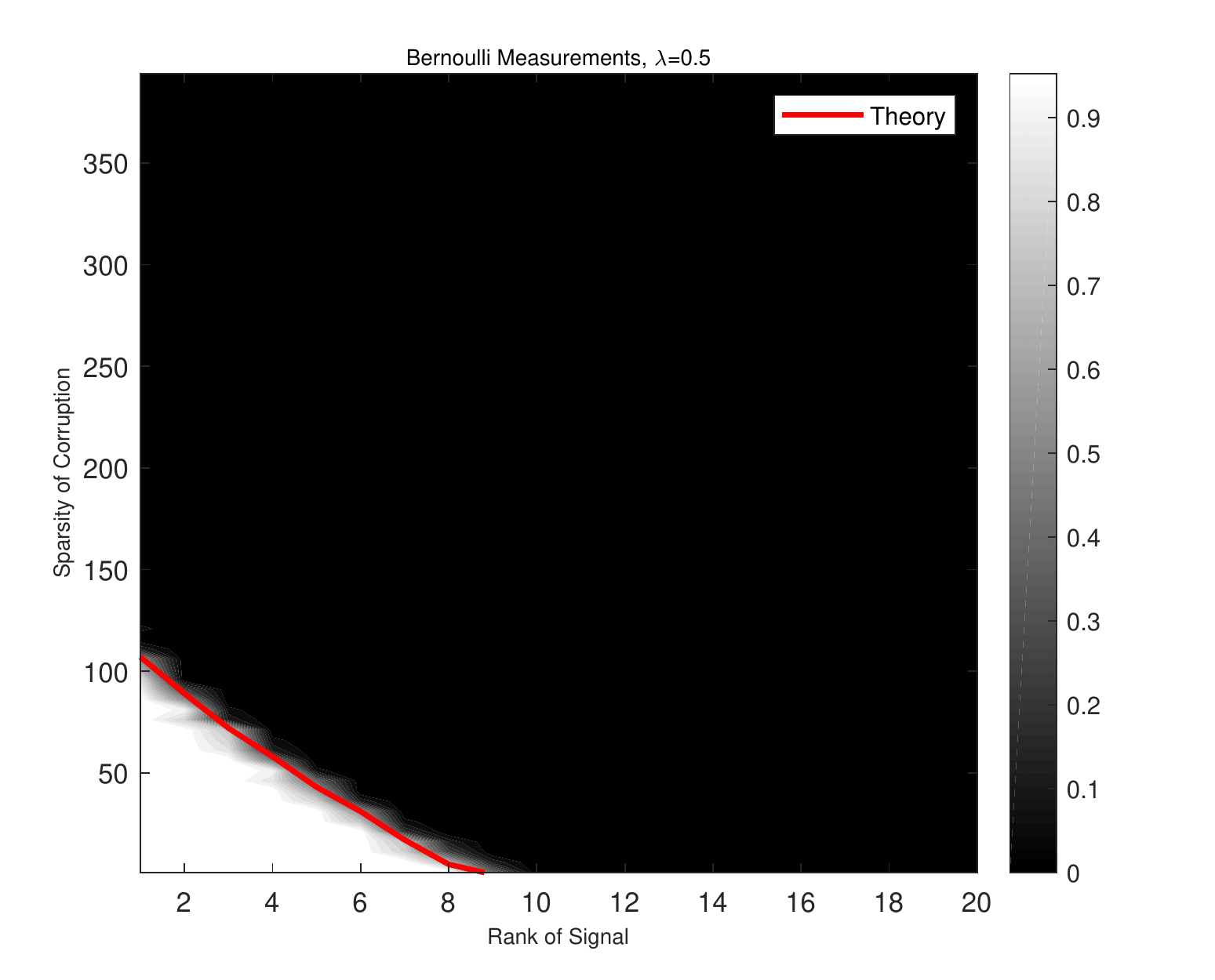}
	}
	\caption{Phase transitions of the penalized recovery procedure with different $\lambda$s under Gaussian and Bernoulli measurements in low-rank matrix recovery from sparse corruption. The red curves plot the phase transition thresholds predicted by \eqref{threshold2}.}
	\label{fig2:Penalized_Procedure}
\end{figure*}
In order to compare the empirical behaviors with theoretical results, we overlay the phase transition curve that predicted in Theorem \ref{them: Phase transition_con}:
\begin{align}\label{threshold}
	\omega^2(\TT_f(\vx^{\star})\cap\S^{n-1})+\omega^2(\TT_g(\vv^{\star})\cap\S^{m-1}).
\end{align}
Fig.\ref{fig:Constrained_Procedure} reports the empirical probability of success for the constrained procedures in these two typical structured signal recovery problems. It is not hard to find that our theoretical predictions sharply align with the empirical phase transitions under both Gaussian and Bernoulli measurements.

\subsection{Phase Transition of the Penalized Recovery Procedure}

We next consider the empirical phase transition of the penalized procedure in these two examples.
\subsubsection{Sparse Signal Recovery from Sparse Corruption}
The experiment settings are almost the same as the constrained case except that we require neither $f(\vx^\star)$ nor $g(\vv^\star)$, and we solve the following penalized procedure instead of the constrained one in step (4):
\begin{align*}
(\hat{\vx}, \hat{\vv})=\arg\min_{\vx, \vv} ~\|\vx\|_1+\lambda\|\vv\|_1,\quad\text{s.t.~}&\vy=\mPhi\vx+\sqrt{m}\vv.
\end{align*}
Here we test two tradeoff parameters: $\lambda=1$ and $\lambda=2$. 
To compare the empirical behaviors with theoretical results, we overlay the phase transition curve that predicted in Theorem \ref{them: Phase transition}:
\begin{equation}\label{threshold2}
\min_{\alpha\leq t\leq \beta}   2 \cdot \zeta\left(\frac{\sqrt{m}}{\lambda t}\partial f(\vx^\star)\right)+ \eta^2\left(\frac{1}{t}\partial g(\vv^\star)\cap \S^{m-1}\right) -1.
\end{equation}

Fig. \ref{fig1:Penalized_Procedure} displays the average empirical probability of success for the penalized problem in sparse signal recovery from sparse corruption. We can see that the theoretical threshold \eqref{threshold2} perfectly predicts the empirical phase transition under different tradeoff parameter $\lambda$s.

\subsubsection{Low-rank Matrix Recovery from Sparse Corruption}
Similarly, the experiment settings are nearly the same as the constrained case except that we recover the original signal and corruption via the following penalized procedure in step (4):
\begin{align*}
(\hat{\mX}, \hat{\vv})=\arg\min_{\mX, \vv} ~\|\mX\|_*+\lambda\|\vv\|_1,\quad\text{s.t.~}&\vy=\mPhi\cdot\textrm{vec}(\mX)+\sqrt{m}\vv.
\end{align*}
The tradeoff parameter is set to be $\lambda=1/4$ or $\lambda=1/2$. To compare our theory with the empirical results, we overlay the theoretical threshold \eqref{threshold2}. 
%
Fig. \ref{fig2:Penalized_Procedure} shows the empirical probability of success for penalized problem in low-rank matrix recovery from sparse corruption. We can find that the theoretical threshold \eqref{threshold2} predicts the empirical phase transition quite well under different tradeoff parameter $\lambda$s.

\begin{figure*}
	\centering
	\subfigure{
		\includegraphics[width=0.48\textwidth]{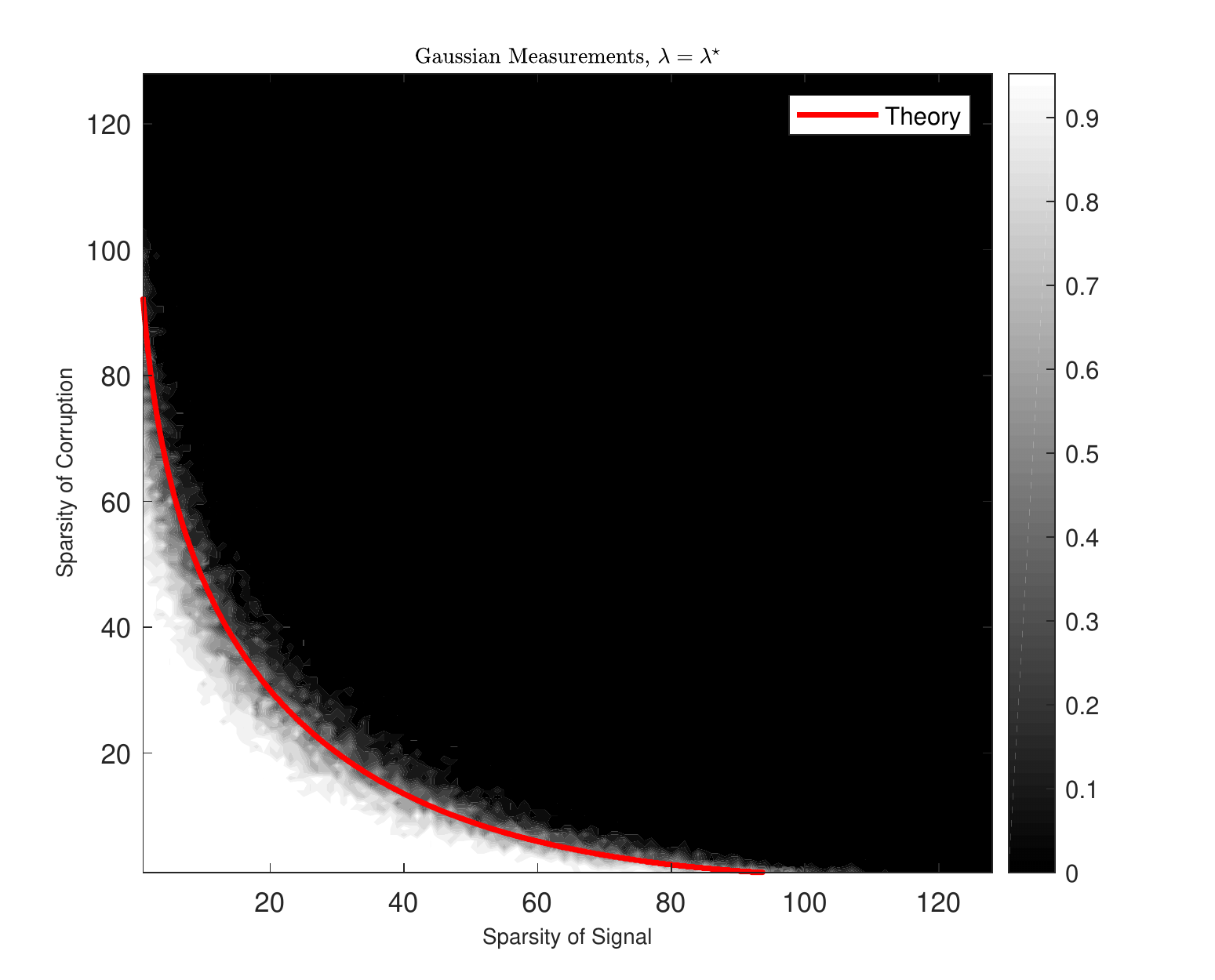}
		\includegraphics[width=0.48\textwidth]{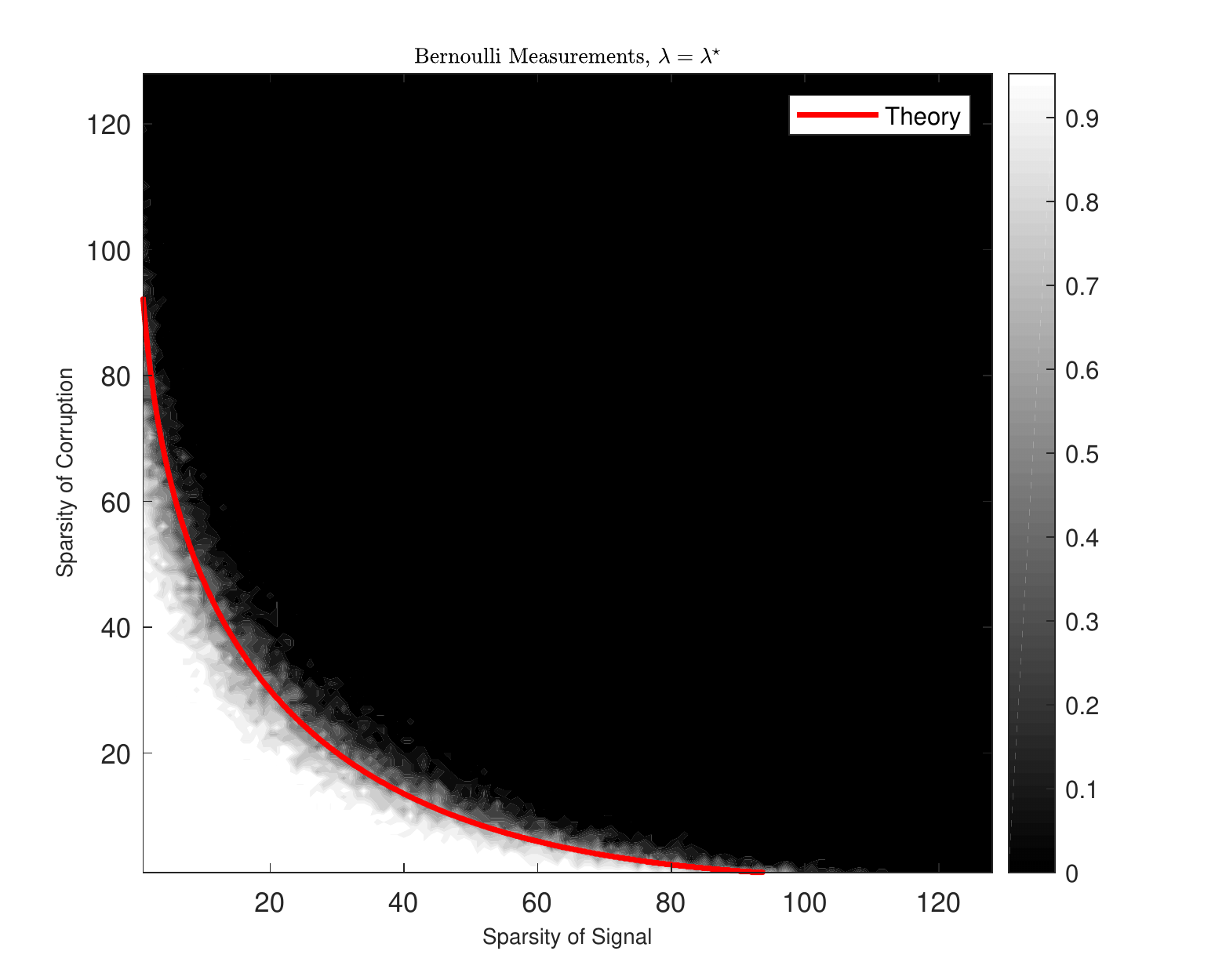}
	}
	\subfigure{
		\includegraphics[width=0.48\textwidth]{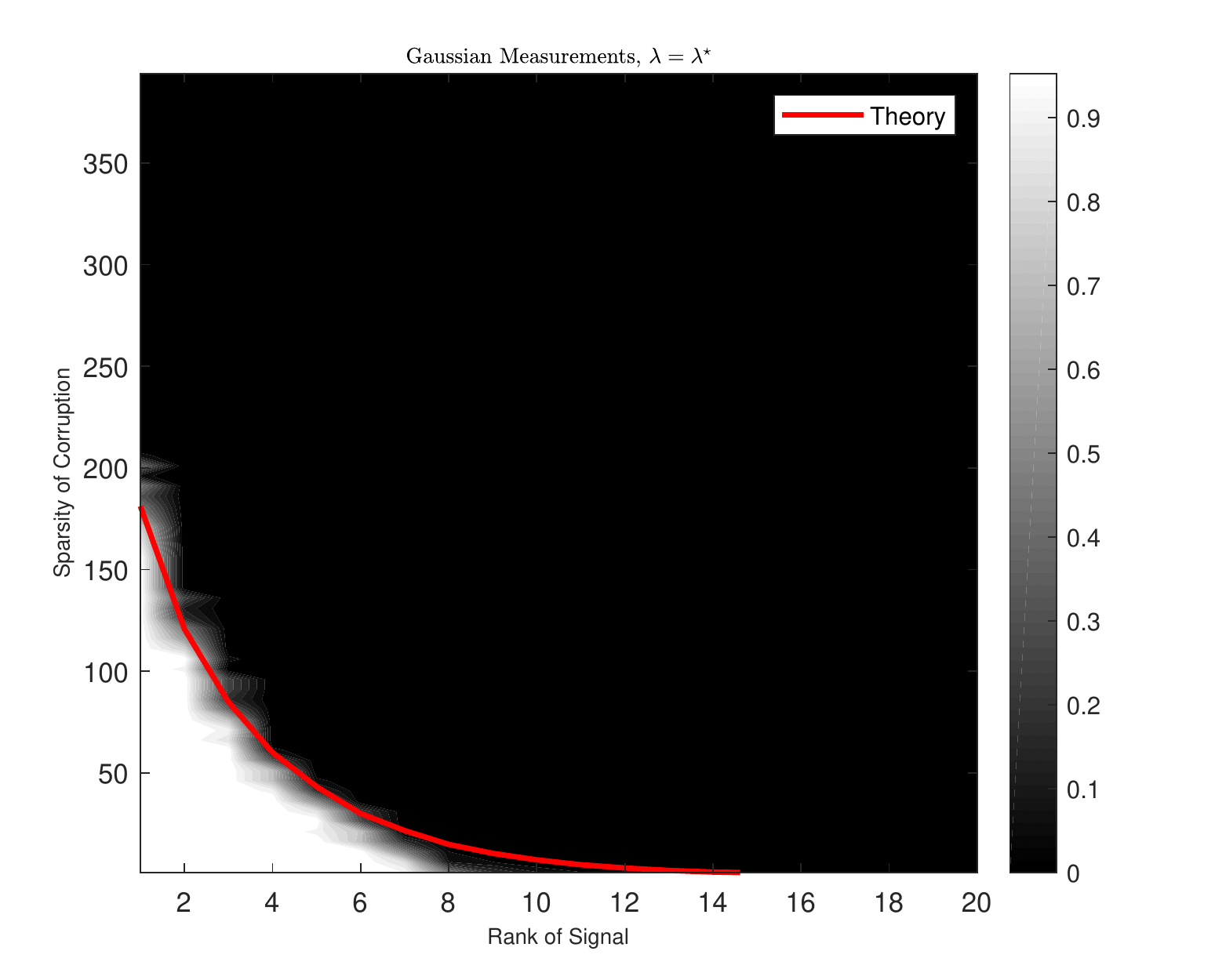}
		\includegraphics[width=0.48\textwidth]{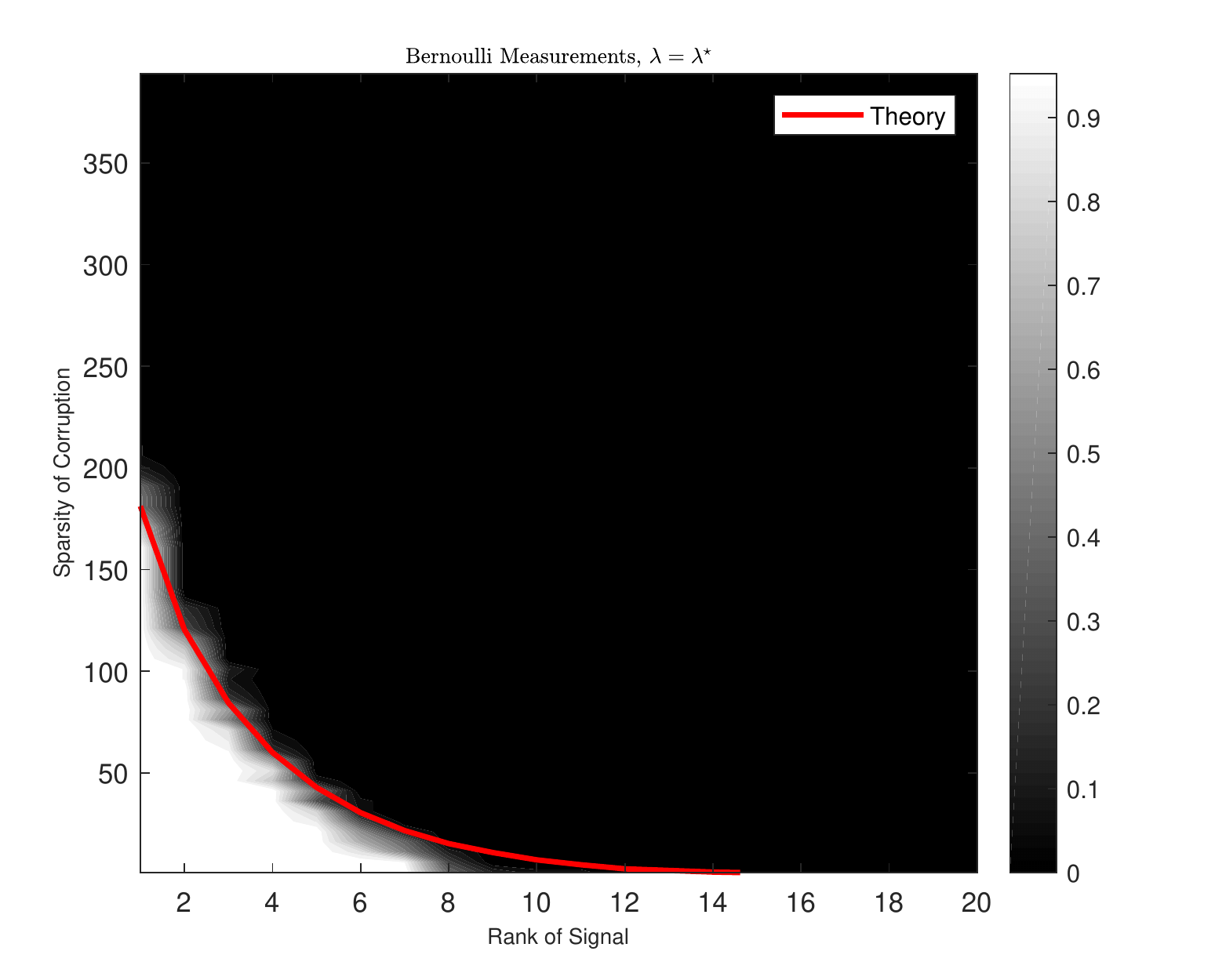}
	}
	\caption{Phase transitions of the penalized recovery procedure with the optimal $\lambda$ under Gaussian and Bernoulli measurements in both sparse signal recovery from sparse corruption and low-rank matrix recovery from sparse corruption. The red curves plot the phase transition thresholds predicted by that of the constrained procedures \eqref{threshold}.}
	\label{fig:optimal_lbd}
\end{figure*}

\subsection{Optimal Choice of the Tradeoff Parameter $\lambda$}
In this section, we explore the empirical phase transition of the penalized procedure with the optimal tradeoff parameter $\lambda$. Similarly, we consider these two typical examples.
\subsubsection{Sparse Signal Recovery from Sparse Corruption}
The experiment settings are the same as the penalized case except that we solve the following penalized procedure in step (4):
\begin{align*}
(\hat{\vx}, \hat{\vv})=\arg\min_{\vx, \vv} ~\|\vx\|_1+\lambda\|\vv\|_1,\quad\text{s.t.~}&\vy=\mPhi\vx+\sqrt{m}\vv
\end{align*}
with the optimal parameter selection strategy $\lambda=\lambda^\star$ as in \eqref{optimal_lambda}.

\subsubsection{Low-rank Matrix Recovery from Sparse Corruption}
We carry out similar experiments as the penalized case except that we solve the following penalized procedure in step (4):
\begin{align*}
(\hat{\mX}, \hat{\vv})=\arg\min_{\mX, \vv} ~\|\mX\|_*+\lambda\|\vv\|_1,\quad\text{s.t.~}&\vy=\mPhi\cdot\textrm{vec}(\mX)+\sqrt{m}\vv.
\end{align*}
The tradeoff parameter is set to $\lambda=\lambda^\star$ according to \eqref{optimal_lambda}.

To compare our theory with the empirical results, we overlay the theoretical threshold \eqref{threshold} of the constrained procedures. Fig.\ref{fig:optimal_lbd} shows the empirical probability of success for the penalized procedure with the optimal parameter $\lambda$ in both examples.  We can find that the theoretical threshold of the constrained problems predicts the empirical phase transition of penalized problems perfectly under both Gaussian and Bernoulli measurements, which indicates that our strategy to choose $\lambda$ is optimal in the sense of the Lagrange theory.

\section{Conclusion and Future Directions}\label{Conclusion}

This paper has developed a unified framework to establish the phase transition theory for both constrained and penalized recovery procedures which are used to solve corrupted sensing problems under different scenarios. The analysis is only based on some well-known results in Gaussian process theory. Our theoretical results have shown that the phase transitions of these two recovery procedures are determined by some geometric measures, e.g., the spherical Gaussian width of a tangent cone, the Gaussian (squared) distance to a scaled subdifferential. We have also explored the relationship between these two procedures from a quantitative perspective, which in turn indicates how to pick the optimal tradeoff parameter in the penalized recovery procedure. The numerical experiments have demonstrated a close agreement between our theoretical results and the empirical phase transitions. For future work, we enlist two promising directions:
	\begin{itemize}
		\item Universality: Under Gaussian measurements, our results provide a thorough explanation for the phase transition phenomenon of corrupted sensing. The Gaussian assumption is critical in the establishment of our main results. However, extensive numerical examples in Section \ref{simulations} have suggested that the phase transition results of corrupted sensing are universal. Thus, an important question is to establish the phase transition theory for corrupted sensing beyond Gaussian measurements.
		\item Noisy phase transition: Throughout the paper, we analyze the phase transition of corrupted sensing in the noiseless setting. It might be interesting to consider the noisy measurements $\vy=\mPhi\vx^{\star}+\vv^{\star}+\vz$, and to provide precise error analysis for different convex recovery procedures. 
			In \cite{donoho2011noise}, Donoho \emph{et al.} have considered the noisy compressed sensing problem $\vy=\mPhi\vx^\star+\vz$ with $\vz\sim\NN(0,\sigma\mI_m)$, and use the penalized $\ell_1$-minimization $\hat{\vx}=\arg\min\{\|\vy-\mPhi\vx\|_2^2/2+\lambda\|\vx\|_1\}$ to recover the original signal. They have shown that the normalized MSE $\frac{\E\|\hat{\vx}-\vx^\star\|_2^2}{\sigma^2}$ is bounded throughout an asymptotic region and is unbounded throughout the complementary region. The phase boundary of the interested region is identical to the previously known phase transition for the noiseless problem. We may expect a non-asymptotic characterization of the normalized MSE for the noisy corrupted sensing problem, which implies a new perspective for the phase transition results in noiseless case.
	\end{itemize}

\appendices
\section{Proofs of Lemma \ref{lem: success and failure_con} and  Theorem \ref{them: Phase transition_con}}
In this appendix, we present a detailed proof for the phase transition result of the constrained recovery procedures. For brevity, we denote $\TT_f(\vx^{\star})$ and $\TT_g(\vv^{\star})$ by $\TT_f$ and $\TT_g$ respectively. Some auxiliary lemma and facts used in the proofs are included in Appendix \ref{auxiliaryresults}.

\subsection{Proof of Lemma \ref{lem: success and failure_con}}
\begin{proof}
It follows from the optimization condition for linear inverse problems \cite[Section 4]{rudelson2008sparse} or \cite[Proposition 2.1]{chandrasekaran2012convex} that $(\hat{\vx},\hat{\vv})=(\vx^{\star},\vv^{\star})$ is the unique optimal solution of \eqref{constrained_procedure1} or \eqref{constrained_procedure2} if and only if $\text{null}\left([\mPhi, \sqrt{m}\mI_m]\right)\cap\left(\TT_f\times\TT_g\right)=\{(\vzero,\vzero)\}$, which is equivalent to $\text{null}\left([\mPhi, \sqrt{m}\mI_m]\right)\cap\left((\TT_f\times\TT_g)\cap\S^{n+m-1}\right)=\emptyset$. Therefore, if $\text{null}\left([\mPhi, \sqrt{m}\mI_m]\right)\cap\left((\TT_f\times\TT_g)\cap\S^{n+m-1}\right)=\emptyset$, i.e.,
	\begin{align*}
	\min_{(\va,\vb)\in \left(\TT_f\times\TT_g\right)\cap\S^{n+m-1}} \|\mPhi\va+\sqrt{m}\vb\|_2>0,
	\end{align*}
	then the constrained procedures \eqref{constrained_procedure1} and \eqref{constrained_procedure2} succeed. If $\text{null}\left([\mPhi, \sqrt{m}\mI_m]\right)\cap\left((\TT_f\times\TT_g)\cap\S^{n+m-1}\right)\neq\emptyset$, i.e.,
	\begin{align}\label{Failcondition}
	\min_{(\va,\vb)\in  \left(\TT_f\times\TT_g\right)\cap\S^{n+m-1}} \|\mPhi\va+\sqrt{m}\vb\|_2=0,
	\end{align}
	then the constrained procedures \eqref{constrained_procedure1} and \eqref{constrained_procedure2} fail.

  Obviously, \eqref{Failcondition} holds if $\vzero \in \mA((\TT_f\times\TT_g)\cap\S^{n+m-1})$. Since $f$ and $g$ are proper convex functions, then $\TT_f$ and $\TT_g$ are convex, and hence $(\TT_f\times\TT_g)\cap\S^{n+m-1}$ is spherically convex. By assumption, $\TT_f$ and $\TT_g$ are nonempty and closed, the desired sufficient condition \eqref{polar_f_con} follows by  directly applying the polarity principle (Fact \ref{polarity argument}).

\end{proof}

\subsection{Proof of Theorem \ref{them: Phase transition_con}}
\begin{proof}
\textbf{Success case:}
	Lemma \ref{lem: success and failure_con} indicates that the constrained procedures \eqref{constrained_procedure1} and \eqref{constrained_procedure2} succeed if
	\begin{align*}
	\min_{(\va,\vb)\in \left(\TT_f\times\TT_g\right)\cap\S^{n+m-1}} \|\mPhi\va+\sqrt{m}\vb\|_2>0.
	\end{align*}
    Our goal then reduces to show that if the number of measurements satisfies \eqref{NumberofMeasurements_c1}, then the above inequality holds with high probability. For clarity, the proof is divided into three steps.
	
\textbf{Step 1: Problem reduction.} We first apply Gordon's Lemma to convert the probability of the targeted event to a surrogate which is convenient to handle. Observe that
	\begin{equation}\label{bound of success cond_con}
	\min_{(\va,\vb)\in \left(\TT_f\times\TT_g\right)\cap\S^{n+m-1}} \|\mPhi\va+\sqrt{m}\vb\|_2
	=\min_{(\va,\vb)\in \left(\TT_f\times\TT_g\right)\cap\S^{n+m-1}}\max_{\vu\in\S^{m-1}}\ip{\mPhi\va}{\vu}+\ip{\sqrt{m}\vb}{\vu}.
	\end{equation}
	For any $(\va,\vb) \in \left(\TT_f\times\TT_g\right)\cap\S^{n+m-1}$ and $\vu\in \S^{m-1}$, define the following two Gaussian processes
	$$
	X_{(\va,\vb),\vu}:=\ip{\mPhi\va}{\vu}+ \|\va\|_2\cdot g~~~
	$$
	and
	$$
	Y_{(\va,\vb),\vu}:=\|\va\|_2\ip{\vh}{\vu}+\ip{\vg}{\va},
	$$
	where $g \sim \NN(0, 1)$, $\vh \sim \NN(\vzero, \mI_m)$, and $\vg \sim \NN(\vzero, \mI_n)$ are independent of each other. It is not hard to check that the above Gaussian processes satisfy the conditions of Gordon's Lemma, i.e.,
	\begin{align*}
	\E X_{(\va,\vb),\vu}^2 &= 2\|\va\|_2^2= \E Y_{(\va,\vb),\vu}^2,\\
	\E [X_{(\va,\vb),\vu} X_{(\va',\vb'),\vu'}]-
	\E [Y_{(\va,\vb),\vu} Y_{(\va',\vb'),\vu'}]	&= \ip{\vu}{\vu'}\ip{\va}{\va'}+\|\va\|_2\|\va'\|_2-\ip{\va}{\va'}-\|\va\|_2\|\va'\|_2\ip{\vu}{\vu'}\\
	&=\left(1-\ip{\vu}{\vu'}\right)\left(\|\va\|_2\|\va'\|_2-\ip{\va}{\va'}\right)\\
	&\geq 0,
	\end{align*}
	where, in the last line, the equality holds when $\va=\va'$. It then follows from Gordon's Lemma (Fact \ref{Grodon lem}) that (by setting $\tau_{(\va,\vb),\vu}=-\ip{\sqrt{m}\vb}{\vu}+0_+$)
	\begin{align*}
	&\Pr{\min_{(\va,\vb)\in \atop \left(\TT_f\times\TT_g\right)\cap\S^{n+m-1}}\max_{\vu\in\S^{m-1}}Y_{(\va,\vb),\vu}\geq \tau_{(\va,\vb),\vu}}=\Pr{\min_{(\va,\vb)\in \atop \left(\TT_f\times\TT_g\right)\cap\S^{n+m-1}}\max_{\vu\in\S^{m-1}}\|\va\|_2\ip{\vh}{\vu}+\ip{\vg}{\va}+\ip{\sqrt{m}\vb}{\vu}> 0}\\
	&\hspace{150pt}\leq\Pr{\min_{(\va,\vb)\in \atop \left(\TT_f\times\TT_g\right)\cap\S^{n+m-1}}\max_{\vu\in\S^{m-1}}X_{(\va,\vb),\vu}\geq \tau_{(\va,\vb),\vu}} \\
	&\hspace{150pt}= \Pr{\min_{(\va,\vb)\in \atop \left(\TT_f\times\TT_g\right)\cap\S^{n+m-1}}\max_{\vu\in\S^{m-1}}\ip{\mPhi\va}{\vu}+\|\va\|_2\cdot g+\ip{\sqrt{m}\vb}{\vu}> 0}\\
	&\hspace{150pt}\leq \frac{1}{2}+\frac{1}{2}\Pr{\min_{(\va,\vb)\in\atop\left(\TT_f\times\TT_g\right)\cap\S^{n+m-1}}\max_{\vu\in\S^{m-1}}\ip{\mPhi\va}{\vu}+\|\va\|_2\cdot g+\ip{\sqrt{m}\vb}{\vu}> 0~\bigg|~g\leq 0}\\
	&\hspace{150pt}\leq \frac{1}{2}+\frac{1}{2}\Pr{\min_{(\va,\vb)\in\atop\left(\TT_f\times\TT_g\right)\cap\S^{n+m-1}}\max_{\vu\in\S^{m-1}}\ip{\mPhi\va}{\vu}+\ip{\sqrt{m}\vb}{\vu}> 0},
	\end{align*}
    where the second inequality is due to the law of total probability and the third inequality holds by noting $-\|\va\|_2\cdot g \geq 0$ when $g \leq 0$.
    Rearranging the above inequality leads to
	\begin{align}\label{p>2q-1}
	&\Pr{\min_{(\va,\vb)\in \atop \left(\TT_f\times\TT_g\right)\cap\S^{n+m-1}}\max_{\vu\in\S^{m-1}}\ip{\mPhi\va}{\vu}+\ip{\sqrt{m}\vb}{\vu}> 0} \notag \\
	&\hspace{135pt}\geq 2\Pr{\underbrace{\min_{(\va,\vb)\in \atop \left(\TT_f\times\TT_g\right)\cap\S^{n+m-1}}\max_{\vu\in\S^{m-1}}\|\va\|_2\ip{\vh}{\vu}+\ip{\vg}{\va}+\ip{\sqrt{m}\vb}{\vu}}_{:=\mathscr{E}_1}> 0}-1.
	\end{align}
	Moreover, $\mathscr{E}_1$ can be rewritten as
	\begin{equation}\label{bound of U1}
	\begin{split}
	\mathscr{E}_1&=\min_{(\va,\vb)\in\atop \left(\TT_f\times\TT_g\right)\cap\S^{n+m-1}}\max_{\vu\in\S^{m-1}}\ip{\vu}{\|\va\|_2\vh+\sqrt{m}\vb}+\ip{\vg}{\va}\\
	&=\min_{(\va,\vb)\in\atop \left(\TT_f\times\TT_g\right)\cap\S^{n+m-1}}\bigg\|\|\va\|_2\vh+\sqrt{m}\vb\bigg\|_2+\ip{\vg}{\va}\\
	&=\min_{t\in[0,1]}\min_{\va'\in\TT_f\cap\S^{n-1} \atop \vb'\in\TT_g\cap\S^{m-1}}\bigg\|t\vh+\sqrt{m(1-t^2)}\vb'\bigg\|_2+t\ip{\vg}{\va'}.
	\end{split}
	\end{equation}
	In the last line, we have let $\|\va\|_2=t,~\|\vb\|_2=\sqrt{1-t^2}$, $\va'={\va}/{\|\va\|_2}, \textrm{and} ~\vb'={\vb}/{\|\vb\|_2}$.

    Define
    \begin{equation*}
	\begin{split}
	U(\vg,\vh,t):=\min_{\va'\in\TT_f\cap\S^{n-1} \atop \vb'\in\TT_g\cap\S^{m-1}}\bigg\|\vh+\sqrt{m(\frac{1}{t^2}-1)}\cdot\vb'\bigg\|_2+\ip{\vg}{\va'}.
	\end{split}
	\end{equation*}
    Let
    \begin{align*}
	t_1\in\arg\min_{t\in[0,1]} \left\{  \min_{\va'\in\TT_f\cap\S^{n-1} \atop \vb'\in\TT_g\cap\S^{m-1}}\bigg\|t\vh+\sqrt{m(1-t^2)}\vb'\bigg\|_2+t\ip{\vg}{\va'} \right\}.
	\end{align*}
    If $t_1 \neq 0$, then we have
	\begin{align*} 
	\Pr{\min_{t\in[0,1]}\min_{\va'\in\TT_f\cap\S^{n-1} \atop \vb'\in\TT_g\cap\S^{m-1}}\bigg\|t\vh+\sqrt{m(1-t^2)}\vb'\bigg\|_2+t\ip{\vg}{\va'} > 0} &=\Pr{t_1\cdot U(\vg,\vh,t_1)>0} \notag\\
	& = \Pr{U(\vg,\vh,t_1)>0} \notag\\
    &\geq \Pr{\min_{t\in(0,1]} U(\vg,\vh,t)>0}.
	\end{align*}
    If $t_1 = 0$, then equation $\mathscr{E}_1=\sqrt{m}$, which implies
	$$
	\Pr{\mathscr{E}_1> 0}=1.
	$$
	Thus we have
	\begin{align} \label{bound of q1}
	& \Pr{\mathscr{E}_1> 0} \geq \Pr{\min_{t\in(0,1]} U(\vg,\vh,t)>0}.
	\end{align}

    Combining \eqref{bound of success cond_con}, \eqref{p>2q-1}, and \eqref{bound of q1} yields
    \begin{equation}\label{ResultofStep2}
	\Pr{\min_{(\va,\vb)\in \atop \left(\TT_f\times\TT_g\right)\cap\S^{n+m-1}} \|\mPhi\va+\sqrt{m}\vb\|_2>0}\geq 2\Pr{\min_{t\in(0,1]}U(\vg,\vh,t) > 0}-1.
	\end{equation}
    Therefore, it is sufficient to establish the lower bound for $\Pr{\min_{t\in(0,1]}U(\vg,\vh,t) > 0}$.

    \textbf{Step 2: Establish the lower bound for $\Pr{\min_{t\in(0,1]}U(\vg,\vh,t) > 0}$.} We then apply the Gaussian concentration inequality to establish the lower bound for $\Pr{\min_{t\in(0,1]}U(\vg,\vh,t) > 0}$.

    Note that $\min_{t\in(0,1]}U(\vg,\vh,t)$ can reformulated as
    \begin{equation*}
	\begin{split}
	\min_{t\in(0,1]}U(\vg,\vh,t)&=\min_{t\in(0,1]}\min_{\va'\in\TT_f\cap\S^{n-1} \atop \vb'\in\TT_g\cap\S^{m-1}}\bigg\|\vh+\sqrt{m(\frac{1}{t^2}-1)}\cdot\vb'\bigg\|_2+\ip{\vg}{\va'}\\
	&=\min_{\va'\in\TT_f\cap\S^{n-1}} \min_{t'\geq 0 \atop \vb'\in\TT_g\cap\S^{m-1}}\|\vh+t'\vb'\|_2+\ip{\vg}{\va'}\\
    &=\min_{\vx\in\TT_g\atop \va'\in\TT_f\cap\S^{n-1}}\|\vh+\vx\|_2+\ip{\vg}{\va'},
	\end{split}
	\end{equation*}
    where, in the second line, we have let $t'=\sqrt{m(\frac{1}{t^2}-1)}\geq 0$. It then follows from Lemma \ref{Lipschitz function} that the function $\min_{t\in(0,1]}U(\vg,\vh,t)$ is a $\sqrt{2}$-Lipschitz function. To apply the Gaussian concentration inequality, it suffices to bound the expectation of $\min_{t\in(0,1]}U(\vg,\vh,t)$. To this end,
	\begin{equation}\label{bound of U2}
	\begin{split}
	\E \min_{t\in(0,1]}U(\vg,\vh,t)&=\E \left\{\min_{\vx\in\TT_g}\|\vh+\vx\|_2-\max_{\va'\in\TT_f\cap\S^{n-1}}\ip{-\vg}{\va'}\right\}\\
    & =\E\left\{\dist(-\vh,\TT_g)-\max_{\va'\in\TT_f\cap\S^{n-1}}\ip{-\vg}{\va'}\right\}\\
    & =\E\dist(\vh,\TT_g)-\omega(\TT_f\cap\S^{n-1})\\
	&\geq\sqrt{\E\dist^2(\vh,\TT_g)-1}-\omega(\TT_f\cap\S^{n-1})\\
	&=\sqrt{m-\E\dist^2(\vh,\TT_g^\circ)-1}-\omega(\TT_f\cap\S^{n-1})\\
	&=\sqrt{m-\E\bigg(\max_{\vx\in\TT_g\cap\B_2^n}\ip{\vh}{\vx}\bigg)^2-1}-\omega(\TT_f\cap\S^{n-1})\\
	&\geq\sqrt{m-\omega^2(\TT_g\cap\S^{m-1})-2}-\omega(\TT_f\cap\S^{n-1})\\
	&\geq\sqrt{m}-\sqrt{\omega^2(\TT_g\cap\S^{m-1})+\omega^2(\TT_f\cap\S^{n-1})+2}\\
	&\geq\epsilon.
	\end{split}
	\end{equation}
    The third line is due to the fact that $-\vh$ (or $-\vg$) and $\vh$ (or $\vg$) share the same distribution.
	The first inequality has used Fact \ref{Variance of Lipschitz}, i.e.,$ \textrm{Var}(\dist(\vh,\TT_g))=\E\dist^2(\vh,\TT_g)-\E^2\dist(\vh,\TT_g)\leq 1.$
	The fifth line holds because of Moreau’s decomposition theorem (Fact \ref{Moreau}). The next two lines follows from Facts \ref{omega & dist} and \ref{relation_w_d}, respectively.  The last two inequalities are due to the measurement condition \eqref{NumberofMeasurements_c1}.

    Now using the Gaussian concentration inequality (Fact \ref{gaussian concentration}) yields
	\begin{align*}
	&\Pr{ \min_{t\in(0,1]}U(\vg,\vh,t) - \E \min_{t\in(0,1]}U(\vg,\vh,t) \leq -\epsilon} \leq \exp\left(\frac{-\epsilon^2}{4}\right),
	\end{align*}
	which in turn implies that
	\begin{equation}\label{pr of U2>0}
	\begin{split}
	\Pr{ \min_{t\in(0,1]}U(\vg,\vh,t) > 0}
	&\geq\Pr{ \min_{t\in(0,1]}U(\vg,\vh,t) >\E \min_{t\in(0,1]}U(\vg,\vh,t)-\epsilon}\\
	&\geq 1-\exp\left(\frac{-\epsilon^2}{4}\right).
	\end{split}
	\end{equation}

\textbf{Step 3: Complete the proof.}

    Combining the results in Steps 1 and 2 (\eqref{ResultofStep2} and \eqref{pr of U2>0}), we have
	\begin{equation*}
	\begin{split}
	\Pr{\min_{(\va,\vb)\in \atop \left(\TT_f\times\TT_g\right)\cap\S^{n+m-1}} \|\mPhi\va+\sqrt{m}\vb\|_2>0}&\geq 2\Pr{\min_{t\in(0,1]}U(\vg,\vh,t) > 0}-1\\
	&\geq 1-2\exp\left(\frac{-\epsilon^2}{4}\right).
	\end{split}
	\end{equation*}
   Therefore, we have established that when $\sqrt{m} \geq\sqrt{\omega^2\left(\TT_f\cap\S^{n-1}\right)+\omega^2\left(\TT_g\cap\S^{m-1}\right)}+\sqrt{2} +\epsilon$, the constrained procedures \eqref{constrained_procedure1} and \eqref{constrained_procedure2} succeed with probability at least $1-2\exp\left(-\epsilon^2/4\right)$.

	\textbf{Failure case:} According to Lemma \ref{lem: success and failure_con}, the constrained procedures \eqref{constrained_procedure1} and \eqref{constrained_procedure2} fail if
	$$
	\min_{\vr\in\S^{m-1}} \min_{ \vs \in (\TT_f\times \TT_g)^{\circ}} \| \vs - \mA^T\vr \|_2 > 0,
	$$
	where $\mA=[\mPhi,\sqrt{m}\mI_m]$. So it suffices to show that if the number of measurements satisfies \eqref{NumberofMeasurements_c2}, then the above inequality holds with high
probability. For clarity, the proof is similarly divided into three steps.

\textbf{Step 1: Problem reduction.} In this step, we also use Gordon's Lemma to convert the probability of the targeted event to another one which is easy to handle.

    Let $\vs = [\vs_1^T, \vs_2^T]^T$ and $\vu = [\vu_1^T, \vu_2^T]^T$. Note first that for any $\vr\in\S^{m-1}$, we have
	\begin{equation*}
	\begin{split}
	\min_{\vs \in (\TT_f\times \TT_g)^{\circ}} \|\vs-\mA^T\vr\|_2&=\min_{\vs \in (\TT_f\times \TT_g)^{\circ}}\left\|\begin{bmatrix} \vs_1-\mPhi^T\vr \\ \vs_2-\sqrt{m}\vr \end{bmatrix}\right\|_2\\
	&=\min_{\vs \in (\TT_f\times \TT_g)^{\circ}}\max_{\vu\in\S^{n+m-1}}\ip{\mPhi^T\vr-\vs_1}{\vu_1}+\ip{\sqrt{m}\vr-\vs_2}{\vu_2}\\
	&=\min_{\vs \in (\TT_f\times \TT_g)^{\circ}}\max_{\vu\in\S^{n+m-1}}\ip{\mPhi^T\vr}{\vu_1}+\ip{\sqrt{m}\vr}{\vu_2}-\ip{\vs}{\vu}\\
	&\geq\max_{\vu\in\S^{n+m-1}}\left[\ip{\mPhi^T\vr}{\vu_1}+\ip{\sqrt{m}\vr}{\vu_2}-\max_{\vs \in (\TT_f\times \TT_g)^{\circ}}\ip{\vs}{\vu}\right]\\
	&=\max_{(\vu_1,\vu_2)\in\atop(\TT_f\times \TT_g)\cap\S^{n+m-1}}\ip{\mPhi^T\vr}{\vu_1}+\ip{\sqrt{m}\vr}{\vu_2}.
	\end{split}
	\end{equation*}
	The inequality is due to the max-min inequality. The last line has used the fact that $\max_{\vs \in (\TT_f\times \TT_g)^{\circ}} \ip{\vs}{\vu} = 0$ when $\vu \in \TT_f\times \TT_g$, otherwise it equals $\infty$. Thus we obtain
	\begin{align}\label{bound for min min norms}
	\min_{\vr\in\S^{m-1}} \min_{ \vs \in (\TT_f\times \TT_g)^{\circ}} \| \vs - \mA^T\vr \|_2 	\ge \min_{\vr \in \S^{m-1}} \max_{(\vu_1,\vu_2)\in\atop(\TT_f\times \TT_g)\cap\S^{n+m-1}}\ip{\mPhi^T\vr}{\vu_1}+\ip{\sqrt{m}\vr}{\vu_2}.
	\end{align}
    We then use Gordon's Lemma to bound the probability of the targeted event from below. To this end, for any $\vr \in \S^{m-1}$ and $(\vu_1,\vu_2) \in (\TT_f\times \TT_g)\cap\S^{n+m-1}$, define the following two Gaussian processes
	$$
	X_{\vr,(\vu_1,\vu_2)}:=\ip{\mPhi^T\vr}{\vu_1}+\|\vu_1\|_2\cdot g~
	$$
	and
	$$
	Y_{\vr,(\vu_1,\vu_2)}:=\ip{\vg}{\vu_1}+\|\vu_1\|_2\ip{\vh}{\vr},
	$$
	where $g\sim \NN(0,1)$, $\vg \sim \NN(\vzero, \mI_n)$, and $\vh \sim \NN(\vzero, \mI_m)$ are independent of each other. It can be easily checked that these two Gaussian processes satisfy the conditions in Gordon's Lemma:
	\begin{align*}
	\E X_{\vr,(\vu_1,\vu_2)}^2 & = 2\| \vu_1\|_2^2 = \E Y_{\vr,(\vu_1,\vu_2)}^2, \\
	\E [X_{\vr,(\vu_1,\vu_2)} X_{\vr',(\vu_1',\vu_2')}]-
	\E [Y_{\vr,(\vu_1,\vu_2)} Y_{\vr',(\vu_1',\vu_2')}]&= \ip{\vr}{\vr'}\ip{\vu_1}{\vu_1'}+\|\vu_1\|_2\|\vu_1'\|_2-\ip{\vu_1}{\vu_1'}-\|\vu_1\|_2\|\vu_1'\|_2\ip{\vr}{\vr'}\\
	&=\left(1-\ip{\vr}{\vr'}\right)\left(\|\vu_1\|_2\|\vu_1'\|_2-\ip{\vu_1}{\vu_1'}\right)\\
	&\geq 0.
	\end{align*}
	Here, in the last line, the equality holds when $\vr=\vr'$. It follows from Gordon's Lemma (Fact \ref{Grodon lem}) that (by setting $\tau_{\vr,(\vu_1,\vu_2)}=-\ip{\sqrt{m}\vr}{\vu_2}+0_+$)
	\begin{align*}
	&\Pr{\min_{\vr \in \S^{m-1}} \max_{(\vu_1,\vu_2)\in\atop(\TT_f\times \TT_g)\cap\S^{n+m-1}}Y_{\vr,(\vu_1,\vu_2)}\geq \tau_{\vr,(\vu_1,\vu_2)}}=\Pr{\min_{\vr \in \S^{m-1}} \max_{(\vu_1,\vu_2)\in\atop(\TT_f\times \TT_g)\cap\S^{n+m-1}}\ip{\vg}{\vu_1}+\|\vu_1\|_2\ip{\vh}{\vr}+\ip{\sqrt{m}\vr}{\vu_2}> 0}\\
	&\hspace{150pt}\leq \Pr{\min_{\vr \in \S^{m-1}} \max_{(\vu_1,\vu_2)\in\atop(\TT_f\times \TT_g)\cap\S^{n+m-1}}X_{\vr,(\vu_1,\vu_2)}\geq \tau_{\vr,(\vu_1,\vu_2)}}\\
	&\hspace{150pt}= \Pr{\min_{\vr \in \S^{m-1}} \max_{(\vu_1,\vu_2)\in\atop(\TT_f\times \TT_g)\cap\S^{n+m-1}}\ip{\mPhi^T\vr}{\vu_1}+\|\vu_1\|_2\cdot g+\ip{\sqrt{m}\vr}{\vu_2}> 0}\\
	&\hspace{150pt}\leq  \frac{1}{2}+\frac{1}{2}\Pr{\min_{\vr \in \S^{m-1}} \max_{(\vu_1,\vu_2)\in\atop(\TT_f\times \TT_g)\cap\S^{n+m-1}}\ip{\mPhi^T\vr}{\vu_1}+\|\vu_1\|_2\cdot g+\ip{\sqrt{m}\vr}{\vu_2}> 0 \Big| g \leq 0 }\\
    &\hspace{150pt}\leq  \frac{1}{2}+\frac{1}{2}\Pr{\min_{\vr \in \S^{m-1}} \max_{(\vu_1,\vu_2)\in\atop(\TT_f\times \TT_g)\cap\S^{n+m-1}}\ip{\mPhi^T\vr}{\vu_1}+\ip{\sqrt{m}\vr}{\vu_2}> 0 },
    \end{align*}
	which implies
	\begin{align}\label{p0>2q-1}
	&\Pr{\min_{\vr \in \S^{m-1}} \max_{(\vu_1,\vu_2)\in\atop(\TT_f\times \TT_g)\cap\S^{n+m-1}}\ip{\mPhi^T\vr}{\vu_1}+\ip{\sqrt{m}\vr}{\vu_2}> 0} \notag \\
	&\hspace{135pt}\geq 2\Pr{\underbrace{\min_{\vr \in \S^{m-1}} \max_{(\vu_1,\vu_2)\in\atop(\TT_f\times \TT_g)\cap\S^{n+m-1}}\ip{\vg}{\vu_1}+\|\vu_1\|_2\ip{\vh}{\vr}+\ip{\sqrt{m}\vr}{\vu_2}}_{:= \mathscr{E}_2}> 0}-1.
	\end{align}
	Moreover, $\mathscr{E}_2$ can be bounded
	\begin{equation*}
	\begin{split}
	\mathscr{E}_2	&=\min_{\vr\in\S^{m-1}}\max_{(\vu_1,\vu_2)\in\atop \left(\TT_f\times\TT_g\right)\cap\S^{n+m-1}}\ip{\vg}{\vu_1}+\ip{\vr}{\|\vu_1\|_2\vh+\sqrt{m}\vu_2}\\
	&\geq\max_{(\vu_1,\vu_2)\in\atop \left(\TT_f\times\TT_g\right)\cap\S^{n+m-1}}\min_{\vr\in\S^{m-1}}\ip{\vg}{\vu_1}+\ip{\vr}{\|\vu_1\|_2\vh+\sqrt{m}\vu_2}\\
	&=\max_{(\vu_1,\vu_2)\in\atop \left(\TT_f\times\TT_g\right)\cap\S^{n+m-1}}\ip{\vg}{\vu_1}-\bigg\|\|\vu_1\|_2\vh+\sqrt{m}\vu_2\bigg\|_2\\
	&=\max_{t\in[0,1]}\max_{\vu_1'\in\TT_f\cap\S^{n-1} \atop \vu_2'\in\TT_g\cap\S^{m-1}}t\ip{\vg}{\vu_1'}-\bigg\|t\vh+\sqrt{m(1-t^2)}\vu_2'\bigg\|_2.
	\end{split}
	\end{equation*}
	The inequality holds because of the  max-min inequality. In the last line, we have let $\|\vu_1\|_2=t,~\|\vu_2\|_2=\sqrt{1-t^2}$, $\vu_1'={\vu_1}/{\|\vu_1\|_2},$ and $\vu_2'={\vu_2}/{\|\vu_2\|_2}$.

    Define
	\begin{equation*}
	\begin{split}
	W(\vg,\vh,t):=\max_{\vu_1'\in\TT_f\cap\S^{n-1} \atop \vu_2'\in\TT_g\cap\S^{m-1}}\ip{\vg}{\vu_1'}-\bigg\|\vh+\sqrt{m(\frac{1}{t^2}-1)}\cdot\vu_2'\bigg\|_2.
	\end{split}
	\end{equation*}
    Let \footnote{The effective domain of an extended real-valued function $k(x):X\rightarrow \bar{\R}$ is defined as $\{x\in X | k(x) \in \R \cup \{-\infty\}\}$.}
	\begin{align*}
	t_2\in\arg\max_{t \in [0,1]} W(\vg,\vh,t).
	\end{align*}
    Clearly, $t_2 \neq 0$, since $W(\vg,\vh,t)\to -\infty$ as $t\to 0_{+}$.  Then we have
	\begin{align} \label{bound of q2}
	\Pr{\max_{t\in[0,1]}\max_{\vu_1'\in\TT_f\cap\S^{n-1} \atop \vu_2'\in\TT_g\cap\S^{m-1}}t\ip{\vg}{\vu_1'}-\bigg\|t\vh-\sqrt{m(1-t^2)}\vu_2'\bigg\|_2 > 0}
	&=\Pr{\max_{t\in [0,1]}t\cdot W(\vg,\vh,t)>0}  \notag \\
    & \geq \Pr{t_2\cdot W(\vg,\vh,t_2)>0}\notag \\
	&=\Pr{W(\vg,\vh,t_2)>0}\notag\\
    &=\Pr{\max_{t \in (0,1]} W(\vg,\vh,t)>0}.
	\end{align}

    Combining \eqref{bound for min min norms}, \eqref{p0>2q-1}, and \eqref{bound of q2}, we obtain
	\begin{equation}\label{ResultofStep2Fail}
	\begin{split}
	\Pr{\min_{\vr\in\S^{m-1}} \min_{ \vs \in (\TT_f\times \TT_g)^{\circ}} \| \vs - \mA^T\vr \|_2 > 0}&\geq 2\Pr{\max_{t\in(0,1]}W(\vg,\vh,t) > 0}-1.
	\end{split}
	\end{equation}
    Therefore, our goal reduces to establish the lower bound for $\Pr{\max_{t\in(0,1]}W(\vg,\vh,t) > 0}$.

    \textbf{Step 2: Establish the lower bound for $\Pr{\max_{t\in(0,1]}W(\vg,\vh,t) > 0}$.} In this step, we use the Gaussian concentration inequality to establish the lower bound for $\Pr{\max_{t\in(0,1]}W(\vg,\vh,t) > 0}$.

    Similar to the success case, $\max_{t\in(0,1]}W(\vg,\vh,t)$ can be rewritten as
	\begin{equation*}
	\begin{split}
	\max_{t\in(0,1]} W(\vg,\vh,t)&=\max_{t\in(0,1]}\max_{\vu_1'\in\TT_f\cap\S^{n-1} \atop \vu_2'\in\TT_g\cap\S^{m-1}}\ip{\vg}{\vu_1'}-\bigg\|\vh+\sqrt{m(\frac{1}{t^2}-1)}\cdot\vu_2'\bigg\|_2\\
	&=\max_{\vu_1'\in\TT_f\cap\S^{n-1}}\max_{t'\geq 0 \atop \vu_2'\in\TT_g\cap\S^{m-1}}\ip{\vg}{\vu_1'}-\bigg\|\vh+t'\cdot\vu_2'\bigg\|_2\\
    &=\max_{\vx\in\TT_g\atop \vu_1'\in\TT_f\cap\S^{n-1}}\ip{\vg}{\vu_1'}-\|\vh+\vx\|_2.\\
    \end{split}
	\end{equation*}
    It then follows from Lemma \ref{Lipschitz function} that $\max_{t\in(0,1]} W(\vg,\vh,t)$  is a $\sqrt{2}$-Lipschitz function. Moreover, its expectation can be bounded from below:
	\begin{equation}\label{bound of W2}
	\begin{split}
	\E \max_{t\in(0,1]} W(\vg,\vh,t) &= \E \left(\max_{\vx\in\TT_g\atop \vu_1'\in\TT_f\cap\S^{n-1}}\ip{\vg}{\vu_1'}-\|\vh+\vx\|_2 \right) \\
    &=\omega(\TT_f\cap\S^{n-1})-\E\dist(\vh,\TT_g)\\
	&\geq\omega(\TT_f\cap\S^{n-1})-\sqrt{\E\dist^2(\vh,\TT_g)}\\
	&=\omega(\TT_f\cap\S^{n-1})-\sqrt{m-\E\dist^2(\vh,\TT_g^\circ)}\\
	&=\omega(\TT_f\cap\S^{n-1})-\sqrt{m-\E\bigg(\max_{\vx\in\TT_g\cap\B_2^n}\ip{\vh}{\vx}\bigg)^2}\\
	&\geq\omega(\TT_f\cap\S^{n-1})-\sqrt{m-\omega^2(\TT_g\cap\S^{m-1})}\\
	&\geq\sqrt{\omega^2(\TT_g\cap\S^{m-1})+\omega^2(\TT_f\cap\S^{n-1})}-\sqrt{m}\\
	&\geq\epsilon.
	\end{split}
	\end{equation}
	The second line holds because $-\vh$ and $\vh$ have the same distribution. The first inequality is due to Jensen's inequality. The fourth line has used  Moreau’s decomposition theorem (Fact \ref{Moreau}). The next two lines follow from  Facts \ref{omega & dist} and  \ref{relation_w_d}, respectively. The last two lines holds because of the measurement condition \eqref{NumberofMeasurements_c2}.
	
	Now applying the Gaussian concentration inequality (Fact \ref{gaussian concentration}) yields
	\begin{align*}
	&\Pr{ \max_{t\in(0,1]}W(\vg,\vh,t) - \E \max_{t\in(0,1]}W(\vg,\vh,t) \leq -\epsilon} \leq \exp\left(\frac{-\epsilon^2}{4}\right),
	\end{align*}
	which in turn implies that
	\begin{equation}\label{pr of W2>0}
	\begin{split}
	\Pr{ \max_{t\in(0,1]}W(\vg,\vh,t) > 0}	&\geq\Pr{ \max_{t\in(0,1]}W(\vg,\vh,t) >\E \max_{t\in(0,1]}W(\vg,\vh,t)-\epsilon}\\
	&\geq 1-\exp\left(\frac{-\epsilon^2}{4}\right).
	\end{split}
	\end{equation}

    \textbf{Step 3: Complete the proof.} Putting \eqref{ResultofStep2Fail}  and \eqref{pr of W2>0} together, we have
	\begin{equation*}
	\begin{split}
	\Pr{\min_{\vr\in\S^{m-1}} \min_{ \vs \in (\TT_f\times \TT_g)^{\circ}} \| \vs - \mA^T\vr \|_2 > 0}&\geq 2\Pr{\max_{t\in(0,1]}W(\vg,\vh,t) > 0}-1\\
	&\geq 1-2\exp\left(\frac{-\epsilon^2}{4}\right).
	\end{split}
	\end{equation*}
	
	Thus we have established that when $\sqrt{m} \leq\sqrt{\omega^2\left(\TT_f\cap\S^{n-1}\right)+\omega^2\left(\TT_g\cap\S^{m-1}\right)} -\epsilon$, the constrained procedures \eqref{constrained_procedure1} and \eqref{constrained_procedure2} fail with probability at least $1-2\exp\left(-\epsilon^2/4\right)$. This completes the proof.
	
\end{proof}

\section{Proofs of Lemma \ref{lem: success and failure} and Theorem \ref{them: Phase transition}}
In this appendix, we prove the phase transition result of the penalized recovery procedure. Some auxiliary lemma and facts used in the proofs are included in Appendix \ref{auxiliaryresults}.

\subsection{Proof of Lemma \ref{lem: success and failure}}
\begin{proof}
	The penalized recovery procedure \eqref{penilized_procedure} can be reformulated as the following unconstrained form
	\begin{align*}
	\min_{\vx} ~f(\vx)+\lambda\cdot g\left(\frac{1}{\sqrt{m}}(\vy-\mPhi\vx)\right).
	\end{align*}
	Define $F(\vx)= f(\vx) + \lambda\cdot g\left(\frac{1}{\sqrt{m}}(\vy-\mPhi\vx)\right)$. Clearly, $F(\vx)$ is a proper convex function. It follows from \cite[Theorems 23.8 and 23.9]{rockafellar1970convex} that the subdifferential of $F$ at $\vx^\star$ is given by
	\begin{align*}
	\partial F(\vx^\star)=\partial f(\vx^\star)-\frac{\lambda}{\sqrt{m}}\mPhi^T\cdot\partial g(\vv^\star).
	\end{align*}
	Moreover, $F(\vx)$ attains its minimum at $\vx^\star$ if and only if $\vzero\in\partial F(\vx^\star)$ \cite[Theorems 27.1]{rockafellar1970convex}. Therefore, if
	\begin{align}\label{sucesscondition}
	\vzero\in \mPhi^T\cdot\partial g(\vv^\star)-\frac{\sqrt{m}}{\lambda}\partial f(\vx^\star),
	\end{align}
	then the penalized problem \eqref{penilized_procedure} succeeds. If $\vzero\notin\partial F(\vx^\star)$, i.e.,
	\begin{align*}
	\min_{\va\in\partial f(\vx^\star) , \vb\in\partial g(\vv^\star)} \|\mPhi^T\vb-\frac{\sqrt{m}}{\lambda}\va\|_2>0.
	\end{align*}
	then the penalized problem \eqref{penilized_procedure} fails.
	
	Clearly, \eqref{sucesscondition} holds if $\vzero \in \mM(\TT_J\cap\S^{n+m-1})$. Since $f$ and $g$ are proper convex functions, then $\partial f(\vx^\star)$ and $\partial g(\vv^\star)$ are nonempty, closed convex sets, and hence $\TT_J\cap\S^{n+m-1}$ is nonempty, closed, and spherically convex. A direct application of the polarity principle (Fact \ref{polarity argument}) yields the desired sufficient condition \eqref{polar_s_con}.
	
\end{proof}

\subsection{Proof of Theorem \ref{them: Phase transition}}

\begin{proof}
	\textbf{Success case:} By Lemma \ref{lem: success and failure}, the penalized problem \eqref{penilized_procedure} succeeds if
	\begin{align*}
	\min_{\vr\in\S^{n-1}}\min_{\vs\in\TT_J^\circ} \|\vs-\mM^T\vr\|_2>0,
	\end{align*}
	where $\mM=[-\frac{\sqrt{m}}{\lambda}\mI_n, \mPhi^T]$. So it is sufficient to show that if the number of measurements satisfies \eqref{NumberofMeasurements1}, then the above inequality
    holds with high probability. The proof is also divided into three steps.

   \textbf{Step 1:  Problem reduction.} In this step, we similarly use Gordon's Lemma to convert the probability of the targeted event to another one which can be handled easily.

    Let $\vs = [\vs_1^T, \vs_2^T]^T$ and $\vu = [\vu_1^T, \vu_2^T]^T$. Note first that for any $\vr\in\S^{n-1}$, we have
	\begin{equation*}
	\begin{split}
	\min_{\vs\in\TT_J^\circ} \|\vs-\mM^T\vr\|_2&=\min_{\vs\in\TT_J^\circ}\left\|\begin{bmatrix} \vs_1+\frac{\sqrt{m}}{\lambda}\vr \\ \vs_2-\mPhi\vr \end{bmatrix}\right\|_2\\
	&=\min_{\vs\in\TT_J^\circ}\max_{\vu\in\S^{n+m-1}}\ip{\vs_1+\frac{\sqrt{m}}{\lambda}\vr}{-\vu_1}+\ip{\vs_2-\mPhi\vr}{-\vu_2}\\
	&=\min_{\vs\in\TT_J^\circ}\max_{\vu\in\S^{n+m-1}}\ip{\mPhi\vr}{\vu_2}-\ip{\frac{\sqrt{m}}{\lambda}\vr}{\vu_1}-\ip{\vs}{\vu}\\
	&\geq\max_{\vu\in\S^{n+m-1}}\left[\ip{\mPhi\vr}{\vu_2}-\ip{\frac{\sqrt{m}}{\lambda}\vr}{\vu_1}-\max_{\vs\in\TT_J^\circ}\ip{\vs}{\vu}\right]\\
	&=\max_{\vu\in\TT_J\cap\S^{n+m-1}}\ip{\mPhi\vr}{\vu_2}-\ip{\frac{\sqrt{m}}{\lambda}\vr}{\vu_1}.
	\end{split}
	\end{equation*}
	The inequality is due to the max-min inequality. The last step has used the fact that $\max_{\vs\in\TT_J^\circ}\ip{\vs}{\vu}=0$ when $\vu\in\TT_J$, otherwise it equals $\infty$. Thus we obtain
	\begin{align}\label{Lowbound1}
	&\min_{\vr\in\S^{n-1}}\min_{\vs\in\TT_J^\circ} \|\vs-\mM^T\vr\|_2\quad\geq\quad\min_{\vr\in\S^{n-1}}\max_{\vu\in\TT_J\cap\S^{n+m-1}}\ip{\mPhi\vr}{\vu_2}-\ip{\frac{\sqrt{m}}{\lambda}\vr}{\vu_1}.
	\end{align}
	We then use Gordon's Lemma to establish a lower bound for the probability of the targeted event. To this end, for any $\vr \in \S^{n-1}$ and $\vu \in \TT_J \cap \S^{n+m-1}$, define the following two Gaussian processes
	$$
	X_{\vr,\vu}:=\ip{\mPhi\vr}{\vu_2}+\|\vu_2\|_2\cdot g~~~
	$$
	and
	$$
	Y_{\vr,\vu}:=\ip{\vh}{\vu_2}+\|\vu_2\|_2\ip{\vg}{\vr}
	$$
	where $g \sim \NN(0, 1)$, $\vh \sim \NN(\vzero, \mI_m)$, and $\vg \sim \NN(\vzero, \mI_n)$ are independent of each other. Direct calculations show that these two defined Gaussian processes satisfy the conditions in Gordon's Lemma:
	\begin{align*}
	\E X_{\vr,\vu}^2 &= 2\| \vu_2\|_2^2 = \E Y_{\vr,\vu}^2,\\
	\E [X_{\vr,\vu} X_{\vr',\vu'}]-
	\E [Y_{\vr,\vu} Y_{\vr',\vu'}]&= \ip{\vr}{\vr'}\ip{\vu_2}{\vu_2'}+\|\vu_2\|_2\|\vu_2'\|_2-\ip{\vu_2}{\vu_2'}-\|\vu_2\|_2\|\vu_2'\|_2\ip{\vr}{\vr'}\\
	&=\left(1-\ip{\vr}{\vr'}\right)\left(\|\vu_2\|_2\|\vu_2'\|_2-\ip{\vu_2}{\vu_2'}\right)\\
	&\geq 0.
	\end{align*}
	In the last line, the equality holds when $\vr=\vr'$. It follows from Gordon's lemma (Fact \ref{Grodon lem}) that (by setting $\tau_{\vr,\vu}=\frac{\sqrt{m}}{\lambda}\vu_1^T\vr+0_+$)
	\begin{align*}
	\Pr{\min_{\vr\in\S^{n-1}}\max_{\vu\in\TT_J\cap\S^{n+m-1}}Y_{\vr,\vu}\geq \tau_{\vr,\vu}}&=\Pr{\min_{\vr\in\S^{n-1}}\max_{\vu\in\TT_J\cap\S^{n+m-1}}\ip{\vh}{\vu_2}+\|\vu_2\|_2\ip{\vg}{\vr}-\ip{\frac{\sqrt{m}}{\lambda}\vr}{\vu_1}> 0}\\
	&\leq \Pr{\min_{\vr\in\S^{n-1}}\max_{\vu\in\TT_J\cap\S^{n+m-1}}X_{\vr,\vu}\geq \tau_{\vr,\vu}}\\
	&= \Pr{\min_{\vr\in\S^{n-1}}\max_{\vu\in\TT_J\cap\S^{n+m-1}}\ip{\mPhi\vr}{\vu_2}+\|\vu_2\|_2\cdot g-\ip{\frac{\sqrt{m}}{\lambda}\vr}{\vu_1}> 0}\\
	&\leq \frac{1}{2} + \frac{1}{2} \Pr{\min_{\vr\in\S^{n-1}}\max_{\vu\in\TT_J\cap\S^{n+m-1}}\ip{\mPhi\vr}{\vu_2}+\|\vu_2\|_2\cdot g-\ip{\frac{\sqrt{m}}{\lambda}\vr}{\vu_1}> 0 ~ \big| ~g \leq 0}\\
    &\leq \frac{1}{2} + \frac{1}{2} \Pr{\min_{\vr\in\S^{n-1}}\max_{\vu\in\TT_J\cap\S^{n+m-1}}\ip{\mPhi\vr}{\vu_2}-\ip{\frac{\sqrt{m}}{\lambda}\vr}{\vu_1}> 0},
    \end{align*}
	which implies
	\begin{align}\label{pr of U>0}
	&\Pr{\min_{\vr\in\S^{n-1}}\max_{\vu\in\TT_J\cap\S^{n+m-1}}\ip{\mPhi\vr}{\vu_2}-\ip{\frac{\sqrt{m}}{\lambda}\vr}{\vu_1}> 0} \notag \\
	&\hspace{135pt}\geq 2\Pr{\underbrace{\min_{\vr\in\S^{n-1}}\max_{\vu\in\TT_J\cap\S^{n+m-1}}\ip{\vh}{\vu_2}+\|\vu_2\|_2\ip{\vg}{\vr}-\ip{\frac{\sqrt{m}}{\lambda}\vr}{\vu_1}}_{:=~\mathscr{E}_3}> 0}-1.
	\end{align}
	Moreover, $\mathscr{E}_3$ can be bounded from below as follows
	\begin{equation*}
	\begin{split}
	\mathscr{E}_3&=\min_{\vr\in\S^{n-1}}\max_{\vu\in\TT_J\cap\S^{n+m-1}}\ip{\vh}{\vu_2}+\ip{\vr}{\|\vu_2\|_2\vg-\frac{\sqrt{m}}{\lambda}\vu_1}\\
	&\geq\max_{\vu\in\TT_J\cap\S^{n+m-1}} \min_{\vr\in\S^{n-1}}\ip{\vh}{\vu_2}+\ip{\vr}{\|\vu_2\|_2\vg-\frac{\sqrt{m}}{\lambda}\vu_1}\\
	&=\max_{\vu\in\TT_J\cap\S^{n+m-1}}\ip{\vh}{\vu_2}-\left\|\|\vu_2\|_2\vg-\frac{\sqrt{m}}{\lambda}\vu_1\right\|_2\\
	&=\max_{\va\in\partial f(\vx^\star) \atop \vb\in\partial g(\vv^\star)} \underbrace{\frac{\|\vb\|_2}{\sqrt{\|\va\|_2^2+\|\vb\|_2^2}}}_{:= c(\va,\vb)}  \cdot\left[\ip{\vh}{\frac{\vb}{\|\vb\|_2}}-\left\|\vg-\frac{\sqrt{m}}{\lambda\|\vb\|_2}\va\right\|_2\right]\\
	\end{split}
	\end{equation*}
The first inequality holds because of the max-min inequality. In the last line, recall that the joint cone is defined as $\TT_J=\{t\cdot(\va,\vb):t\geq 0,~\va\in\partial f(\vx^\star),~\vb\in\partial g(\vv^\star)\}$, so we have let $\vu_1=\frac{\va}{\sqrt{\|\va\|_2^2+\|\vb\|_2^2}}$ and $\vu_2=\frac{\vb}{\sqrt{\|\va\|_2^2+\|\vb\|_2^2}}$ for $\va\in\partial f(\vx^\star),~\vb\in\partial g(\vv^\star)$.

Since $\partial f(\vx^\star)$ and $\partial g(\vv^\star)$ are nonempty and closed, we choose $(\va_0,\vb_0)$ such that
$$
(\va_0,\vb_0)\in\arg\max_{\va\in\partial f(\vx^\star) \atop \vb\in\partial g(\vv^\star)}\ip{\vh}{\frac{\vb}{\|\vb\|_2}}-\left\|\vg-\frac{\sqrt{m}}{\lambda\|\vb\|_2}\va\right\|_2,
$$
which leads to
	\begin{equation*}
	\begin{split}
	\mathscr{E}_3&\geq c(\va_0,\vb_0)\cdot\left[\ip{\vh}{\frac{\vb_0}{\|\vb_0\|_2}}-\left\|\vg-\frac{\sqrt{m}}{\lambda\|\vb_0\|_2}\va_0\right\|_2\right]\\
	&= c(\va_0,\vb_0)\cdot\max_{\va\in\partial f(\vx^\star) \atop \vb\in\partial g(\vv^\star)}\ip{\vh}{\frac{\vb}{\|\vb\|_2}}-\left\|\vg-\frac{\sqrt{m}}{\lambda\|\vb\|_2}\va\right\|_2\\
	&= c(\va_0,\vb_0)\cdot\max_{\alpha\leq t\leq\beta}\max_{\va\in\partial f(\vx^\star) \atop \vb\in\partial g(\vv^\star)\cap t\S^{m-1}}\ip{\vh}{\frac{\vb}{t}}-\left\|\vg-\frac{\sqrt{m}}{\lambda t}\va\right\|_2\\
	\end{split}
	\end{equation*}
In the last line, we have let $\|\vb\|_2=t$, $\alpha=\min_{\vb\in\partial g(\vv^\star)}\|\vb\|_2$, and $\beta=\max_{\vb\in\partial g(\vv^\star)}\|\vb\|_2$.

Define
\begin{equation*}
  L(\vg,\vh,t) := \max_{\va\in\partial f(\vx^\star) \atop \vb\in\partial g(\vv^\star)\cap t\S^{m-1}}\ip{\vh}{\frac{\vb}{t}}-\left\|\vg-\frac{\sqrt{m}}{\lambda t}\va\right\|_2,
\end{equation*}
and choose $t_3$ such that
\begin{align*}
t_3\in\arg\min_{\alpha\leq t\leq \beta} \E&\left[2\cdot\dist\left(\vg,\frac{\sqrt{m}}{\lambda t}\partial f(\vx^\star)\right)+\dist^2\left(\vh,\frac{1}{t}\partial g(\vv^\star)\cap \S^{m-1}\right)-1\right].
\end{align*}
Then we have
	\begin{equation}\label{boundofproblemreduction}
	\begin{split}
	\mathscr{E}_3&\geq c(\va_0,\vb_0)\cdot\max_{\alpha\leq t\leq\beta} L(\vg,\vh,t)\\
                 &\geq c(\va_0,\vb_0)\cdot L(\vg,\vh,t_3).
	\end{split}
	\end{equation}
Combining \eqref{Lowbound1}, \eqref{pr of U>0}, and  \eqref{boundofproblemreduction} yields
	\begin{equation}\label{Lowbound2}
	\begin{split}
	\Pr{ \min_{\vr\in\S^{n-1}}\min_{\vs\in\TT_J^\circ} \|\vs-\mM^T\vr\|_2>0}&\geq 2\Pr{c(\va_0,\vb_0)\cdot L(\vg,\vh,t_3) > 0}-1\\
	&= 2\Pr{ L(\vg,\vh,t_3) > 0}-1,
	\end{split}
	\end{equation}
where the last line holds because $0 \notin \partial g(\vv^\star)$, and hence $c(\va_0,\vb_0) >0$. Thus it suffices to establish the lower bound for $\Pr{ L(\vg,\vh,t_3) > 0}$.

\textbf{Step 2: Establish the lower bound for $\Pr{ L(\vg,\vh,t_3) > 0}$.} In this step, we apply the Gaussian concentration inequality to establish the lower bound for $\Pr{ L(\vg,\vh,t_3) > 0}$.


It follows from Lemma \ref{Lipschitz function} that the function $L(\vg,\vh,t_3)$ is a $\sqrt{2}$-Lipschitz function. Its expectation can be bounded from below as follows:
	\begin{equation*}
	\begin{split}
	\E L(\vg,\vh,t_3)
	&=\E\max_{\va\in\partial f(\vx^\star) \atop \vb\in\partial g(\vv^\star)\cap t_3\S^{m-1}}\ip{\vh}{\frac{\vb}{t_3}}-\left\|\vg-\frac{\sqrt{m}}{\lambda t_3}\va\right\|_2\\
	&=\E \max_{\va\in\partial f(\vx^\star) \atop \vb\in\partial g(\vv^\star)\cap t_3\S^{m-1}}\frac{1+\|\vh\|_2^2-\|\vh-\frac{\vb}{t_3}\|_2^2}{2}-\left\|\vg-\frac{\sqrt{m}}{\lambda t_3}\va\right\|_2\\
	&=\frac{m}{2}-\E\min_{\va\in\partial f(\vx^\star) \atop \vb\in\partial g(\vv^\star)\cap t_3\S^{m-1}}\left(\left\|\vg-\frac{\sqrt{m}}{\lambda t_3}\va\right\|_2+\frac{1}{2}\|\vh-\frac{\vb}{t_3}\|_2^2-\frac{1}{2}\right)\\
    & = \frac{m}{2}-\E \left[  \dist\left(\vg,\frac{\sqrt{m}}{\lambda t_3}\partial f(\vx^\star)\right) + \frac{1}{2}\cdot \dist^2\left(\vh,\frac{1}{t_3}\partial g(\vv^\star)\cap \S^{m-1}\right) - \frac{1}{2}  \right] \\
	& = \frac{m}{2}- \min_{\alpha\leq t\leq \beta} \E \left[  \dist\left(\vg,\frac{\sqrt{m}}{\lambda t}\partial f(\vx^\star)\right) + \frac{1}{2}\cdot \dist^2\left(\vh,\frac{1}{t}\partial g(\vv^\star)\cap \S^{m-1}\right) - \frac{1}{2}  \right] \\
    &\geq \epsilon/2.
	\end{split}
	\end{equation*}
	The last line is due to the  measurement condition \eqref{NumberofMeasurements1}.
	
	Now using the Gaussian concentration inequality (Fact \ref{gaussian concentration}) yields
	\begin{equation*}
	\Pr{ L(\vg,\vh,t_3) - \E L(\vg,\vh,t_3) \leq -\epsilon/2} \leq \exp\left(\frac{-\epsilon^2}{16}\right).
	\end{equation*}
	which in turn implies
	\begin{equation}\label{pr of L>0}
	\begin{split}
	\Pr{ L(\vg,\vh,t_3) > 0}&\geq\Pr{ L(\vg,\vh,t_3) > \E L(\vg,\vh,t_3)-\epsilon/2}\\
	&\geq 1-\exp\left(\frac{-\epsilon^2}{16}\right).
	\end{split}
	\end{equation}

\textbf{Step 3: Complete the proof.}

    Combining \eqref{Lowbound2} and \eqref{pr of L>0}, we have
	\begin{equation*}
	\begin{split}
	\Pr{ \min_{\vr\in\S^{n-1}}\min_{\vs\in\TT_J^\circ} \|\vs-\mM^T\vr\|_2>0} & \geq  2\Pr{ L(\vg,\vh,t_3) > 0}-1\\
	&\geq 1-2\exp\left(\frac{-\epsilon^2}{16}\right).
	\end{split}
	\end{equation*}
	This means that when $m \geq \mathscr{C}_p(\lambda)+\epsilon$, the penalized problem \eqref{penilized_procedure} succeeds with probability at least $1-2\exp\left(\frac{-\epsilon^2}{16}\right)$.

	\textbf{Failure case:} According to Lemma \ref{lem: success and failure}, the penalized problem \eqref{penilized_procedure} fails if
	\begin{align*}
	\min_{\va\in\partial f(\vx^\star)\atop \vb\in\partial g(\vv^\star)} \|\mPhi^T\vb-\frac{\sqrt{m}}{\lambda}\va\|_2>0.
	\end{align*}
    So it is enough to show that if the number of measurements satisfies \eqref{NumberofMeasurements2}, then the above inequality
    holds with high probability. For clarity, the proof is similarly divided into three steps.

    \textbf{Step 1: Problem reduction.} In this step, we employ Gordon's Lemma to convert the probability of the targeted event to another one which is easy to handle.

	Note that
	\begin{equation}\label{bound of failure cond}
	\min_{\va\in\partial f(\vx^\star) \atop \vb\in\partial g(\vv^\star)} \|\mPhi^T\vb-\frac{\sqrt{m}}{\lambda}\va\|_2=\min_{\va\in\partial f(\vx^\star) \atop \vb\in\partial g(\vv^\star)}\max_{\vu\in\S^{n-1}}\ip{\mPhi^T\vb}{\vu}-\ip{\frac{\sqrt{m}}{\lambda}\va}{\vu}.
	\end{equation}
	We then use Gordon's Lemma to establish a lower bound for the probability of the targeted event. To this end, for any $(\va,\vb) \in \partial f(\vx^{\star}) \times \partial g(\vv^{\star})$ and $\vu\in \S^{n-1}$,  define the following two Gaussian processes
	$$
	X_{(\va,\vb),\vu}:=\ip{\mPhi^T\vb}{\vu}+ \|\vb\|_2\cdot g
	$$
	and
	$$
	Y_{(\va,\vb),\vu}:=\|\vb\|_2\ip{\vg}{\vu}+\ip{\vh}{\vb},
	$$
	here $g \sim \NN(0, 1)$, $\vg \sim \NN(\vzero, \mI_n)$, and $\vh \sim \NN(\vzero, \mI_m)$ are independent of each other. It is not hard to check that these two defined processes satisfy the conditions in Gordon's Lemma:
	\begin{align*}
	\E X_{(\va,\vb),\vu}^2 &= 2\|\vb\|_2^2= \E Y_{(\va,\vb),\vu}^2, \\
	\E [X_{(\va,\vb),\vu} X_{(\va',\vb'),\vu'}]-
	\E [Y_{(\va,\vb),\vu} Y_{(\va',\vb'),\vu'}]&= \ip{\vu}{\vu'}\ip{\vb}{\vb'}+\|\vb\|_2\|\vb'\|_2-\ip{\vb}{\vb'}-\|\vb\|_2\|\vb'\|_2\ip{\vu}{\vu'}\\
	&=\left(1-\ip{\vu}{\vu'}\right)\left(\|\vb\|_2\|\vb'\|_2-\ip{\vb}{\vb'}\right)\\
	&\geq 0.
	\end{align*}
	In the last line, the equality holds when $\vb=\vb'$. It follows from Gordon's Lemma (Fact \ref{Grodon lem}) that (by setting $\tau_{(\va,\vb),\vu}=\frac{\sqrt{m}}{\lambda}\va^T\vu+0_+$)
	\begin{align*}
	\Pr{\min_{\va\in\partial f(\vx^\star) \atop \vb\in\partial g(\vv^\star)}\max_{\vu\in\S^{n-1}}Y_{(\va,\vb),\vu}\geq \tau_{(\va,\vb),\vu}}&=\Pr{\min_{\va\in\partial f(\vx^\star) \atop \vb\in\partial g(\vv^\star)}\max_{\vu\in\S^{n-1}}\|\vb\|_2\ip{\vg}{\vu}+\ip{\vh}{\vb}-\ip{\frac{\sqrt{m}}{\lambda}\va}{\vu}> 0}\\
	&\leq\Pr{\min_{\va\in\partial f(\vx^\star) \atop \vb\in\partial g(\vv^\star)}\max_{\vu\in\S^{n-1}}X_{(\va,\vb),\vu}\geq \tau_{(\va,\vb),\vu}} \\
	&= \Pr{\min_{\va\in\partial f(\vx^\star) \atop \vb\in\partial g(\vv^\star)}\max_{\vu\in\S^{n-1}}\ip{\mPhi^T\vb}{\vu}+\|\vb\|_2\cdot g-\ip{\frac{\sqrt{m}}{\lambda}\va}{\vu}> 0}\\
	&\leq \frac{1}{2}+ \frac{1}{2}\Pr{\min_{\va\in\partial f(\vx^\star) \atop \vb\in\partial g(\vv^\star)}\max_{\vu\in\S^{n-1}}\ip{\mPhi^T\vb}{\vu}+\|\vb\|_2\cdot g-\ip{\frac{\sqrt{m}}{\lambda}\va}{\vu}> 0 \Big | g \leq 0}\\
    &\leq \frac{1}{2}+ \frac{1}{2}\Pr{\min_{\va\in\partial f(\vx^\star) \atop \vb\in\partial g(\vv^\star)}\max_{\vu\in\S^{n-1}}\ip{\mPhi^T\vb}{\vu} -\ip{\frac{\sqrt{m}}{\lambda}\va}{\vu}> 0},
\end{align*}
	which implies
	\begin{align}\label{pr of W>0}
	&\Pr{\min_{\va\in\partial f(\vx^\star) \atop \vb\in\partial g(\vv^\star)}\max_{\vu\in\S^{n-1}}\ip{\mPhi^T\vb}{\vu}-\ip{\frac{\sqrt{m}}{\lambda}\va}{\vu}> 0} \notag \\
	&\hspace{135pt}\geq 2\Pr{\underbrace{\min_{\va\in\partial f(\vx^\star) \atop \vb\in\partial g(\vv^\star)}\max_{\vu\in\S^{n-1}}\|\vb\|_2\ip{\vg}{\vu}+\ip{\vh}{\vb}-\ip{\frac{\sqrt{m}}{\lambda}\va}{\vu}}_{:=\mathscr{E}_4}> 0}-1.
	\end{align}
	Moreover,  $\mathscr{E}_4$ can be bounded from below as follows:
	\begin{equation*}
	\begin{split}
	\mathscr{E}_4 &=\min_{\va\in\partial f(\vx^\star) \atop \vb\in\partial g(\vv^\star)}\max_{\vu\in\S^{n-1}}\ip{\vu}{\|\vb\|_2\vg-\frac{\sqrt{m}}{\lambda}\va}+\ip{\vh}{\vb}\\
	&=\min_{\va\in\partial f(\vx^\star) \atop \vb\in\partial g(\vv^\star)}\left\|\|\vb\|_2\vg-\frac{\sqrt{m}}{\lambda}\va\right\|_2+\ip{\vh}{\vb}\\
	&=\left\|\|\vb_1\|_2\vg-\frac{\sqrt{m}}{\lambda}\va_1\right\|_2+\ip{\vh}{\vb_1}\\
	&\geq \|\vb_1\|_2\cdot\min_{\va\in\partial f(\vx^\star) \atop \vb\in\partial g(\vv^\star)}\left\|\vg-\frac{\sqrt{m}}{\lambda\|\vb\|_2}\va\right\|_2+\ip{\vh}{\frac{\vb}{\|\vb\|_2}}\\
	&= \|\vb_1\|_2\cdot\min_{\alpha\leq t\leq\beta}\min_{\va\in\partial f(\vx^\star) \atop \vb\in\partial g(\vv^\star)\cap t\S^{m-1}}\left\|\vg-\frac{\sqrt{m}}{\lambda t}\va\right\|_2+\ip{\vh}{\frac{\vb}{t}}\\
	\end{split}
	\end{equation*}
	In the third line, we have chosen $(\va_1,\vb_1)$ such that
	$$
	(\va_1,\vb_1)\in\arg\min_{\va\in\partial f(\vx^\star) \atop \vb\in\partial g(\vv^\star)}\left\|\|\vb\|_2\vg-\frac{\sqrt{m}}{\lambda}\va\right\|_2+\ip{\vh}{\vb}.
	$$
	In the last line, we have let $\|\vb\|_2=t$, $\alpha=\min_{\vb\in\partial g(\vv^\star)}\|\vb\|_2$, and $\beta=\max_{\vb\in\partial g(\vv^\star)}\|\vb\|_2$.

    Define
    \begin{equation*}
      V(\vg,\vh,t) = \min_{\va\in\partial f(\vx^\star) \atop \vb\in\partial g(\vv^\star)\cap t\S^{m-1}}\left\|\vg-\frac{\sqrt{m}}{\lambda t}\va\right\|_2+\ip{\vh}{\frac{\vb}{t}},
    \end{equation*}
    and choose $t_4$ such that
	\begin{align*}
	t_4\in\arg\min_{\alpha\leq t\leq \beta}V(\vg,\vh,t).
	\end{align*}
    Then we have
    \begin{equation}\label{bound of W}
	\begin{split}
	\mathscr{E}_4 & \geq \|\vb_1\|_2\cdot\min_{\alpha\leq t\leq \beta}V(\vg,\vh,t)\\
                  &=\|\vb_1\|_2\cdot V(\vg,\vh,t_4).
	\end{split}
	\end{equation}

    Combining  \eqref{bound of failure cond}, \eqref{pr of W>0}, and \eqref{bound of W} yields
    \begin{equation}\label{boundofstep12}
	\begin{split}
	\Pr{\min_{\va\in\partial f(\vx^\star) \atop\vb\in\partial g(\vv^\star)} \|\mPhi^T\vb-\frac{\sqrt{m}}{\lambda}\va\|_2>0} &\geq 2\Pr{\|\vb_1\|_2\cdot V(\vg,\vh,t_4) > 0}-1\\
	&= 2\Pr{V(\vg,\vh,t_4) > 0}-1,
	\end{split}
	\end{equation}
    where we have used the fact that $0 \notin \partial g(\vv^\star)$ and hence $\|\vb_1\|_2 >0$. Thus it is enough to establish the lower bound for $\Pr{V(\vg,\vh,t_4) > 0}$.

    \textbf{Step 2: Establish the lower bound for $\Pr{V(\vg,\vh,t_4) > 0}$.} In this step, we use the Gausian concentration inequaltiy to establish the lower bound for $\Pr{V(\vg,\vh,t_4) > 0}$.

    By Lemma \ref{Lipschitz function}, the function $V(\vg,\vh,t_4)$ is a $\sqrt{2}$-Lipschitz function. Its expectation can be bounded:
	\begin{equation}\label{bound of V}
	\begin{split}
	\E V(\vg,\vh,t_4)&=\E\min_{\va\in\partial f(\vx^\star) \atop \vb\in\partial g(\vv^\star)\cap t_4\S^{m-1}}\left\|\vg-\frac{\sqrt{m}}{\lambda t_4}\va\right\|_2+\ip{\vh}{\frac{\vb}{t_4}}\\
	&=\E\min_{\va\in\partial f(\vx^\star) \atop \vb\in\partial g(\vv^\star)\cap t_4\S^{m-1}}\left\|\vg-\frac{\sqrt{m}}{\lambda t_4}\va\right\|_2+\ip{-\vh}{\frac{\vb}{t_4}}\\
	&=\E \min_{\va\in\partial f(\vx^\star) \atop \vb\in\partial g(\vv^\star)\cap t_4\S^{m-1}}\left\|\vg-\frac{\sqrt{m}}{\lambda t_4}\va\right\|_2+\frac{\|\vh-\frac{\vb}{t_4}\|_2^2-1-\|\vh\|_2^2}{2}\\
	&=\E  \left[ \min_{\va\in\partial f(\vx^\star) \atop \vb\in\partial g(\vv^\star)\cap t_4\S^{m-1}}\left(\left\|\vg-\frac{\sqrt{m}}{\lambda t_4}\va\right\|_2+\frac{1}{2}\|\vh-\frac{\vb}{t_4}\|_2^2-\frac{1}{2}\right)\right]-\frac{m}{2}\\
	&\geq \min_{\alpha\leq t\leq \beta}\E \left[ \min_{\va\in\partial f(\vx^\star) \atop \vb\in\partial g(\vv^\star)\cap t\S^{m-1}}\left(\left\|\vg-\frac{\sqrt{m}}{\lambda t}\va\right\|_2+\frac{1}{2}\|\vh-\frac{\vb}{t}\|_2^2-\frac{1}{2}\right)\right]-\frac{m}{2}\\
    & =  \min_{\alpha\leq t\leq \beta}\E \left[ \dist\left(\vg,\frac{\sqrt{m}}{\lambda t}\partial f(\vx^\star)\right) + \frac{1}{2} \dist^2\left(\vh,\frac{1}{t}\partial g(\vv^\star)\cap \S^{m-1}\right) - \frac{1}{2} \right]-\frac{m}{2} \\
	&\geq \epsilon/2.
	\end{split}
	\end{equation}
	The second line is due to the fact that $-\vh$ and $\vh$ have the same distribution. The last inequality is due to the measurement condition \eqref{NumberofMeasurements2}.
	
	Now it follows from the Gaussian concentration inequality (Fact \ref{gaussian concentration}) that
	\begin{equation*}
	\Pr{ V(\vg,\vh,t_4) - \E V(\vg,\vh,t_4) \leq -\epsilon/2} \leq \exp\left(\frac{-\epsilon^2}{16}\right),
	\end{equation*}
	which in turn implies
	\begin{equation}\label{pr of V>0}
	\begin{split}
	\Pr{ V(\vg,\vh,t_4) > 0}&\geq\Pr{ V(\vg,\vh,t_4) >\E V(\vg,\vh,t_4)-\epsilon/2}\\
	&\geq 1-\exp\left(\frac{-\epsilon^2}{16}\right).
	\end{split}
	\end{equation}

\textbf{Step 3: Complete the proof.}

Putting \eqref{boundofstep12} and \eqref{pr of V>0} together, we have
	\begin{equation*}
	\begin{split}
	\Pr{\min_{\va\in\partial f(\vx^\star) \atop\vb\in\partial g(\vv^\star)} \|\mPhi^T\vb-\frac{\sqrt{m}}{\lambda}\va\|_2>0} & \geq  2\Pr{V(\vg,\vh,t_4) > 0}-1\\
	&\geq 1-2\exp\left(\frac{-\epsilon^2}{16}\right).
	\end{split}
	\end{equation*}
	Thus we have shown that when $m \leq  \mathscr{C}_p(\lambda)-\epsilon$, the penalized problem \eqref{penilized_procedure} fails with probability at least $1-2\exp\left(\frac{-\epsilon^2}{16}\right)$. This completes the proof.

\end{proof}

\section{Proof of Theorem \ref{relation_CC_PP}}
In this appendix, we prove our last theorem which establishes the relationship between constrained and penalized recovery procedures and illustrates how to select the optimal parameter $\lambda$ for the penalized method. Some auxiliary lemma and facts used in the proof are included in Appendix \ref{auxiliaryresults}.

\begin{figure*}
	\centering
	\includegraphics[width= .6\textwidth]{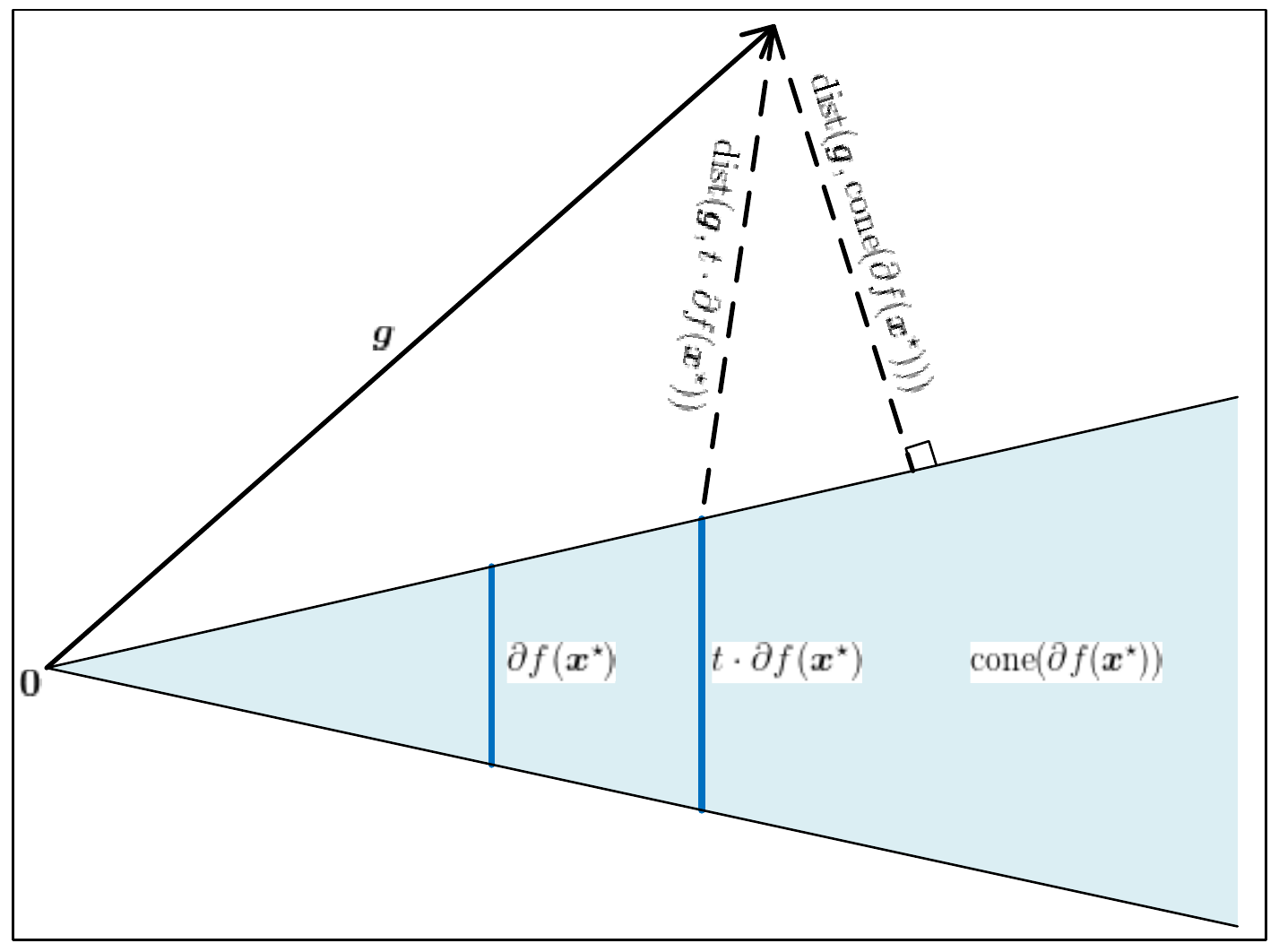}
	\caption{Illustration of the distances of a vector to the scaled subdifferential $t\cdot\partial f(\vx^\star)$ and to the cone of the subdifferential $\cone(\partial f(\vx^\star))$.}
	\label{fig:Gaussian distance}
\end{figure*}

\begin{proof}
		The core ingredient in the proof of Theorem \ref{relation_CC_PP} is the fact that the distance of a vector to the scaled subdifferential $t\cdot\partial f(\vx^\star)$ can always be bounded from below by that of this vector to the cone of the subdifferential $\cone(\partial f(\vx^\star))$ (see Fig. \ref{fig:Gaussian distance}). With this observation in mind, we have
		\begin{align} \label{relation_cp_cpl}
		\mathscr{C}_p(\lambda)&=\min_{\alpha\leq t\leq \beta} \E\left[2\cdot\dist\left(\vg,\frac{\sqrt{m}}{\lambda t}\partial f(\vx^\star)\right)+\dist^2\left(\vh,\frac{1}{t}\partial g(\vv^\star)\cap \S^{m-1}\right)-1\right] \notag \\
		&\geq\E\left[2\cdot\dist\left(\vg,\cone(\partial f(\vx^\star))\right)+\dist^2\left(\vh,\cone(\partial g(\vv^\star))\cap \S^{m-1}\right)-1\right] \notag \\
		&=\E\left[2\cdot\dist\left(\vg,\cone(\partial f(\vx^\star))\right)+\min_{\vb\in\cone(\partial g(\vv^\star))\cap \S^{m-1}}\|\vh-\vb\|_2^2-1\right] \notag\\
		&=\E\left[2\cdot\dist\left(\vg,\cone(\partial f(\vx^\star))\right)-2\max_{\vb\in\cone(\partial g(\vv^\star))\cap \S^{m-1}}\ip{\vh}{\vb}+m\right].
		\end{align}
By assumptions, $\vzero\notin\partial f(\vx^\star)$ and $\vzero\notin\partial g(\vv^\star)$, we obtain
		$$
		\cone(\partial f(\vx^\star))=\NN_f(\vx^{\star})~~~\textrm{and}~\cone(\partial g(\vv^\star))=\NN_g(\vv^{\star}).
		$$
		Substituting the above equalities into \eqref{relation_cp_cpl} and rearranging yields
		\begin{align} \label{key_obe_cp_cpl}
		\frac{\mathscr{C}_p(\lambda)-m}{2}\geq \E\left[\dist\left(\vg,\NN_f(\vx^\star)\right)-\max_{\vb\in\NN_g(\vv^\star)\cap \S^{m-1}}\ip{\vh}{\vb}\right]=\E\dist(\vg,\NN_f(\vx^\star))-\omega(\NN_g(\vv^\star)\cap \S^{m-1}).
		\end{align}
		We are now ready to establish the bounds in Theorem \ref{relation_CC_PP}. First consider the case in which $m\geq \mathscr{C}_p(\lambda)$, then \eqref{key_obe_cp_cpl} implies that
		\begin{align*}
		0&\geq\E^2\dist(\vg,\NN_f(\vx^\star))-\omega^2(\NN_g(\vv^\star)\cap \S^{m-1})\\
		&\geq \E\dist^2(\vg,\NN_f(\vx^\star))-1-\omega^2(\NN_g(\vv^\star)\cap \S^{m-1})\\
		&=\E \left(\max_{\vx\in\TT_f(\vx^\star)\cap\B_2^n}\ip{\vx}{\vg}\right)^2-1-\omega^2(\NN_g(\vv^\star)\cap \S^{m-1})\\
		&\geq\omega^2(\TT_f(\vx^\star)\cap\S^{n-1})-1-\omega^2(\NN_g(\vv^\star)\cap \S^{m-1})\\
		&\geq\omega^2(\TT_f(\vx^\star)\cap\S^{n-1})-1-[m-\omega^2(\TT_g(\vv^\star)\cap \S^{m-1})]\\
		&=\mathscr{C}_p-m-1.
		\end{align*}
		The second inequality is due to the relation \eqref{Relationship}. The third line has used Fact \ref{omega & dist} and the assumption that $\TT_f(\vx^\star)$ is closed (thus $\NN_f^\circ(\vx^\star)=\TT_f(\vx^\star)$). The next two lines follow from Facts \ref{relation_w_d} and \ref{Gs width of normal cone}, respectively. Thus we have shown that if $m\geq \mathscr{C}_p(\lambda)$, then $m\geq \mathscr{C}_p-1$.

		We next consider the case where $m\geq \mathscr{C}_p$. It is not hard to find that for $\alpha \leq  t \leq \beta $,  the two Gaussian distances in $\mathscr{C}_p(\lambda)$ have the following lower bounds:
		 $$\E\dist^2\left(\vh,\frac{1}{t}\partial g(\vv^\star)\cap \S^{m-1}\right)\geq\E\dist^2\left(\vh,\cone(\partial g(\vv^\star))\cap \S^{m-1}\right)$$
        and
        $$\E\dist\left(\vg,\frac{\sqrt{m}}{\lambda t}\partial f(\vx^\star)\right)\geq\E\dist\left(\vg,\cone(\partial f(\vx^\star))\right). $$
		Then we can choose
		$$
		t^\star=\arg\min_{\alpha\leq t\leq \beta}\eta^2\left(\frac{1}{t} \partial g(\vv^{\star})\cap\S^{m-1}\right)  ~~ \textrm{and} ~~  \lambda^\star=\arg\min_{\lambda>0}\zeta\left(\frac{\sqrt m}{\lambda t^\star} \partial f(\vx^{\star})\right)
		$$
		such that the above two Gaussian distances attain their lower bounds simultaneously. The second lower bound is achievable because ${\sqrt m}/{\lambda t^\star}$ can take any positive number if $\lambda>0$. Thus $\mathscr{C}_p(\lambda)$ attains its minimum at $\lambda^\star$, i.e.,
		\begin{align*}
		\mathscr{C}_p(\lambda^\star)&=\E\left[2\cdot\dist\left(\vg,\cone(\partial f(\vx^\star))\right)+\dist^2\left(\vh,\cone(\partial g(\vv^\star))\cap \S^{m-1}\right)-1\right]\\
		&=\E\left[2\cdot\dist\left(\vg,\NN_f(\vx^\star)\right)-2\max_{\vb\in\NN_g(\vv^\star)\cap \S^{m-1}}\ip{\vh}{\vb}+m\right].
		\end{align*}
		Further, we have the following upper bound
		\begin{align}\label{upperbound}
		\frac{\mathscr{C}_p(\lambda^\star)-m}{2}&= \E\dist(\vg,\NN_f(\vx^\star))-\omega(\NN_g(\vv^\star)\cap \S^{m-1})\\ \notag
		&\leq\sqrt{\E\dist^2(\vg,\NN_f(\vx^\star))}-\omega(\NN_g(\vv^\star)\cap \S^{m-1})\\ \notag
		&=\sqrt{\delta(\TT_f(\vx^\star))}-\omega(\NN_g(\vv^\star)\cap \S^{m-1})\\ \notag
		&\leq \sqrt{\omega^2(\TT_f(\vx^\star)\cap\S^{n-1})+1}-\sqrt{\delta(\NN_g(\vv^\star))-1}\\ \notag
		&= \sqrt{\omega^2(\TT_f(\vx^\star)\cap\S^{n-1})+1}-\sqrt{m-\delta(\TT_g(\vv^\star))-1}\\ \notag
		&\leq\sqrt{\omega^2(\TT_f(\vx^\star)\cap\S^{n-1})+1}-\sqrt{m-\omega^2(\TT_g(\vv^\star)\cap \S^{m-1})-2}\\ \notag
		&\leq\omega(\TT_f(\vx^\star)\cap\S^{n-1})+1-\sqrt{m-\omega^2(\TT_g(\vv^\star)\cap \S^{m-1})}+\sqrt{2}\\ \notag
		&\leq \sqrt{\mathscr{C}_p}-\sqrt{m}+1+\sqrt{2}\\ \notag
		&\leq 1+\sqrt{2}.
		\end{align}
		Here, the first inequality is due to Jensen's inequality. The next two lines have used Facts \ref{omega & dist} and \ref{relation_w_d}, respectively. The third equality holds because of Moreau's decomposition theorem (Fact \ref{Moreau}). The next inequality has used Fact \ref{relation_w_d} again. The lase two inequalities follow from the condition $m\geq \mathscr{C}_p$. Rearranging completes the proof.	

It's worth noting that if we impose an extra condition on $\omega(\TT_f(\vx^\star)\cap\S^{n-1})$ (e.g., $\omega(\TT_f(\vx^\star)\cap\S^{n-1})\geq 4$, which can be easily satisfied in practical applications), then we can obtain a sharper upper bound than \eqref{upperbound}:
			\begin{align*}
			\frac{\mathscr{C}_p(\lambda^\star)-m}{2}
			&\leq\sqrt{\omega^2(\TT_f(\vx^\star)\cap\S^{n-1})+1}-\sqrt{m-\omega^2(\TT_g(\vv^\star)\cap \S^{m-1})-2}\\
			&\leq\omega(\TT_f(\vx^\star)\cap\S^{n-1})+(\sqrt{17}-\sqrt{16})-\sqrt{m-\omega^2(\TT_g(\vv^\star)\cap \S^{m-1})}+(\sqrt{16}-\sqrt{14})\\
			&\leq \sqrt{\mathscr{C}_p}-\sqrt{m}+\sqrt{17}-\sqrt{14}\\
			&\leq \frac{1}{2}.
			\end{align*}
	The second inequality is due to the facts that $\sqrt{a^2+1}\leq a+(\sqrt{17}-\sqrt{16})$ for $a\geq 4$ and $\sqrt{b^2-2}\geq b-(\sqrt{16}-\sqrt{14})$ for $b = \sqrt{m-\omega^2(\TT_g(\vv^\star)\cap \S^{m-1})} \geq 4$. The lase two inequalities have used the condition $m\geq \mathscr{C}_p$. Rearranging yields $m\geq \mathscr{C}_p(\lambda^\star)-1$, which leads to a smaller gap than $5$.

\end{proof}

\section{Evaluate $\mathscr{C}_p$ and $\mathscr{C}_p(\lambda)$ for Typical Structured Signal and Corruption}
\label{cal}
In Section \ref{Main results}, we have shown that $\mathscr{C}_p$ and $\mathscr{C}_p(\lambda)$ can be accurately estimated by equations \eqref{cal_ph of con} and \eqref{cal_ph of pen}, respectively. Thus it is sufficient to evaluate two related functionals: Gaussian squared distance to a scaled subdifferential and spherical Gaussian width of a scaled subdifferential. There exist some standard recipes to calculate these two quantities in the literature, see e.g., \cite{chandrasekaran2012convex,amelunxen2014living,foygel2014corrupted}. For the completeness of this paper, we calculate these two functionals for sparse vectors and low-rank matrices in this appendix.
%

\subsection{Calculation for Sparse Vectors}
Let $\vx$ be an $s$-sparse vector in $\R^n$, and let $S$ denote its support. We use the $\ell_1$-norm to promote the structure of sparse vectors. The scaled subdifferential of $\|\vx\|_1$ is given by
$$
t\cdot\partial\|\vx\|_1=\{\vz\in\R^n:\vz_i=t\cdot\textrm{sgn}(\vx_i)~\textrm{for}~i\in S,~|\vz_i|\leq t~\textrm{for}~i\in S^c\},
$$
where $S^c$ represents the complement of $S$. Then the Gaussian squared distance to a scaled subdifferential can be calculated as
\begin{align*}
\eta^2(t\cdot\partial\|\vx\|_1)&=\E \inf_{\vz\in t\cdot\partial\|\vx\|_1}\|\vg-\vz\|_2^2=\E \inf_{\vz\in t\cdot\partial\|\vx\|_1}\sum_{i\in S}(\vg_i-\vz_i)^2+\sum_{i\in S^c}(\vg_i-\vz_i)^2\\
&=\E ~ \sum_{i\in S}(\vg_i-t\cdot \textrm{sgn}(x_i))^2+\sum_{i\in S^c}\textrm{shrink}(\vg_i,t)^2\\
&=s(1+t^2)+\frac{2(n-s)}{\sqrt{2\pi}}\left((1+t^2)\int_{t}^{\infty}e^{-x^2/2}dx-te^{-t^2/2}\right).
\end{align*}
Here $\textrm{shrink}(\vg_i,t)$ is the soft thresholding operator defined as:
\begin{equation*}
\textrm{shrink}(\vg_i,t)=\left\{
\begin{array}{ll}
\vg_i+t  & \textrm{if}~~ \vg_i< -t, \\
0      &\textrm{if}~~-t\leq\vg_i\leq t,\\
\vg_i-t &\textrm{if}~~  \vg> t.
\end{array}
\right.
\end{equation*}

Let $\vv$ be a $k$-sparse vector in $\R^m$, and let $S$ denote the support of $\vv$. The scaled spherical part of the subdifferential $\partial\|\vv\|_1$ is given by
$$
\partial\|\vv\|_1\cap t\S^{m-1}=\{\vz\in\R^m:\vz_i=\textrm{sgn}(\vv_i)~\textrm{for}~i\in S,~|\vz_i|\leq 1~\textrm{for}~i\in S^c,~\|\vz\|_2=t \}.
$$
Then the spherical Gaussian width of a scaled subdifferential is
\begin{align*}
\omega\left(\frac{1}{t}\partial\|\vv\|_1\cap\S^{m-1}\right)&=\frac{1}{t}\cdot\omega(\partial\|\vv\|_1\cap t\S^{m-1})=\frac{1}{t}\E \sup_{\vz\in \partial\|\vv\|_1\cap t\S^{m-1}}\ip{\vg}{\vz}\\
&=\frac{1}{t}\E \sup_{\vz\in \partial\|\vv\|_1\cap t\S^{m-1}}\sum_{i\in S}\vg_i\vz_i+\sum_{i\in S^c}\vg_i\vz_i\\
&=\frac{1}{t}\E \sup_{\vz\in \partial\|\vv\|_1\cap t\S^{m-1}}\sum_{i\in S^c}\vg_i\vz_i\\
&=\frac{1}{t}\sup_{\vz\in \partial\|\vv\|_1\cap t\S^{m-1}}\sum_{i\in S^c}|\vz_i|\cdot\E|\vg_i|\\
&=\frac{1}{t}\sqrt{m-k}\sqrt{t^2-k}\sqrt{\frac{2}{\pi}}=\sqrt{\frac{2}{\pi}(m-k)\left(1-\frac{k}{t^2}\right)}.
\end{align*}
The last line is due to the Cauchy-Schwarz inequality, i.e., $\sum_{i\in S^c}|\vz_i|\leq\sqrt{m-k}\cdot\sqrt{\sum_{i\in S^c}\vz_i^2}=\sqrt{m-k}\cdot\sqrt{t^2-k}$. The equality holds when $|\vz_i|=\sqrt{\frac{t^2-k}{m-k}}$ for $i\in S^c$.

\subsection{Calculation for Low-rank Matrices}
Let $\mX\in\R^{n_1\times n_2}$ be an $r$-rank matrix with $n_1\leq n_2$. We use the nuclear norm $\|\mX\|_*$, which is the sum of singular values of $\mX$, to promote the structure of low-rank matrices. Note that the nuclear norm and Gaussian distance are both unitary invariant, without loss of generality, we can assume that $\mX$ takes the form
$$
\mX=\begin{bmatrix}
\mSigma & \vzero\\
\vzero & \vzero
\end{bmatrix},
$$
where $\mSigma=\textrm{diag}(\sigma_1,\sigma_2,...,\sigma_r)$. The scaled subdifferential of $\|\mX\|_*$ is
\begin{equation*}
t\cdot\partial\|\mX\|_*=\left\{\begin{bmatrix}
t\mI_r & \vzero\\
\vzero & t\mW
\end{bmatrix}:~\|\mW\|\leq 1\right\}.
\end{equation*}
Here $\|\mW\|$ is the spectral norm, which is equal to the maximum singular value of $\mW$. Let
$$
\mG=\begin{bmatrix}
\mG_1 & \mG_1'\\
\mG_2' & \mG_2
\end{bmatrix}
$$
be a partition of Gaussian matrix $\mG\in\R^{n_1\times n_2}$ with $\mG_1\in\R^{r\times r}$ and $\mG_2\in\R^{(n_1-r)\times (n_2-r)}$. Then the Gaussian squared distance to a scaled subdifferential is
\begin{align*}
\eta^2(t\cdot\partial\|\mX\|_*)&=\E \inf_{\mZ\in t\cdot\partial\|\mX\|_*}\|\mG-\mZ\|_F^2=\E \left\{\left\|\begin{bmatrix}
\mG_1-t\mI_r & \mG_1'\\
\mG_2' & \vzero
\end{bmatrix}\right\|_F^2+\inf_{\|\mW\|\leq 1}\|\mG_2-t\mW\|_F^2\right\}\\
&=r(n_1+n_2-r+t^2)+\E\inf_{\|\mW\|\leq 1}\sum_{i=1}^{n_1-r}\left(\sigma_i(\mG_2)-t\sigma_i(\mW)\right)^2\\
&=r(n_1+n_2-r+t^2)+\E\sum_{i=1}^{n_1-r}\textrm{shrink}\left(\sigma_i(\mG_2),t\right)^2.
\end{align*}
Here $\sigma_i(\cdot)$ is the $i$-th largest singular value. The expectation term concerns the density of singular values of Gaussian matrix $\mG_2$, it seems challenging to obtain an exact formula for this term. However, there exist some asymptotic results in the literatures, see e.g. \cite{amelunxen2014living,oymak13The}. In our simulations, we use the Monte Carlo method to calculate this expectation term.

Let $\mV\in\R^{m_1\times m_2}$ be a $\rho$-rank matrix. Since the Gaussian width is also unitary invariant, we assume that $\mV$ takes the same form as $\mX$. The scaled spherical part of the subdifferential $\partial\|\mV\|_*$ is
\begin{equation*}
\partial\|\mV\|_*\cap t\S^{m_1m_2-1}=\left\{\begin{bmatrix}
\mI_\rho & \vzero\\
\vzero & \mW
\end{bmatrix}:~\|\mW\|\leq 1,~\|\mW\|_F^2=t^2-\rho\right\}.
\end{equation*}
Then spherical Gaussian width  of a scaled subdifferential is given by
\begin{align*}
\omega\left(\frac{1}{t}\partial\|\mV\|_*\cap\S^{m_1m_2-1}\right)&=\frac{1}{t}\cdot\omega(\partial\|\mV\|_*\cap t\S^{m_1m_2-1})\\
&=\frac{1}{t}\E \sup_{\mZ\in \partial\|\mV\|_*\cap t\S^{m_1m_2-1}}\ip{\mG}{\mZ}\\
&=\frac{1}{t}\E \sup_{\|\mW\|\leq 1,~\|\mW\|_F^2=t^2-\rho}\ip{\mG_2}{\mW}\\
&=\frac{1}{t}\sqrt{t^2-\rho}\cdot\E\|\mG_2\|_F=\sqrt{1-\frac{\rho}{t^2}}\cdot\mu_{(m_1-\rho)(m_2-\rho)}.
\end{align*}
The last line is due to the Cauchy-Schwarz inequality, i.e., $\ip{\mG_2}{\mW}\leq\|\mG_2\|_F\cdot\|\mW\|_F=\sqrt{t^2-\rho}\cdot\|\mG_2\|_F$. The notation $\mu_n$ denotes the expected length of an $n$-dimensional vector with independent standard normal entries.


\section{Auxiliary Lemma and Facts} \label{auxiliaryresults}
In this appendix, we present some additional auxiliary lemma and facts that are used in the proofs of our main results.

	\begin{lemma}\label{Lipschitz function}
		Let $\TT_1\subset\R^{n}$ and $\TT_2\subset\S^{m-1}$ be two subsets, $c\in\R$ is a constant. Then the functions
		$$
		F(\vg,\vh)=\min_{\va\in\TT_1\atop \vb\in\TT_2}\|\vg-c\cdot\va\|_2+\ip{\vh}{\vb}
		$$
		and
		$$
		G(\vg,\vh)=\max_{\va\in\TT_1\atop \vb\in\TT_2}\ip{\vh}{\vb}-\|\vg-c\cdot\va\|_2
		$$
		are $\sqrt{2}$-Lipschitz functions.
	\end{lemma}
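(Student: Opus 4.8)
The plan is to realize both $F$ and $G$ as, respectively, an infimum and a supremum of a family of functions on $\R^{n+m}$ each of which is $\sqrt{2}$-Lipschitz, and then to invoke the elementary fact that an infimum (or supremum) of $L$-Lipschitz functions, whenever finite, is again $L$-Lipschitz.

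First I would fix $\va\in\TT_1$ and $\vb\in\TT_2$ and analyze the single function $h_{\va,\vb}(\vg,\vh):=\|\vg-c\cdot\va\|_2+\ip{\vh}{\vb}$, viewed as a function of the concatenated variable $(\vg,\vh)\in\R^n\times\R^m$. The key observation is that its two summands depend on disjoint coordinate blocks: $\vg\mapsto\|\vg-c\cdot\va\|_2$ is $1$-Lipschitz by the reverse triangle inequality (for any value of the constant $c$), and $\vh\mapsto\ip{\vh}{\vb}$ is $\|\vb\|_2$-Lipschitz, hence $1$-Lipschitz since $\vb\in\S^{m-1}$. Therefore, for any $(\vg,\vh)$ and $(\vg',\vh')$,
$$|h_{\va,\vb}(\vg,\vh)-h_{\va,\vb}(\vg',\vh')|\le\|\vg-\vg'\|_2+\|\vh-\vh'\|_2\le\sqrt{2}\,\bigl(\|\vg-\vg'\|_2^2+\|\vh-\vh'\|_2^2\bigr)^{1/2},$$
where the last inequality is Cauchy--Schwarz; this is exactly the bound asserting that $h_{\va,\vb}$ is $\sqrt{2}$-Lipschitz on $\R^{n+m}$. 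The same estimate shows that $(\vg,\vh)\mapsto\ip{\vh}{\vb}-\|\vg-c\cdot\va\|_2$ is $\sqrt{2}$-Lipschitz as well.

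Next I would verify that $F$ and $G$ are everywhere finite so that the inf/sup-of-Lipschitz lemma applies: since $\TT_1$ and $\TT_2$ are nonempty, picking any $\va_0\in\TT_1$, $\vb_0\in\TT_2$ gives $-\|\vh\|_2\le\inf_{\vb\in\TT_2}\ip{\vh}{\vb}\le F(\vg,\vh)\le\|\vg-c\cdot\va_0\|_2+\|\vh\|_2<\infty$, and an analogous two-sided bound holds for $G$. Then I would apply the standard lemma: if $\{h_i\}_{i\in I}$ is a family of $L$-Lipschitz functions on a metric space with $F:=\inf_i h_i$ finite, then $F$ is $L$-Lipschitz --- given $x,y$ and $\varepsilon>0$, choose $i$ with $h_i(y)\le F(y)+\varepsilon$, so $F(x)\le h_i(x)\le h_i(y)+L\,d(x,y)\le F(y)+\varepsilon+L\,d(x,y)$, and letting $\varepsilon\to0$ and symmetrizing gives $|F(x)-F(y)|\le L\,d(x,y)$; the supremum version follows by negating. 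Taking the index set to be $\TT_1\times\TT_2$ and $L=\sqrt{2}$ yields the claim for both $F$ and $G$.

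There is no real obstacle here; the statement is essentially routine. The only points deserving a sentence of care are obtaining the constant $\sqrt{2}$ rather than $2$ --- which is precisely where one uses that the two $1$-Lipschitz summands act on disjoint coordinate blocks, combined via Cauchy--Schwarz --- and the trivial check that $F$ and $G$ are finite everywhere so that the inf/sup-of-Lipschitz principle is legitimately invoked.
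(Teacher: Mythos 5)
Your proposal is correct and follows essentially the same route as the paper: each function $(\vg,\vh)\mapsto\|\vg-c\cdot\va\|_2+\ip{\vh}{\vb}$ is $\sqrt{2}$-Lipschitz because the two summands are $1$-Lipschitz on disjoint blocks (using $\|\vb\|_2=1$) and combine via Cauchy--Schwarz, and the min/max inherits this constant. The only difference is cosmetic: you invoke the generic inf/sup-of-Lipschitz lemma with $\varepsilon$-approximate minimizers, whereas the paper picks an exact minimizer $(\bar{\va},\bar{\vb})$ and bounds the difference directly, so your version is marginally more careful about attainment but otherwise identical.
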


	\begin{proof}
		To prove first part, it suffices to show that for any $(\vg_1,\vh_1), ~(\vg_2,\vh_2)$, we have
		\begin{align*}
		\big| F(\vg_1,\vh_1) - F(\vg_2,\vh_2) \big| \leq \sqrt{2}\sqrt{\|\vg_1-\vg_2\|_2^2+\|\vh_1-\vh_2\|_2^2}.
		\end{align*}
		To this end, let
		$$
		(\bar{\va},~\bar{\vb}) \in \arg\min_{\va \in \TT_1 \atop\vb\in\TT_2} \|\vg_2-c\cdot\va\|_2+\ip{\vh_2}{\vb}.
		$$
		Then we have
		\begin{align*}
		F(\vg_1,\vh_1) &= \min_{\va\in\TT_1\atop \vb\in\TT_2}\|\vg_1-c\cdot\va\|_2+\ip{\vh_1}{\vb}\\
		&\leq \|\vg_1-c\cdot\bar\va\|_2+\ip{\vh_1}{\bar\vb}.
		\end{align*}
		Therefore,
		\begin{align} \label{difference 1}
		F(\vg_1,\vh_1) - F(\vg_2,\vh_2)&\leq\left(\|\vg_1-c\cdot\bar\va\|_2+\ip{\vh_1}{\bar\vb}\right)-\left(\|\vg_2-c\cdot\bar\va\|_2+\ip{\vh_2}{\bar\vb}\right)\notag\\
		&= \left(\|\vg_1-c\cdot\bar\va\|_2-\|\vg_2-c\cdot\bar\va\|_2\right)+\left(\ip{\vh_1}{\bar\vb}-\ip{\vh_2}{\bar\vb}\right) \notag\\
		& \leq \|\vg_1-\vg_2\|_2+\|\vh_1-\vh_2\|_2\notag\\
		&\leq\sqrt{2}\sqrt{\|\vg_1-\vg_2\|_2^2+\|\vh_1-\vh_2\|_2^2}.
		\end{align}
		Similarly, we have
		\begin{equation} \label{difference 2}
		F(\vg_2,\vh_2) - F(\vg_1,\vh_1) \leq \sqrt{2}\sqrt{\|\vg_1-\vg_2\|_2^2+\|\vh_1-\vh_2\|_2^2}.
		\end{equation}
		Combining (\ref{difference 1}) and (\ref{difference 2}) yields the first conclusion. The second part for $G(\vg,\vh)$ can be proven similarly.
	\end{proof}

\begin{fact}[Polarity principle]\cite[Proposition 3.8]{oymak2018universality}
	\label{polarity argument}
	Let $\SS$ be a non-empty, closed, spherically convex subset of the unit sphere $\S^{n-1}$, and let $\mA:\R^{n}\to\R^m$ be a linear map. If $\cone(\SS)$ is not a subspace, then
	\begin{equation*}
	\min_{\|\vr\|_2=1}\min_{\vs\in(\cone(\SS))^\circ}\|\vs-\mA^T\vr\|_2 > 0 ~~\textrm{implies}~\vzero\in\mA(\SS),
	\end{equation*}
	where $\mA(\SS)$ is the image of $\mA$.
\end{fact}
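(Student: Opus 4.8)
The plan is to argue by contradiction, translating the distance condition into a statement about the image cone $\mA(\cone(\SS))$ via conic duality, and then using the two hypotheses---the failure of $\vzero\in\mA(\SS)$ and the fact that $\cone(\SS)$ is not a subspace---to force $\cone(\SS)$ to \emph{be} a subspace. Write $\CC=\cone(\SS)$, which is a convex cone (indeed $\SS$ spherically convex means $\CC=\KK$ for the convex cone $\KK$ with $\SS=\KK\cap\S^{n-1}$), and in particular is closed under addition and under nonnegative scaling. Suppose, toward a contradiction, that the displayed minimum is strictly positive while $\vzero\notin\mA(\SS)$.

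First I would recast the minimum in dual terms. Since $\CC^\circ$ is closed, $\min_{\|\vr\|_2=1}\min_{\vs\in\CC^\circ}\|\vs-\mA^T\vr\|_2=\min_{\|\vr\|_2=1}\dist(\mA^T\vr,\CC^\circ)$ is positive exactly when $\mA^T\vr\notin\CC^\circ$ for every unit $\vr$, and by homogeneity this is equivalent to the implication $\mA^T\vr\in\CC^\circ\Rightarrow\vr=\vzero$. The adjoint identity $\ip{\mA^T\vr}{\vx}=\ip{\vr}{\mA\vx}$ shows $\{\vr:\mA^T\vr\in\CC^\circ\}=(\mA\CC)^\circ$, so positivity of the minimum is equivalent to $(\mA\CC)^\circ=\{\vzero\}$, i.e. to $\overline{\mA\CC}=\R^m$ by the bipolar theorem for convex cones.

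Next I would upgrade this density to equality by showing $\mA\CC$ is closed. The key observation is $\mA\CC=\cone(\mA\SS)$, where $\mA\SS$ is compact as the continuous image of the compact set $\SS$. Crucially, the standing assumption $\vzero\notin\mA(\SS)$ says precisely that $\mA\SS$ is bounded away from the origin, and the conic hull of such a set is closed: if $t_j\vy_j\to\vz$ with $\vy_j\in\mA\SS$, one passes to a convergent subsequence $\vy_j\to\vy_\ast\in\mA\SS$ with $\vy_\ast\neq\vzero$, so the scalars $t_j$ converge and $\vz\in\cone(\mA\SS)$. Hence $\mA\CC=\overline{\mA\CC}=\R^m$.

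Finally I would extract the contradiction through a symmetry argument, which is exactly where the hypothesis ``$\CC$ not a subspace'' is spent. Again using $\vzero\notin\mA(\SS)$, every nonzero point of $\CC$ is a positive multiple of a point of $\SS$, so $\ker\mA\cap\CC=\{\vzero\}$. Now fix any $\vx\in\CC$; since $\mA\CC=\R^m$ there is $\vx'\in\CC$ with $\mA\vx'=-\mA\vx$, and then $\vx+\vx'\in\CC$ lies in $\ker\mA\cap\CC=\{\vzero\}$, forcing $-\vx=\vx'\in\CC$. Thus $\CC=-\CC$, and a symmetric convex cone is a subspace, contradicting the hypothesis. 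Therefore the minimum cannot be positive unless $\vzero\in\mA(\SS)$, as claimed. The main obstacle is to organize the duality translation cleanly and to pinpoint where each hypothesis enters: ``$\vzero\notin\mA(\SS)$'' simultaneously delivers $\ker\mA\cap\CC=\{\vzero\}$ and the closedness of $\mA\CC$, while ``not a subspace'' is precisely what the final symmetry step contradicts.
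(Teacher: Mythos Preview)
Your proof is correct. The paper does not supply its own argument for this fact; it is quoted verbatim as a ``Fact'' with a citation to \cite[Proposition 3.8]{oymak2018universality} and used as a black box in the proofs of Lemmas \ref{lem: success and failure_con} and \ref{lem: success and failure}. So there is nothing in the present paper to compare against.

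For what it is worth, your argument is the natural one and matches the spirit of the Oymak--Tropp proof: translate the distance condition into $(\mA\CC)^\circ=\{\vzero\}$ via the adjoint, upgrade $\overline{\mA\CC}=\R^m$ to $\mA\CC=\R^m$ using that $\mA(\SS)$ is compact and (by the contradiction hypothesis) bounded away from the origin, and then use surjectivity together with $\ker\mA\cap\CC=\{\vzero\}$ to force $\CC=-\CC$, contradicting ``not a subspace.'' Each hypothesis is spent exactly once, as you noted.
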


\begin{fact}[Variance of Gaussian Lipschitz functions]\cite[Theorem 1.6.4]{bogachev1998gaussian}
	\label{Variance of Lipschitz}
	Consider a random vector $\vx\sim \NN(0,\mI_n)$ and a Lipschitz function $f:~\R^n\to\R$ with Lipschitz norm $\|f\|_{\textrm{Lip}}$ (with respect to the Euclidean metric). Then,
	\begin{equation*}
	\textrm{Var}(f(\vx))\leq \|f\|_{\textrm{Lip}}^2.
	\end{equation*}
\end{fact}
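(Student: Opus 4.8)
The plan is to prove the sharp Gaussian Poincar\'e inequality $\textrm{Var}(f(\vx))\leq\E\|\nabla f(\vx)\|_2^2$ and then bound $\E\|\nabla f\|_2^2$ by $\|f\|_{\textrm{Lip}}^2$. First I would dispose of the regularity issue: an arbitrary $L$-Lipschitz $f$ (with $L:=\|f\|_{\textrm{Lip}}$) need not be differentiable, so I would mollify by setting $f_\epsilon=f\ast\phi_\epsilon$ for a smooth compactly supported probability density $\phi_\epsilon$. Convolution against a probability measure preserves the Lipschitz constant, so $\|f_\epsilon\|_{\textrm{Lip}}\leq L$ and hence $\|\nabla f_\epsilon\|_2\leq L$ everywhere, while $f_\epsilon\to f$ locally uniformly. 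Since an $L$-Lipschitz function grows at most linearly, $|f(\vx)|\leq|f(\vzero)|+L\|\vx\|_2$, so $f(\vx)$ and all $f_\epsilon(\vx)$ are square-integrable with a common dominating function; dominated convergence then gives $\textrm{Var}(f_\epsilon(\vx))\to\textrm{Var}(f(\vx))$. Thus it suffices to establish the bound for smooth $f$ with bounded gradient and pass to the limit.

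The core step is the Poincar\'e inequality, which I would prove via the Ornstein--Uhlenbeck semigroup $P_tf(\vx)=\E f(e^{-t}\vx+\sqrt{1-e^{-2t}}\,\vg)$ with $\vg\sim\NN(\vzero,\mI_n)$. This interpolates between $P_0f=f$ and $P_\infty f=\E f$, so
\[
\textrm{Var}(f)=\E[(P_0f)^2]-\E[(P_\infty f)^2]=-\int_0^\infty\tfrac{d}{dt}\E[(P_tf)^2]\,dt.
\]
Writing $\mathcal{L}$ for the generator, Gaussian integration by parts gives $\E[g\,\mathcal{L}h]=-\E\ip{\nabla g}{\nabla h}$, whence $\tfrac{d}{dt}\E[(P_tf)^2]=2\E[P_tf\cdot\mathcal{L}P_tf]=-2\E\|\nabla P_tf\|_2^2$. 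The decisive ingredient is the commutation identity $\nabla P_tf=e^{-t}P_t(\nabla f)$, obtained by differentiating under the expectation. Combining it with the conditional Jensen inequality $\|P_t\vw\|_2^2\leq P_t\|\vw\|_2^2$ (applied to the vector field $\vw=\nabla f$) yields $\E\|\nabla P_tf\|_2^2\leq e^{-2t}\E\|\nabla f\|_2^2$, and integrating $2e^{-2t}$ over $[0,\infty)$ produces exactly the constant $1$:
\[
\textrm{Var}(f)\leq\E\|\nabla f\|_2^2.
\]
Finally, using $\|\nabla f\|_2\leq L$ almost everywhere gives $\textrm{Var}(f)\leq L^2$ for smooth $f$, and the mollification limit extends this to every $L$-Lipschitz $f$.

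The main obstacle is securing the \emph{sharp} constant $1$. A naive route---writing $f(\vx)-f(\vy)$ for an independent copy $\vy$ as an interpolation integral and applying Cauchy--Schwarz in $\theta$---only yields the suboptimal constant $\pi^2/8$; it is precisely the commutation $\nabla P_tf=e^{-t}P_t\nabla f$ together with the contractivity of $P_t$ that produces the exact exponential decay and hence the optimal bound. A secondary technical point is justifying the differentiation under the integral, the integration by parts, and the convergence $P_tf\to\E f$ in $L^2$ as $t\to\infty$, all of which the reduction to smooth, linearly-growing $f$ renders routine. I would also note an alternative fully self-contained argument via the Hermite expansion $f=\sum_\alpha c_\alpha H_\alpha$ in the orthonormal Hermite basis: there $\textrm{Var}(f)=\sum_{\alpha\neq\vzero}c_\alpha^2$ whereas $\E\|\nabla f\|_2^2=\sum_\alpha|\alpha|\,c_\alpha^2$, and since $|\alpha|\geq1$ on every nonconstant mode the inequality holds term by term with the sharp constant appearing automatically.
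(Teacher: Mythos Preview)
Your proof is correct. The paper itself does not prove this statement: it is recorded as a ``Fact'' and simply cited from \cite[Theorem 1.6.4]{bogachev1998gaussian}, so there is no in-paper argument to compare against.

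What you supply is a complete and standard proof of the sharp Gaussian Poincar\'e inequality via the Ornstein--Uhlenbeck semigroup, followed by the trivial bound $\|\nabla f\|_2\leq\|f\|_{\textrm{Lip}}$ and a mollification step to handle non-smooth Lipschitz $f$. All the ingredients are sound: the interpolation formula for the variance, the integration-by-parts identity $\E[g\,\mathcal{L}h]=-\E\ip{\nabla g}{\nabla h}$, the commutation $\nabla P_tf=e^{-t}P_t(\nabla f)$, and the contractivity estimate coming from Jensen combine exactly as you describe to yield the constant $1$. Your alternative via the Hermite expansion is also correct and arguably the cleanest route to the sharp constant. Either argument would serve as a self-contained justification where the paper merely invokes a reference.
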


\begin{fact}[Moreau's decomposition theorem]\cite[Theorem 6.30]{bauschke2011convex}
	\label{Moreau}
	Let $\mathcal{C}$ be a nonempty closed convex cone in $\R^n$ and let $\vx\in\R^n$. Then
	\begin{align*}
	\|\vx\|_2^2=\dist^2(\vx,\mathcal{C})+\dist^2(\vx,\mathcal{C}^\circ).
	\end{align*}
\end{fact}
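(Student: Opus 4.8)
The plan is to deduce the identity from the orthogonal decomposition of $\vx$ induced by the metric projections onto $\mathcal{C}$ and $\mathcal{C}^\circ$, following the classical argument of Moreau. Since $\mathcal{C}$ is nonempty, closed, and convex, the projection theorem guarantees that $\vu:=\argmin_{\vz\in\mathcal{C}}\|\vx-\vz\|_2$ exists, is unique, and is characterized by the variational inequality $\ip{\vx-\vu}{\vz-\vu}\leq 0$ for all $\vz\in\mathcal{C}$. The heart of the proof is to upgrade this generic convex-projection inequality into the sharper \emph{conical} optimality conditions by exploiting the scaling invariance of $\mathcal{C}$.

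Concretely, because $\mathcal{C}$ is a cone we have $\vzero\in\mathcal{C}$ and $2\vu\in\mathcal{C}$. Substituting $\vz=\vzero$ into the variational inequality gives $\ip{\vx-\vu}{\vu}\geq 0$, while substituting $\vz=2\vu$ gives $\ip{\vx-\vu}{\vu}\leq 0$; hence the orthogonality $\ip{\vx-\vu}{\vu}=0$. Combining this with the variational inequality yields $\ip{\vx-\vu}{\vz}=\ip{\vx-\vu}{\vz-\vu}+\ip{\vx-\vu}{\vu}\leq 0$ for every $\vz\in\mathcal{C}$, so that $\vw:=\vx-\vu\in\mathcal{C}^\circ$. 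I would thus record the decomposition $\vx=\vu+\vw$ with $\vu\in\mathcal{C}$, $\vw\in\mathcal{C}^\circ$, and $\ip{\vu}{\vw}=\ip{\vu}{\vx-\vu}=0$.

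Next I would identify $\vw$ as the projection of $\vx$ onto the polar cone. Since $\mathcal{C}$ is a closed convex cone, the bipolar theorem gives $\mathcal{C}^{\circ\circ}=\mathcal{C}$, so $\vu=\vx-\vw\in(\mathcal{C}^\circ)^\circ$ and therefore $\ip{\vu}{\vr}\leq 0$ for all $\vr\in\mathcal{C}^\circ$. Together with $\ip{\vu}{\vw}=0$ this gives $\ip{\vx-\vw}{\vr-\vw}=\ip{\vu}{\vr}-\ip{\vu}{\vw}\leq 0$ for all $\vr\in\mathcal{C}^\circ$, which is exactly the optimality condition certifying $\vw=\argmin_{\vr\in\mathcal{C}^\circ}\|\vx-\vr\|_2$. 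Consequently $\dist(\vx,\mathcal{C})=\|\vx-\vu\|_2=\|\vw\|_2$ and $\dist(\vx,\mathcal{C}^\circ)=\|\vx-\vw\|_2=\|\vu\|_2$. Finally, the orthogonality $\ip{\vu}{\vw}=0$ and $\vx=\vu+\vw$ yield the Pythagorean identity $\|\vx\|_2^2=\|\vu\|_2^2+\|\vw\|_2^2=\dist^2(\vx,\mathcal{C}^\circ)+\dist^2(\vx,\mathcal{C})$, which is the claim.

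I expect the main obstacle to be the step that promotes the generic convex-projection inequality to the conical relations $\ip{\vx-\vu}{\vu}=0$ and $\vx-\vu\in\mathcal{C}^\circ$: this is precisely where the cone structure, rather than mere convexity, is essential (it is the use of the two test vectors $\vzero$ and $2\vu$ that forces the inner product to vanish). A secondary subtlety is that identifying $\vw$ with the projection onto $\mathcal{C}^\circ$ invokes the bipolar identity $\mathcal{C}^{\circ\circ}=\mathcal{C}$, whose validity itself hinges on $\mathcal{C}$ being both closed and convex; for a non-closed cone the statement can fail, so the closedness hypothesis must be used here rather than treated as cosmetic.
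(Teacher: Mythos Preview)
Your argument is correct and follows the classical Moreau decomposition proof essentially verbatim: project onto $\mathcal{C}$, use the cone structure (testing with $\vzero$ and $2\vu$) to promote the variational inequality to the conical orthogonality $\ip{\vx-\vu}{\vu}=0$ and the polar membership $\vx-\vu\in\mathcal{C}^\circ$, then invoke the bipolar theorem and the projection characterization to identify $\vx-\vu$ as the projection onto $\mathcal{C}^\circ$, and finish by Pythagoras. Each step is sound.

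There is, however, nothing to compare against: the paper does not supply a proof of this fact. It is listed in the appendix of auxiliary results as a citation to \cite[Theorem 6.30]{bauschke2011convex} and used as a black box in the proofs of Theorems~\ref{them: Phase transition_con} and~\ref{relation_CC_PP}. Your write-up is the standard proof one would find in the cited reference.
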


\begin{fact}\cite[Appendix A]{chandrasekaran2012convex}\label{omega & dist}
	Let $\mathcal{C}\subset \R^n$ be a non-empty convex cone, we have
	$$
	\sup_{\vz\in\mathcal{C}\cap\B_2^n}\ip{\vx}{\vz}=\dist(\vx,\mathcal{C}^\circ),
	$$
	where $\vx\in\R^n$.
\end{fact}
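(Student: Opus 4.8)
The plan is to establish the identity by proving the two inequalities separately, using the definition of the polar cone for the direction $\leq$ and Moreau's decomposition theorem (Fact \ref{Moreau}) to produce an explicit maximizer for the direction $\geq$. Before that, I would reduce to the case in which $\mathcal{C}$ is closed. Both sides of the claimed identity depend only on $\overline{\mathcal{C}}$: the polar satisfies $\mathcal{C}^\circ=(\overline{\mathcal{C}})^\circ$, so the right-hand side is unchanged; and since $\vz\mapsto\ip{\vx}{\vz}$ is continuous and every point of $\overline{\mathcal{C}}\cap\B_2^n$ is a limit of points of $\mathcal{C}\cap\B_2^n$ (rescale an approximating sequence drawn from $\mathcal{C}$ by a factor in $(0,1]$ to keep it inside the ball), the supremum is also unchanged. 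Hence it suffices to treat a nonempty closed convex cone, for which the metric projection is well defined and Fact \ref{Moreau} applies.

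For the inequality $\sup_{\vz\in\mathcal{C}\cap\B_2^n}\ip{\vx}{\vz}\leq\dist(\vx,\mathcal{C}^\circ)$, I would fix an arbitrary $\vz\in\mathcal{C}\cap\B_2^n$ and an arbitrary $\vu\in\mathcal{C}^\circ$. By definition of the polar cone, $\ip{\vu}{\vz}\leq 0$, so
$$\ip{\vx}{\vz}=\ip{\vx-\vu}{\vz}+\ip{\vu}{\vz}\leq\ip{\vx-\vu}{\vz}\leq\|\vx-\vu\|_2\|\vz\|_2\leq\|\vx-\vu\|_2,$$
where the last two steps use the Cauchy--Schwarz inequality and $\|\vz\|_2\leq 1$. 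Taking the supremum over $\vz\in\mathcal{C}\cap\B_2^n$ and then the infimum over $\vu\in\mathcal{C}^\circ$ yields the claimed bound.

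For the reverse inequality I would exhibit the maximizer via Moreau's decomposition. Let $\vw$ denote the Euclidean projection of $\vx$ onto $\mathcal{C}$; Fact \ref{Moreau} (in its standard orthogonal-decomposition form) gives $\vx=\vw+(\vx-\vw)$ with $\vw\in\mathcal{C}$, $\vx-\vw\in\mathcal{C}^\circ$, and $\ip{\vw}{\vx-\vw}=0$, whence the polar projection is $\vx-\vw$ and $\dist(\vx,\mathcal{C}^\circ)=\|\vx-(\vx-\vw)\|_2=\|\vw\|_2$. If $\vw\neq\vzero$, set $\vz_\star=\vw/\|\vw\|_2$; since $\mathcal{C}$ is a cone containing $\vw$ we have $\vz_\star\in\mathcal{C}$, and $\|\vz_\star\|_2=1$, so $\vz_\star\in\mathcal{C}\cap\B_2^n$. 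Then
$$\ip{\vx}{\vz_\star}=\frac{\ip{\vw+(\vx-\vw)}{\vw}}{\|\vw\|_2}=\frac{\|\vw\|_2^2+0}{\|\vw\|_2}=\|\vw\|_2=\dist(\vx,\mathcal{C}^\circ),$$
so the supremum is at least $\dist(\vx,\mathcal{C}^\circ)$. If instead $\vw=\vzero$, then $\dist(\vx,\mathcal{C}^\circ)=0$ and the admissible choice $\vz=\vzero\in\mathcal{C}\cap\B_2^n$ already attains the value $0$; combined with the first inequality this forces equality.

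The only delicate point, and hence the main obstacle, is the reduction to a closed cone together with the attainment argument: once $\mathcal{C}$ is closed and Fact \ref{Moreau} furnishes the orthogonal splitting, the normalized projection $\vw/\|\vw\|_2$ is manifestly the extremizer and everything else is a one-line computation. I expect no further difficulty.
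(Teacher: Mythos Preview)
Your argument is correct. The paper does not supply its own proof of this fact; it is merely quoted from \cite[Appendix~A]{chandrasekaran2012convex} as an auxiliary tool, so there is nothing to compare against directly. Your two-inequality strategy is the standard one: the bound $\ip{\vx}{\vz}\le\|\vx-\vu\|_2$ for $\vu\in\mathcal{C}^\circ$ is exactly how the cited reference obtains the $\leq$ direction, and producing the maximizer $\vw/\|\vw\|_2$ from the Moreau decomposition is the usual way to close the gap.

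One minor remark: the version of Moreau's theorem stated in the paper (Fact~\ref{Moreau}) is only the Pythagorean identity $\|\vx\|_2^2=\dist^2(\vx,\mathcal{C})+\dist^2(\vx,\mathcal{C}^\circ)$, whereas you invoke the full orthogonal decomposition $\vx=P_{\mathcal{C}}\vx+P_{\mathcal{C}^\circ}\vx$ with $\ip{P_{\mathcal{C}}\vx}{P_{\mathcal{C}^\circ}\vx}=0$. That stronger form is of course what the cited reference \cite[Theorem~6.30]{bauschke2011convex} actually contains, so your appeal to it is legitimate; just be aware that it goes slightly beyond the literal statement recorded in the paper. Your reduction to the closed case is also handled cleanly and is needed precisely so that the projection $\vw$ exists.
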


\begin{fact}\cite[Proposition 10.2]{amelunxen2014living}\label{relation_w_d}
	Let $\mathcal{C}$ be a convex cone. The Gaussian width and the statistical dimension are closely related:
	$$
	\delta(\mathcal{C})-1\leq\omega^2(\mathcal{C}\cap\S^{n-1})\leq\delta(\mathcal{C}).
	$$
\end{fact}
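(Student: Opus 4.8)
The plan is to reduce everything to the single scalar random variable $W(\vg) := \sup_{\vx\in\CC\cap\S^{n-1}}\ip{\vg}{\vx}$ and to exploit a positive-part identity linking the sphere-constrained and ball-constrained suprema. First I would record that, for a convex cone $\CC$, every nonzero point of $\CC\cap\B_2^n$ is a contraction of a point of $\CC\cap\S^{n-1}$ while the origin lies in $\CC\cap\B_2^n$; scaling then yields $\sup_{\vx\in\CC\cap\B_2^n}\ip{\vg}{\vx}=\max\{W(\vg),0\}=(W(\vg))_+$. Consequently, directly from the definitions of $\omega$ and $\delta$, we have $\omega(\CC\cap\S^{n-1})=\E\,W(\vg)$ and $\delta(\CC)=\E\,(W(\vg))_+^2$, so the claim becomes the purely one-dimensional assertion $\E(W)_+^2-1\le(\E W)^2\le\E(W)_+^2$.

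Second, I would establish two analytic inputs about $W$. The map $\vg\mapsto W(\vg)$ is a supremum of the linear functionals $\vg\mapsto\ip{\vg}{\vx}$ over unit vectors $\vx$, hence is $1$-Lipschitz with respect to the Euclidean norm; Fact \ref{Variance of Lipschitz} then gives $\mathrm{Var}(W)\le 1$. Moreover, fixing any $\vx_0\in\CC\cap\S^{n-1}$ (nonempty as $\CC\cap\S^{n-1}\neq\emptyset$) gives $W(\vg)\ge\ip{\vg}{\vx_0}$ pointwise, whence $\E W\ge\E\ip{\vg}{\vx_0}=0$.

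Third, I would assemble the two bounds. For the upper bound on $\omega^2$: since $(W)_+\ge W$ pointwise and $\E W\ge 0$, we have $0\le\E W\le\E(W)_+$; squaring and applying Jensen's inequality to $(W)_+$ gives $(\E W)^2\le(\E(W)_+)^2\le\E(W)_+^2=\delta(\CC)$, i.e. $\omega^2(\CC\cap\S^{n-1})\le\delta(\CC)$. For the lower bound on $\omega^2$, equivalently $\delta(\CC)\le\omega^2(\CC\cap\S^{n-1})+1$, note $(W)_+^2\le W^2$ pointwise, so $\delta(\CC)=\E(W)_+^2\le\E W^2=\mathrm{Var}(W)+(\E W)^2\le 1+(\E W)^2=1+\omega^2(\CC\cap\S^{n-1})$, using $\mathrm{Var}(W)\le 1$. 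Rearranging gives $\delta(\CC)-1\le\omega^2(\CC\cap\S^{n-1})$, completing the argument.

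The routine parts are the Jensen and positive-part inequalities; the one place demanding care is the reduction in the first paragraph, namely the identity $\sup_{\CC\cap\B_2^n}\ip{\vg}{\cdot}=\bigl(\sup_{\CC\cap\S^{n-1}}\ip{\vg}{\cdot}\bigr)_+$, which is exactly what converts the ball-based statistical dimension into the sphere-based Gaussian width. I expect this step to be the main obstacle, since the separating constant $1$ arises precisely from the interplay between the truncation $(W)_+^2\le W^2$ and the Gaussian Poincaré bound $\mathrm{Var}(W)\le 1$, and getting the direction of each inequality right (together with the sign fact $\E W\ge 0$) is what makes both ends of the sandwich close up.
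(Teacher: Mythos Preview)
Your argument is correct. The paper itself does not supply a proof of this fact: it is simply quoted from \cite[Proposition~10.2]{amelunxen2014living} in the auxiliary appendix, so there is no in-paper proof to compare against. Your route --- the positive-part identity $\sup_{\CC\cap\B_2^n}\ip{\vg}{\cdot}=(W)_+$, the sign fact $\E W\ge 0$, Jensen for the upper bound, and the Gaussian Poincar\'e bound $\mathrm{Var}(W)\le 1$ for the lower bound --- is exactly the standard proof given in the cited reference. One small caveat worth making explicit: you implicitly assume $\CC\cap\S^{n-1}\neq\emptyset$ (i.e., $\CC\neq\{\vzero\}$); the degenerate case $\CC=\{\vzero\}$ has $\delta(\CC)=0$ and the inequality is usually treated by convention or excluded.
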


	\begin{fact}\cite[Lemma 3.7]{chandrasekaran2012convex}\label{Gs width of normal cone}
		Let $\mathcal{C}\subset\R^n$ be a non-empty closed, convex cone. Then we have that
		$$
		\omega^2(\mathcal{C}\cap\S^{n-1})+\omega^2(\mathcal{C}^\circ\cap\S^{n-1})\leq n.
		$$
	\end{fact}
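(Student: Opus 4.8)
The plan is to control each spherical Gaussian width by the statistical dimension of the corresponding cone, and then to exploit the fact that the statistical dimensions of a cone and its polar sum exactly to the ambient dimension. First I would invoke Fact~\ref{relation_w_d}, which supplies the upper bounds $\omega^2(\mathcal{C}\cap\S^{n-1})\leq\delta(\mathcal{C})$ and $\omega^2(\mathcal{C}^\circ\cap\S^{n-1})\leq\delta(\mathcal{C}^\circ)$, where $\delta$ denotes the statistical dimension. This reduces the claim to establishing $\delta(\mathcal{C})+\delta(\mathcal{C}^\circ)\leq n$; in fact the remaining steps will produce equality.

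The second step is to rewrite each statistical dimension as a Gaussian squared distance. Unfolding the definition $\delta(\mathcal{C})=\E[(\sup_{\vx\in\mathcal{C}\cap\B_2^n}\ip{\vg}{\vx})^2]$ and applying Fact~\ref{omega & dist}, I obtain $\sup_{\vx\in\mathcal{C}\cap\B_2^n}\ip{\vg}{\vx}=\dist(\vg,\mathcal{C}^\circ)$, hence $\delta(\mathcal{C})=\eta^2(\mathcal{C}^\circ)$. Running the same computation for the polar cone, and using that $\mathcal{C}$ is a nonempty closed convex cone so that the bipolar theorem gives $(\mathcal{C}^\circ)^\circ=\mathcal{C}$, I get $\delta(\mathcal{C}^\circ)=\eta^2((\mathcal{C}^\circ)^\circ)=\eta^2(\mathcal{C})$.

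Finally I would apply Moreau's decomposition theorem (Fact~\ref{Moreau}) to the complementary pair $(\mathcal{C},\mathcal{C}^\circ)$: for every realization of $\vg$ one has $\|\vg\|_2^2=\dist^2(\vg,\mathcal{C})+\dist^2(\vg,\mathcal{C}^\circ)$. Taking expectations over $\vg\sim\NN(\vzero,\mI_n)$ and using $\E\|\vg\|_2^2=n$ yields $\eta^2(\mathcal{C})+\eta^2(\mathcal{C}^\circ)=n$. Chaining the three steps gives
$$
\omega^2(\mathcal{C}\cap\S^{n-1})+\omega^2(\mathcal{C}^\circ\cap\S^{n-1})\leq\delta(\mathcal{C})+\delta(\mathcal{C}^\circ)=\eta^2(\mathcal{C}^\circ)+\eta^2(\mathcal{C})=n,
$$
which is the desired bound. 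The only delicate point is the polarity bookkeeping in the second step: one must feed the correct cone into Fact~\ref{omega & dist} and rely on the closedness hypothesis on $\mathcal{C}$ to identify $\delta(\mathcal{C}^\circ)$ with $\eta^2(\mathcal{C})$, since without $(\mathcal{C}^\circ)^\circ=\mathcal{C}$ this identification would fail. Everything else is a direct application of the cited facts together with the elementary identity $\E\|\vg\|_2^2=n$.
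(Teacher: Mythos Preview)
Your argument is correct. The paper does not supply its own proof of this fact---it is stated as a cited result from \cite{chandrasekaran2012convex}---so there is nothing to compare against; your derivation via Fact~\ref{relation_w_d}, Fact~\ref{omega & dist}, the bipolar identity, and Moreau's decomposition (Fact~\ref{Moreau}) is a clean and complete justification using only tools already collected in the appendix.
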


\begin{fact}[Max–min inequality]\cite[Lemma 36.1]{rockafellar1970convex}\label{max-min}
	For any function $F:~\R^n\times\R^m\to\R$ and any $\mathcal{W}\subseteq\R^n$, $\mathcal{Z}\subseteq\R^m$, we have
	$$
	\sup_{\vz\in \mathcal{Z}}\inf_{\vw\in \mathcal{W}} F(\vw,\vz)\leq \inf_{\vw\in \mathcal{W}}\sup_{\vz\in \mathcal{Z}}F(\vw,\vz).
	$$
\end{fact}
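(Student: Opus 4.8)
The plan is to prove the inequality by the elementary ``double-sandwich'' argument, which requires no topological, convexity, or compactness assumptions on $F$, $\mathcal{W}$, or $\mathcal{Z}$. The entire statement rests on a single pointwise observation: for \emph{any} fixed pair $(\vw,\vz)\in\mathcal{W}\times\mathcal{Z}$, the scalar $F(\vw,\vz)$ is simultaneously an upper bound for the inner infimum evaluated at $\vz$ and a lower bound for the inner supremum evaluated at $\vw$. First I would record this sandwich, which is immediate from the definitions of $\inf$ and $\sup$:
\begin{equation*}
\inf_{\vw'\in\mathcal{W}} F(\vw',\vz)\;\leq\; F(\vw,\vz)\;\leq\;\sup_{\vz'\in\mathcal{Z}} F(\vw,\vz').
\end{equation*}

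Chaining the two outer ends gives, for every $(\vw,\vz)\in\mathcal{W}\times\mathcal{Z}$,
\begin{equation*}
\inf_{\vw'\in\mathcal{W}} F(\vw',\vz)\;\leq\;\sup_{\vz'\in\mathcal{Z}} F(\vw,\vz').
\end{equation*}
The crucial feature is that the left-hand side no longer depends on $\vw$ and the right-hand side no longer depends on $\vz$, so the two variables have been decoupled; from here I would finish in two monotonicity steps. Holding $\vw$ fixed, I take the supremum over $\vz\in\mathcal{Z}$ of the (constant-in-$\vz$) right-hand side against the left-hand side to obtain $\sup_{\vz}\inf_{\vw'} F(\vw',\vz)\leq\sup_{\vz'} F(\vw,\vz')$ for every $\vw\in\mathcal{W}$. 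Since the left-hand side here is independent of $\vw$, taking the infimum over $\vw\in\mathcal{W}$ on the right yields precisely $\sup_{\vz}\inf_{\vw} F(\vw,\vz)\leq\inf_{\vw}\sup_{\vz} F(\vw,\vz)$, as claimed.

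There is essentially no hard step; the only point requiring a little care is that the inner infimum and supremum need not be attained and may in principle take the values $\pm\infty$, so the argument should be read in the extended reals, where the defining inequalities for $\inf$/$\sup$ and their monotonicity under pointwise domination all remain valid. It is worth emphasizing that only the ``easy'' direction of minimax is being asserted here: the reverse inequality (equality of the two values) would require additional hypotheses such as convexity--concavity of $F$ together with compactness in the spirit of Sion's theorem, but none of that is needed for the stated bound, which is why it holds at this level of generality.
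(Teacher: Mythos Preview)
Your argument is correct and is the standard elementary proof of the weak duality (max--min) inequality. Note, however, that the paper does not supply its own proof of this statement: it is recorded as a cited fact from \cite[Lemma 36.1]{rockafellar1970convex} and used as a black box, so there is no in-paper proof to compare against.
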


\begin{fact}\cite[Theorem 4.3]{amelunxen2014living}\label{bound of delta}
	Let $f$ be a norm on $\R^n$, and fix a non-zero point $\vx\in\R^n$. Then
	\begin{align*}
	0\leq &\bigg(\min_{t\geq 0}\eta^2(t\cdot\partial f(\vx))\bigg)-\delta(\TT_f)\leq\frac{2\max\{\|\va\|_2:\va\in\partial f(\vx)\}}{f(\vx/\|\vx\|_2)}.
	\end{align*}	
\end{fact}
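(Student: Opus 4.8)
The plan is to reduce the statement to a comparison between the Gaussian squared distance to the scaled subdifferentials and the statistical dimension of $\TT_f$, after re-expressing the latter as a Gaussian squared distance to the normal cone. Since $f$ is a norm and $\vx\neq\vzero$, the set $\partial f(\vx)$ is nonempty, compact, convex, and does not contain the origin, so $\NN_f:=\TT_f^\circ=\cone(\partial f(\vx))$. Applying Fact~\ref{omega & dist} with $\CC=\TT_f$ (a nonempty convex cone, so the defining supremum is nonnegative) gives $\delta(\TT_f)=\E\big(\sup_{\vz\in\TT_f\cap\B_2^n}\ip{\vg}{\vz}\big)^2=\E\dist^2(\vg,\NN_f)=\eta^2(\NN_f)$, where $\vg\sim\NN(0,\mI_n)$. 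This identity is the bridge between the two quantities appearing in the statement.

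The lower bound is then immediate from a set inclusion. For every $t\ge 0$ we have $t\cdot\partial f(\vx)\subseteq\cone(\partial f(\vx))=\NN_f$, hence $\dist(\vg,t\cdot\partial f(\vx))\ge\dist(\vg,\NN_f)$ pointwise, so $\eta^2(t\cdot\partial f(\vx))\ge\eta^2(\NN_f)=\delta(\TT_f)$ for all $t$. Taking the minimum over $t\ge0$ yields $\min_{t\ge0}\eta^2(t\cdot\partial f(\vx))-\delta(\TT_f)\ge0$.

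The substantive part is the upper bound, and the structural observation I would exploit is that, because $f$ is a norm, every subgradient satisfies $\ip{\va}{\vx}=f(\vx)$; hence, with $\hat\vx:=\vx/\|\vx\|_2$ and $\rho:=f(\hat\vx)$, one has $\ip{\va}{\hat\vx}=\rho$ for all $\va\in\partial f(\vx)$. Thus $\partial f(\vx)$ lies in the hyperplane $\{\va:\ip{\va}{\hat\vx}=\rho\}$, and consequently $t\cdot\partial f(\vx)$ is exactly the cross-section $\NN_f\cap\{\vu:\ip{\vu}{\hat\vx}=t\rho\}$ of the normal cone at ``height'' $t\rho$. Writing the cone projection as $\Pi_{\NN_f}(\vg)=\tau(\vg)\cdot\va^\star(\vg)$ with $\va^\star(\vg)\in\partial f(\vx)$ and $\tau(\vg)\ge0$, this height identity gives the explicit radial scaling $\tau(\vg)=\ip{\Pi_{\NN_f}(\vg)}{\hat\vx}/\rho$. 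For a fixed deterministic $t$, testing $\dist(\vg,t\cdot\partial f(\vx))$ against the feasible point $t\cdot\va^\star(\vg)$ and using the orthogonality $\ip{\vg-\Pi_{\NN_f}(\vg)}{\va^\star(\vg)}=0$ of the cone projection, the first-order term cancels, leaving the pointwise second-order estimate $\dist^2(\vg,t\cdot\partial f(\vx))-\dist^2(\vg,\NN_f)\le(\tau(\vg)-t)^2\,\|\va^\star(\vg)\|_2^2\le R^2(\tau(\vg)-t)^2$, where $R:=\max\{\|\va\|_2:\va\in\partial f(\vx)\}$.

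It then remains to choose $t$ and to control the fluctuation of $\tau(\vg)$. Since $\vg\mapsto\Pi_{\NN_f}(\vg)$ is $1$-Lipschitz and $\hat\vx$ is a unit vector, the map $\tau(\vg)=\ip{\Pi_{\NN_f}(\vg)}{\hat\vx}/\rho$ is $(1/\rho)$-Lipschitz; taking $t=\E\tau(\vg)$ and invoking the variance bound for Gaussian Lipschitz functions (Fact~\ref{Variance of Lipschitz}) controls $\E(\tau(\vg)-t)^2=\mathrm{Var}(\tau(\vg))\le 1/\rho^2$. Inserting this into the pointwise estimate, taking expectations, minimizing over $t$, and using $\delta(\TT_f)=\eta^2(\NN_f)$ bounds the gap by a quantity controlled by $R/\rho$; a more careful bookkeeping of the radial magnitude $\|\va^\star(\vg)\|_2$ against $\rho$ (exploiting $\|\va^\star(\vg)\|_2\ge\ip{\va^\star(\vg)}{\hat\vx}=\rho$, rather than the crude $\|\va^\star(\vg)\|_2\le R$) recovers the exact constant in the stated bound $2R/\rho=2\max\{\|\va\|_2:\va\in\partial f(\vx)\}/f(\vx/\|\vx\|_2)$. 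I expect the main obstacle to be precisely this last step: the cancellation of the first-order term makes the gap genuinely second order, and converting that second-order estimate into the clean constant $2R/\rho$ requires handling the joint fluctuation of the radial scaling $\tau(\vg)$ and the magnitude $\|\va^\star(\vg)\|_2$, which is the delicate point of the argument.
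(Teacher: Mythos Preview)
The paper does not prove this statement: it is recorded as a Fact in Appendix~\ref{auxiliaryresults} and attributed to \cite[Theorem~4.3]{amelunxen2014living} with no argument given, so there is no in-paper proof to compare your proposal against.

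On its own merits, your outline is structurally sound and follows the strategy of the cited reference: identify $\delta(\TT_f)=\eta^2(\NN_f)$ via Fact~\ref{omega & dist}, obtain the lower bound from the inclusion $t\cdot\partial f(\vx)\subseteq\NN_f$, and for the upper bound write $\Pi_{\NN_f}(\vg)=\tau(\vg)\,\va^\star(\vg)$ with $\va^\star(\vg)\in\partial f(\vx)$, use the cone-projection orthogonality $\ip{\vg-\Pi_{\NN_f}(\vg)}{\Pi_{\NN_f}(\vg)}=0$ to kill the cross term, and control the fluctuation of the $(1/\rho)$-Lipschitz height $\tau(\vg)=\ip{\Pi_{\NN_f}(\vg)}{\hat\vx}/\rho$ through Fact~\ref{Variance of Lipschitz}.

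The one place your proposal stops short is exactly where you flag it. The crude combination $\|\va^\star(\vg)\|_2^2\le R^2$ with $\mathrm{Var}(\tau)\le 1/\rho^2$ gives only $R^2/\rho^2$, and since $R\ge\rho$ this does \emph{not} imply the stated bound $2R/\rho$ in general. Your final paragraph asserts that ``more careful bookkeeping'' using $\|\va^\star(\vg)\|_2\ge\rho$ recovers the exact constant, but no mechanism is given, and replacing an upper bound $R$ by a lower bound $\rho$ does not tighten an upper estimate. So as written this is a correct outline with a genuine gap at the constant, which you have honestly identified but not closed; to finish, you would need to track the $1$-Lipschitz quantity $\|\Pi_{\NN_f}(\vg)\|_2=\tau(\vg)\|\va^\star(\vg)\|_2$ jointly rather than bounding $\tau$ and $\|\va^\star\|$ separately.
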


\bibliographystyle{IEEEtran}
\bibliography{IEEEabrv,myref}
\end{document}